\DeclareMathOperator{\operatorClassNP}{{\sf NP}}
\newcommand{\classNP}{\ensuremath{\operatorClassNP}\xspace}
\DeclareMathOperator{\operatorClassCoNP}{{\sf coNP}}
\newcommand{\classCoNP}{\ensuremath{\operatorClassCoNP}\xspace}
\DeclareMathOperator{\operatorClassFPT}{{\sf FPT}\xspace}
\newcommand{\classFPT}{\ensuremath{\operatorClassFPT}\xspace}
\DeclareMathOperator{\operatorClassW}{{\sf W}}
\newcommand{\classW}[1]{\ensuremath{\operatorClassW[#1]}\xspace}
\DeclareMathOperator{\operatorClassXP}{{\sf X}P\xspace}
\newcommand{\classXP}{\ensuremath{\operatorClassXP}\xspace}
 \DeclareMathOperator{\operatorClassPSPACE}{{\sf PSPACE}\xspace}
\newcommand{\classPSPACE}{\ensuremath{\operatorClassPSPACE}\xspace}
\newtheorem{theorem}{Theorem}
\newtheorem{lemma}{Lemma}
\newtheorem{definition}{Definition}
\newtheorem{observation}{Observation}
\newtheorem{proposition}{Proposition}
\newcommand{\yes}{{yes}}
\newcommand{\Oh}{\mathcal{O}}
\newcommand{\ed}{{\sf ed}}
\newcommand{\edphi}[1]{{\sf ed}_{\varphi}^{#1}}
\newcommand{\edphione}{\text{\sf conn}}
\newcommand{\edphitwo}{\text{\sf prop}}
\newcommand{\edphthree}{\text{\sf depth}}
\newcommand{\edphfour}{\text{\sf part}}
\newcommand{\property}{{\cal P}}
\newcommand{\dpt}{{\sf depth}}
\newcommand{\comp}{{\sf comp}}
\newcommand{\pname}{\textsc}
\newcommand{\ProblemFormat}[1]{\pname{#1}}
\newcommand{\ProblemIndex}[1]{\index{problem!\ProblemFormat{#1}}}
\newcommand{\ProblemName}[1]{\ProblemFormat{#1}\ProblemIndex{#1}{}\xspace}
 \newcommand{\probMC}{\ProblemName{Model Checking}}
 \newcommand{\probED}[1]{\ProblemName{Elimination Distance--($#1$) to $\varphi$}}
 \newcommand{\probDEL}{\ProblemName{Deletion to $\varphi$}}
\newlength{\RoundedBoxWidth}
\newsavebox{\GrayRoundedBox}
\newenvironment{GrayBox}[1]%
   {\setlength{\RoundedBoxWidth}{.93\textwidth}
    \def\boxheading{#1}
    \begin{lrbox}{\GrayRoundedBox}
       \begin{minipage}{\RoundedBoxWidth}}%
   {   \end{minipage}
    \end{lrbox}
    \begin{center}
    \begin{tikzpicture}%
       \node(Text)[draw=black!20,fill=white,rounded corners,%
             inner sep=2ex,text width=\RoundedBoxWidth]%
             {\usebox{\GrayRoundedBox}};
        \coordinate(x) at (current bounding box.north west);
        \node [draw=white,rectangle,inner sep=3pt,anchor=north west,fill=white] 
        at ($(x)+(6pt,.75em)$) {\boxheading};
    \end{tikzpicture}
    \end{center}}     
\newenvironment{defproblemx}[2][]{\noindent\ignorespaces%
                                \FrameSep=6pt%
                                \parindent=0pt%
                \vspace*{-1.5em}
                \ifthenelse{\isempty{#1}}{%
                  \begin{GrayBox}{\textsc{#2}}%
                }{%
                  \begin{GrayBox}{\textsc{#2}  parameterized by~{#1}}%
                }
                \begin{tabular*}{\textwidth}{@{\hspace{.1em}} >{\itshape} p{1.8cm} p{0.8\textwidth} @{}}%
            }{
                \end{tabular*}%
                \end{GrayBox}%
                \ignorespacesafterend
            }  
\newcommand{\defparproblema}[4]{
  \begin{defproblemx}[#3]{#1}
    Input:  & #2 \\
    Task: & #4
  \end{defproblemx}
}%
\title{Parameterized Complexity of  Elimination Distance to First-Order Logic Properties
\thanks{An extended abstract of this paper was accepted for
36th Annual Symposium on Logic in Computer Science (LICS 2021). 
The two first authors have been supported by the Research Council of Norway via the project ``MULTIVAL". The third author has been supported  by   the ANR projects DEMOGRAPH (ANR-16-CE40-0028) and ESIGMA (ANR-17-CE23-0010) and the French-German Collaboration ANR/DFG Project UTMA ANR-20-CE92-0027.}
}
\author{
Fedor V. Fomin\thanks{
Department of Informatics, University of Bergen, Norway, \textsf{\{fedor.fomin, petr.golovach\}@uib.no}} \addtocounter{footnote}{-1}
\and
Petr A. Golovach\footnotemark{}
\and 
Dimitrios M. Thilikos\thanks{AlGCo project team, CNRS, LIRMM, Université de Montpellier, Montpellier, France, \textsf{sedthilk@thilikos.info}}}
\date{}
\begin{document}
\maketitle

\begin{abstract}
\noindent The \emph{elimination distance} to some target graph property ${\cal P}$  is a general graph modification parameter introduced by Bulian and Dawar.  We initiate the study of elimination distances to graph properties expressible in first-order logic. We delimit the problem's fixed-parameter tractability by identifying sufficient and necessary conditions on the structure of prefixes of first-order logic formulas. Our main result is the following meta-theorem: For every graph property ${\cal P}$  expressible by a first order-logic formula $\varphi\in \Sigma_3$, that is,  of the form 
\[\varphi=\exists x_1\exists x_2\cdots \exists x_r\ \ \forall y_{1}\forall y_{2}\cdots \forall y_{s}\ \ \exists z_1\exists z_2\cdots \exists z_t~~ \psi,\]
where $\psi$ is a quantifier-free first-order formula, 
checking whether the elimination 
distance of a graph to ${\cal P}$ does not exceed $k$, is \emph{fixed-parameter tractable} parameterized by $k$.
Properties of graphs expressible by formulas from $\Sigma_3$ include being  of bounded degree, excluding a forbidden subgraph, or containing a bounded dominating set. 
We complement this theorem by showing that 
such a general statement does not hold 
 for  formulas with even slightly more expressive prefix structure: There are  formulas $\varphi\in \Pi_3$, for which computing elimination distance is ${\sf W}[2]$-hard. 
 \end{abstract}

\noindent{\bf Keywords: }{First-order logic, elimination distance, parameterized complexity, descriptive complexity}

\section{Introduction}\label{sec_intro} 
One of the successful concepts in parameterized complexity is the ``distance from triviality'' \cite{GuoHN04}. Roughly speaking, a parameter can measure the ``distance'' of the given instance from an instance that is solvable efficiently and then exploit such a distance algorithmically. In graph problems, a standard measure of distance from triviality is the {\sl vertex deletion distance}   to some  specific graph property $\property$. That is, the minimum number of vertices whose deletion  results in a graph in $\property$. 
 An interesting alternative to vertex deletion distance, called 
 {\sl elimination distance} was introduced by Bulian and Dawar \cite{BulianD16} in their study of the parameterized complexity of the graph isomorphism problem.
 The  \emph{elimination distance} of a graph $G$ to  graph property $\property$ is\begin{equation*}
\ed_{\property}(G)=
\begin{cases}
0,&\mbox{if }G\in\property,\\
1+\min_{v\in V(G)}\ed_{\property}(G-v), &\mbox{if }G\notin\property\text{ and }G\text{ is connected},\\
\max\{\ed_{\property}(C)\mid C\text{ is a component of }G\},&\mbox{otherwise}.
\end{cases}
\end{equation*}
Arguably, elimination distance can be seen as a non-deterministic version of vertex deletion distance, 
where the source of non-determinism is connectivity: each vertex removal creates connected components and the elimination should be applied to each one of them as an independent  vertex deletion scenario. In the most simple case where $\property$ is the property of being edgeless, vertex deletion distance to $\property$ generates {\sl vertex cover}, while the elimination distance to $\property$ generates  {\sl tree-depth}  \cite{NesetrilO0806b}. 

In their follow-up work, Bulian and Dawar \cite{BulianD17} proved that deciding whether
 a given $n$-vertex graph has elimination distance at most $k$ to any minor-closed property of graphs can be done by an algorithm running in time $f(k)\cdot n^{\Oh(1)}$ (that is an {\sf FPT}-algorithm), and thus is fixed-parameter tractable parameterized by $k$. 
 In the same paper, Bulian and Dawar \cite{BulianD17} asked whether computing the elimination distance to graphs of bounded degree is fixed-parameter tractable.

\paragraph{The problem.}
The question of Bulian and Dawar
  is the departure point of our study. 
Every graph property  
characterized by a finite set of forbidden induced 
 subgraphs (and thus the bounded degree property as well) is first-order logic definable (in short, FOL-definable), i.e., there is a FOL formula $\varphi$ 
where $\property=\{G\mid G\models \varphi\}$. It is well-known that \probMC for a FOL formula $\varphi$, that is deciding whether $G\models\varphi$, can be done in time $n^{\Oh(|\varphi|)}$. It is also 
easy to design an algorithm that, in time $n^{\Oh(k)}\cdot n^{\Oh(|\varphi|)}$,  decides whether the elimination distance to a property expressible by a FOL formula $\varphi$ is at most $k$. Thus, for every FOL 
formula $\varphi$, the problem asking, given as input a graph $G$ and a non-negative integer $k$, whether the elimination distance from $G$ to $\property_{\varphi}:=\{G\mid G\models \varphi\}$ is $k$ is in the parameterized complexity class {\sf XP} (when parameterized by $k$). This brings us to the following question. \\

 \begin{tcolorbox}[colback=green!5!white,colframe=blue!40!black] 
What is the parameterized complexity of computing the elimination distance to FOL-definable properties?
 \end{tcolorbox}\medskip

Notice that the above general question could also be made for higher order logic-definable properties.
In this direction, one may observe that there are formulas $\varphi$ in  existential second-order logic (ESOL) 
for which  \probMC  for  $\varphi$  is already intractable:  such  ESOL-definable 
problems  are  {\sc  Hamiltonian Cycle}
or {\sc 3-Coloring} that are {\sf NP}-complete. This means that for the corresponding ESOL formulas $\varphi$ 
the problem of checking whether $\ed_{\property_{\varphi}}(G)\leq k$, parameterized by $k$, is {\sf para-NP}-complete. Motivated by this, we delimit our study to the framework of 
first-order logic 
where our parameterized problem is in {\sf XP} for {\sl every} FOL-formula. This permits us to 
set up the problem that we consider in this paper, that is to completely determine the {\sl prefix classes} 
of FOL that  demark the  parametric-tractability borders of elimination distance to FOL-definable properties (that is {\sf FPT} versus {\sf W}-hardness).

The above question has been inspired by the study of  Gottlob, Kolaitis, Schwentick in \cite{GottlobKS04}
who provided an analogous dichotomy result  ({\sf P} versus {\sf NP}-complete)
for   ESOL-formulas, in several graph-theoretic contexts. They 
identified  the set $\frak{F}$ of prefix classes of  ESOL such that,  
if $\varphi\in \frak{F}$, then   \probMC for  $\varphi$ is polynomially solvable, 
while every prefix class not in $\frak{F}$ contains some ESOL formula $\varphi$ where \probMC for  $\varphi$ is {\sf NP}-complete.

 \medskip\noindent\textbf{Our results.}
 We identify sufficient and necessary conditions on the structure of   prefixes of  first-order logic formulas demarcating  tractability borders for computing the elimination distance.

 Our main algorithmic contribution is the proof that computing the elimination distance to any graph property defined by a formula from  $\Sigma_3$ is fixed-parameter tractable.  We formally define {prefix} classes $\Pi_i$ and $\Sigma_i$ in the next section.  For the purpose  of this introduction, it is sufficient to know that 
every formula in $\varphi\in \Sigma_3$ can be written in the form 
\[\varphi=\exists x_1\exists x_2\cdots \exists x_r\ \ \forall y_{1}\forall y_{2}\cdots \forall y_{s}\ \ \exists z_1\exists z_2\cdots \exists z_t~~ \psi,\]
where $\psi$ is a  quantifier-free FOL-formula and $r,s,t$ are non-negative integers.

 Every graph property characterized by a finite set of forbidden subgraphs  can be expressed  by  $\varphi \in \Sigma_3$. Actually, for this partiuclar purpose, we can consider  even more restricted formulas   $\varphi \in \Pi_1\subset \Sigma_3$ with only  $\forall$ quantifications over variables. 
The property that the diameter of a graph is at most two cannot be expressed by using forbidden subgraphs but can easily be written as an  FOL formula from $\Pi_2\subset \Sigma_3$: $\forall u\forall v\exists w [(u=v)\vee (u\sim v)\vee ((u\sim w)\wedge(v\sim w))]$. Another interesting example of a property expressible in $\Sigma_3$ is the property of containing a universal vertex, and, more generally, having an $r$-dominating set of size at most $d$ for constants $r$ and $d$.   Having a twin-pair, that is a pair of vertices with equal neighborhoods,  is also the property expressible in $\Sigma_3$. 
 
\begin{theorem}[\textbf{Informal}]\label{thm_algorithm_informal}
For every  $\varphi\in\Sigma_3$, $n$-vertex graph $G$, and $k\geq 0$, deciding whether the elimination distance from $G$ to property $\property_\varphi$, is at most $k$, can be done in time $f(k)\cdot n^{\Oh(|\varphi|)}$ for some function $f$ of $k$ only. 
\end{theorem}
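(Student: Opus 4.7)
The plan is to design a recursive branching algorithm whose analysis yields the claimed FPT bound. Given an input $(G, k)$, the procedure works as follows. First, verify whether $G \models \varphi$ using the standard model-checking algorithm in time $n^{O(|\varphi|)}$; if so, return yes. If $G$ is disconnected, recurse on each connected component with budget $k$, returning yes iff all succeed. Otherwise $G$ is connected and $G \not\models \varphi$: I construct a \emph{candidate set} $S \subseteq V(G)$ whose size is bounded by a function $g(k, |\varphi|)$, with the guarantee that whenever $\ed_{\property_\varphi}(G) \leq k$, some vertex of $S$ can be chosen as the root of an optimal elimination tree. The algorithm then branches over $S$: for each $v \in S$, recursively solve $(G - v, k - 1)$, returning yes iff some branch succeeds. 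The recursion has depth $k$ and branching factor $g(k, |\varphi|)$, giving total time $g(k, |\varphi|)^k \cdot n^{O(|\varphi|)}$.

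The crux is constructing the set $S$ and proving its correctness. Here I exploit the $\Sigma_3$ prefix $\varphi = \exists \vec{x}\, \forall \vec{y}\, \exists \vec{z}\, \psi$. Since $G \not\models \varphi$, for every $r$-tuple $\vec{a} \in V(G)^r$ there is an $s$-tuple $\vec{b}(\vec{a})$ certifying failure, in the sense that no $t$-tuple $\vec{z}$ in $G$ satisfies $\psi(\vec{a}, \vec{b}(\vec{a}), \vec{z})$. If $\ed_{\property_\varphi}(G) \leq k$ via an elimination tree $T$, then every leaf-component $L$ satisfies $\varphi$, hence admits witnesses $\vec{a}_L \in V(L)^r$. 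Viewed inside $G$, this same $\vec{a}_L$ fails $\varphi$, so either the failing tuple $\vec{b}(\vec{a}_L)$ intersects the root-to-$L$ path of $T$ (and is therefore deleted), or some $\vec{z}$-witnesses available in $L$ were spoiled only by ambient vertices outside $L$. A case analysis of these configurations, combined with the fact that each root-to-leaf path in $T$ has length at most $k$, lets me localize the possible roots of $T$ to a set $S$ whose size depends only on $k$, $r$, $s$, $t$ and the syntactic structure of $\psi$.

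The hard part is turning this intuition into a formal construction of $S$ with a provable size bound, and proving its soundness. The main subtlety is that failure of $\varphi$ is defined against $G$, while satisfaction is required inside a deleted subgraph; the innermost existential witnesses can appear or vanish as vertices are removed, and this interaction between the $\exists \vec{z}$ layer and the elimination process is the heart of the difficulty. To handle it, I expect to classify tuples $(\vec{a}, \vec{b})$ by their quantifier-free \emph{type} with respect to $\psi$, of which there are only boundedly many, and to use a marking and pigeonhole argument over these types to restrict candidate roots. Once the candidate-set lemma is established, the rest of the algorithm and its $f(k)\cdot n^{O(|\varphi|)}$ running time follow by a straightforward induction on $k$, with the outer loop over connected components and budget decrement ensuring termination.
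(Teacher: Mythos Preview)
Your plan has a genuine gap at its central step: the construction of a candidate set $S$ of size $g(k,|\varphi|)$ that is guaranteed to contain a root of some optimal elimination tree. You sketch an intuition based on failing witnesses $\vec{b}(\vec{a}_L)$, but this intuition does not localize the root. Concretely, take $\varphi$ expressing ``edgeless'' (a $\Pi_1$ formula, hence in $\Sigma_3$ after padding) and let $G$ be the path $v_1\cdots v_n$. Here $\edphi{\edphione}(G)$ is the tree-depth $\Theta(\log n)$, and the root of any optimal elimination tree must be the middle vertex. Yet the only local data your argument extracts are edges (the failing $\vec{b}$-tuples), and by the symmetry of a path no bounded collection of such witnesses singles out the middle. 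More generally, your case split ``$\vec{b}(\vec{a}_L)$ meets the root-to-$L$ path, or $\vec{z}$-witnesses are spoiled'' is incomplete: since $\psi$ is quantifier-free, a $\vec{z}$-witness inside $L$ is never spoiled, so you are always in the first case --- but ``meets the root-to-$L$ path'' is false in general, as the offending $u_j$ may lie anywhere in $X$ or even in a different component of $G-X$. The pigeonhole-on-types idea does not rescue this, because the number of leaf components $L$ is unbounded and their witnesses $\vec{a}_L$ are not known to the algorithm.

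The paper's proof takes a completely different route that avoids having to find a root directly. It first shows (Lemma~\ref{lem_expr}) that the property $\edphi{\star}(G)\le k$ is MSOL-expressible for each fixed $k$, and then invokes the meta-theorem of Lokshtanov, Ramanujan, Saurabh, and Zehavi (Theorem~\ref{thm_meta}) to reduce the problem to $(p,k)$-unbreakable graphs. Unbreakability is the missing structural ingredient: on such graphs any elimination set $X$ of depth at most $k-1$ has size at most $p+k$, and $G-X$ has exactly one big component $C$ with $|V(G)\setminus N_G[V(C)]|\le p$ (Lemma~\ref{lem_es_bound}). Only then does a branching argument in the spirit of yours become sound: after guessing $\mathbf{v}$ for the outer existentials and using random separation to colour $N_G(V(C))$ versus its complement, the subroutine \textsc{FindC} branches on which coordinate $u_j$ of a failing $\forall$-tuple lies outside the current candidate for $C$, shrinking toward the big component in at most $s^k$ leaves. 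The branching succeeds precisely because unbreakability forces a single target component with a separator of size $\le k$; without that reduction, your direct scheme cannot get off the ground.
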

Our second theorem shows that the assumptions on the prefix of the formula   are necessary. Let  $\Pi_3$ be the class of   first-order logic formulas of the form 
\[\varphi=\forall x_1\forall x_2\cdots\forall x_r\ \  \exists y_{1} \exists  y_{2}\cdots \exists  y_{s}\ \ \forall z_1\forall z_2\cdots \forall z_t~~ \psi,\]
where $\psi$ is a  FOL-formula without quantifiers and $s,t,q$ are non-negative integers.
We show that 
 
 \begin{theorem}[\textbf{Informal}]\label{thm_W_hard_informal}
There is $\varphi\in\Pi_3$ such that deciding whether the elimination distance to $\property_\varphi$ is at most $k$,  is  \classW{2}-hard parameterized by $k$.
\end{theorem}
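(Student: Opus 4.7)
My plan is to establish \classW{2}-hardness via a parameterized reduction from \textsc{Dominating Set}, a canonical \classW{2}-hard problem when parameterized by the solution size. The aim is to exhibit a fixed formula $\varphi\in \Pi_3$ of the generic form $\varphi=\forall x\,\exists y\,\forall z\ \psi(x,y,z)$ with $\psi$ quantifier-free, together with a polynomial-time reduction sending an instance $(G,k)$ of \textsc{Dominating Set} to an instance $(G',k')$ of the elimination-distance problem, where $k'$ depends only on $k$, such that $\ed_{\property_\varphi}(G')\le k'$ if and only if $G$ admits a dominating set of size at most $k$.

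The design of $\varphi$ would encode a ``certification'' obligation: every vertex $x$ must possess a ``certifier'' $y$ whose local neighborhood---inspected by the innermost $\forall z$ block---matches a rigid pattern prescribed by $\psi$. This is precisely the kind of nested $\exists\forall$ check that the outer universal quantifier of $\Pi_3$ imposes on \emph{every} vertex of the graph, and that is unavailable under the $\Sigma_3$ prefix dealt with by Theorem~\ref{thm_algorithm_informal}. The graph $G'$ would be obtained from $G$ by attaching to each $v\in V(G)$ a bespoke gadget so that (i) the only way to make every connected component of the residual graph satisfy $\varphi$ is to eliminate a set of vertices whose projection onto $V(G)$ dominates $G$, and (ii) any dominating set $D$ of $G$ of size $k$ yields a feasible elimination sequence of depth $k'$---first eliminate $D$, then apply a bounded amount of auxiliary cleanup inside each gadget component.

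The main obstacle lies in the soundness direction. An elimination tree of depth $\le k'$ may branch and delete different vertices along different root-to-leaf paths, so one cannot directly read a single deletion set out of it. The gadgets must therefore be engineered so that any valid elimination strategy is forced to use, near the root of the tree, a common set of vertices whose projection onto $V(G)$ dominates $G$, no matter how the elimination branches deeper. Concretely, I would aim for $\varphi$ and the gadgets to guarantee that failing to eliminate some dominating set near the root leaves an obstruction to $\varphi$ which survives into every deep leaf component, so that every elimination sequence of depth $\le k'$ witnesses a genuine dominating set of $G$ of size $\le k$, which will then complete the reduction and yield \classW{2}-hardness.
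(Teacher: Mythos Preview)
Your outline correctly identifies both the right reduction target (a \classW{2}-complete domination/covering problem) and the central obstacle (an elimination tree may branch, so an elimination set of depth $\le k'$ need not be a deletion set of size $\le k'$). However, the proposal stops precisely at that obstacle: you state that the gadgets ``must be engineered'' so that every valid elimination strategy is forced to commit to a common dominating set near the root, but you give no mechanism that actually enforces this. Without such a mechanism the soundness direction is genuinely open, and this is the whole content of the hardness proof.

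The paper resolves this difficulty with one concrete structural idea that your proposal is missing: it makes the vertices that encode the universe of the \textsc{Set Cover} instance into a large clique (with $k{+}2$ copies of each element). Because a clique has no separators, any elimination set $X$ of depth $\le k-1$ can contain at most $k$ clique vertices, and all of $X$'s clique vertices must lie on a single root-to-leaf path of the representing tree; moreover, all remaining clique vertices lie in a single component $H$ of $G-X$, so only the at most $k$ vertices of $N_G(V(H))$ are relevant. This collapses the elimination problem back to a vertex-\emph{deletion} problem of budget $k$ on the relevant part of the graph, after which a simple $\Pi_3$ formula (``every vertex has a vertex of degree $\le 1$ within distance two'') does the job: deleting the pendant $w_j$ of a set gadget creates the required degree-$1$ certifier for every element that set covers. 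The formula is also closed under disjoint union, so the three variants $\edphi{\edphione},\edphi{\edphitwo},\edphi{\edphthree}$ coincide and a single reduction handles all of them. Your plan would need an analogous device---some highly connected core that any bounded-depth elimination cannot shatter---before the soundness argument can go through.
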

 
 \paragraph{Variants of elimination distance.}
 The main reason why we give  informal statements of our theorems in the introduction is due to the following issue.
 The definition of elimination distance is tailored to the graph properties $\property$ with the condition that $G\in \property$ if and only if $C\in \property $ for every component $C$ of $G$.  
 Graph properties defined by FOL do not necessarily satisfy such a condition. This leads to ambiguities. As an example, consider  graph property $\property=\{G\mid G\models\forall x\forall y~x=y \}$. Thus $G\in \property$ if and only if $G$ is  a single-vertex graph. Let $G$ be an edgeless graph with $n\geq 2$ vertices. 
 Since  $G\notin \property$ it would be a natural assumption that the elimination distance from $G$ to $\property$ is positive. However, it is not: 
  every connected component of $G$ is in $\property$  and, therefore, \[\ed_{\property}(G)=\max\{\ed_{\property}(C)\mid C\text{ is a component of }G\}=0.\] To avoid such ambiguities, we refine the definition of the elimination distance. 

Since we consider   graph properties $\property_\varphi=\{G\mid G\models \varphi\}$ for   formulas $\varphi$, we define the distances with respect to formulas. Notice that the notion of elimination distance combines ``connectivity'' and ``inclusion'' in a graph class. Depending on which of these two properties we want to 
prioritize, we give different definitions.  Let $\varphi$ be a FOL formula. 

\begin{definition}[Elimination distances $\edphi{\edphione}$ and $\edphi{\edphitwo}$]
The first definition prioritize on connectivity. For a graph $G$, we set 
\begin{equation*}
\edphi{\edphione}(G)=
\begin{cases}
0,& \mbox{if }G\models\varphi,\\
1+\min_{v\in V(G)}\edphi{\edphione}(G-v), &\mbox{otherwise},
\end{cases}
\end{equation*}
if $G$ is   connected. We set  
\begin{equation*}
\edphi{\edphione}(G)=\max\{\edphi{\edphione}(C)\mid C\text{ is a component of }G\}
\end{equation*}
when  $G$ is not connected. The second definition prioritize   on the graph property  
\begin{equation*}
\edphi{\edphitwo}(G)=
\begin{cases}
0,&\mbox{if }G\models\varphi,\\
1+\min_{v\in V(G)}\edphi{\edphitwo}(G-v), &\mbox{if }G\not\models\varphi\text{ and }G\text{ is connected},\\
\max\{1,\max\{\edphi{\edphitwo}(C)\mid C\text{ is a component of }G\}\},&\mbox{otherwise}.
\end{cases}
\end{equation*}
We assume that $\edphi{\edphione}(G)=\edphi{\edphitwo}(G)=0$ if $G=(\emptyset,\emptyset)$ for any formula $\varphi$. 
\end{definition}
If 
$\varphi$ is such that $G\in \property_{\varphi}$ if and only if $C\in \property_{\varphi}$ for every component $C$ of $G$, then $\ed_{\property_{\varphi}}(G)=\edphi{\edphione}(G)=\edphi{\edphitwo}(G)$. However, in general
  $\edphi{\edphione}(G)$ and $\edphi{\edphitwo}(G)$ may differ significantly.
Consider $\varphi=\exists u\exists v~\neg(u=v)\wedge \neg(u\sim v)$ that defines the property that a graph has two nonadjacent vertices. Let $G$ be the disjoint union of the complete $n$-vertex graph $K_n$ and an isolated vertex. Then $G\models\varphi$ and, therefore, $\edphi{\edphitwo}(G)=0$. On the other hand,  $G$ is not connected and it is easy to see that $\edphi{\edphione}(K_n)=n$.

  Given the above, a more precise statement of Theorems~\ref{thm_algorithm_informal} and \ref{thm_W_hard_informal}  is that they hold for both distances  $\edphi{\edphione} $ and $\edphi{\edphitwo}$. We also define another interesting type of elimination distance $\edphi{\edphthree}$ that focuses on the depth of the elimination set. We prove Theorems~\ref{thm_algorithm_informal} and \ref{thm_W_hard_informal} hold for that distance as well. Precise statements of both theorems and their proofs are given in Sections~\ref{sec_FPT} and \ref{sec_hard}.

 \medskip\noindent\textbf{Related work.} 
 Results of this work fit into two popular trends in logic and parameterized complexity. A 
significant amount of research in descriptive complexity is devoted to the study of prefix classes of certain logics. We refer to   the book of 
 B{\"o}rger,  Gr{\"a}del,   and Gurevich  
 \cite{borger2001classical},  as well as the aforementioned 
 work of 
  Gottlob, Kolaitis and Schwentick~\cite{DBLP:journals/jacm/GottlobKS04} for further references.
 The study of graph modification problems is a well-established trend in parameterized complexity. The books~\cite{CyganFKLMPPS15,DowneyF13,FlumG06,Niedermeierbook06} provide a comprehensive overview of the area.
 In particular, Fomin, Golovach, and Thilikos   \cite{FominGT20} 
 studied parameterized complexity of computing vertex deletion distance and edge editing to graph properties defined by first-order logic formulas; \cite[Theorem~1]{FominGT20}  establishes fixed-parameter tractability for  vertex removal to a graph property $\property_\varphi$ for $\varphi \in \Sigma_3$ and shows that the problem is \classW{2}-hard for some 
 $\varphi\in\Pi_3$. While our Theorem~\ref{thm_algorithm_informal} reaches the same tractability border for the elimination distance, the proof is significantly more complicated.

  The general question on the parameterized complexity of elimination distance to graph properties was settled by  Bulian and Dawar~\cite{BulianD16,BulianD17}. Properties 
  that have been considered so far are {\sl minor-free graph classes} \cite{BulianD17}, {\sl cluster graphs}  \cite{SridharanPASK21,agrawal_et_al:LIPIcs:2020:13304}, {\sl bounded degree graphs} \cite{BulianD16,DLindermayrSV20},  and $H$-free graphs~\cite{SridharanPASK21}. Moreover,   
 Hols et al. \cite{hols_et_al:LIPIcs:2020:11897} studied the existence of polynomial kernels for the {\sc Vertex Cover} problem parameterized by the size of a deletion set to graphs of bounded elimination distance to certain graph classes.
 Lindermayr, Siebertz, and  Vigny~\cite{DLindermayrSV20} proved that computing the elimination distance to graphs of bounded degree is fixed-parameter tractable when the input does not contain $K_5$ as a minor. 
 While preparing our paper, we have learned about the very recent work of   Agrawal et al.~\cite{SridharanPASK21}.
 Agrawal et al. established fixed-parameter tractability of  
 computing an elimination distance to any graph property characterized
by a finite set of graphs as forbidden induced subgraphs. Since graphs of bounded vertex degree can be characterized by a finite set of forbidden induced subgraphs, the work of  Agrawal et al.  answers the question of Bulian and Dawar~\cite{ BulianD17}
about the elimination distance to graphs of bounded degree.    
 
 Comparing  with the result of  Agrawal et al.~\cite{SridharanPASK21}, our Theorem~\ref{thm_algorithm_informal} is  more general. First, it provides the tractability of the elimination ordering to a strictly larger family of graph properties. Every graph property described by a finite set of forbidden induced subgraphs is also definable by a formula from $\Sigma_3$. However, properties like having a universal vertex or bounded diameter, which are expressible in $\Sigma_3$, cannot be described by forbidden subgraphs. Second, Theorem~\ref{thm_algorithm_informal} holds for three variants of the elimination distance: 
  $\edphi{\edphione} $,  $\edphi{\edphitwo}$, and  $\edphi{\edphthree}$. With this terminology, the result of Agrawal et al. is only about computing  $\edphi{\edphione}$. When it comes to the proof techniques, both  Theorem~\ref{thm_algorithm_informal} and   the result of Agrawal et al. use recursive understanding, which seems to be a very natural technique for approaching problems about elimination distances. However, the details are quite different. 
 To deal with various types of elimination distances and FOL formulas in a uniform way,  
we need different combinatorial characterizations of the distances via sets  of bounded elimination depths. Furthermore, while solving our problems on unbreakable graphs is done by recursive branching algorithms, similarly to Agrawal et al., we do it in 
different way that exploits the random separation technique  to deal with our more general FOL framework. Moreover,     the analysis of components of the graph obtained by the deletion of an elimination set for 
computing $\edphi{\edphthree}(G)$, and especially, $\edphi{\edphitwo}(G)$, is great deal more challenging. In particular, this is the reason why we apply the random separation technique contrary to the more straightforward  tools used by Agrawal et al.

 \medskip\noindent\textbf{Overview of the approach.}
The firsts two variants of the elimination distance that we examine are defined recursively using the containment in the graph class  ${\cal P}_{\varphi}$ as the base case. We start by providing equivalent formulations that are more suitable from the algorithmic perspective.  For this, we introduce the notion of {\sl elimination set of depth at most $d$} that is a set $X\subseteq V(G)$ that can be bijectively mapped to a rooted tree $T$ of depth $d$ expressing selection of  elimination vertices in  recursive calls.
We next prove that $\edphi{\edphione}(G)\leq k$ if and only if $G$ has an elimination set of $X$ depth at most $k-1$
such that $C\models \varphi$ for every component $C$ of $G-X$. Similar, however more technical,
equivalent formulations is given for  $\edphi{\edphitwo}$. All alternative definitions and the proofs of their equivalences to the recursive ones are gathered in Section \ref{sec_elim}.

The new definitions allows to certify a solution by a set $X$ of bounded elimination depth. However, the size of $X$ could be unbounded. Moreover, there could be many connected components of $G-X$ and the sizes of these components could be immense. We use the {\sl recursive understanding technique}, introduced by Chitnis et al. in~\cite{ChitnisCHPP16} to reduce the solution of the initial problem to a much more structured problem. In the reduced problem, we can safely assume that each \yes-instance is certified by an elimination set $X$ whose size, 
as well as the size of the union of all {\sl but one} of the components of $G- X$,  is also bounded by a function of $k$.

More precisely, by making use of recursive understanding, we can consider only instances that are $(p(k),k)$-unbreakable for some suitably chosen function $p$. Roughly speaking, a graph is  $(p(k),k)$-unbreakable when it has no separator of size at most $k$ that partitions the graph in two parts of size at least $p(k)+1$ each. The  application of recursive understanding uses the meta-algorithmic result of Lokshtanov et al.~\cite{LokshtanovR0Z18} 
and the fact that all variants of the elimination distance to $\property_\varphi$
are expressible in monadic-second order logic (MSOL) when $\phi$ is a formula in FOL (\autoref{lem_expr}).

The  $(p(k),k)$-unbreakability permits us to assume that $|X|\leq p(k)+k$. Moreover, exactly one connected component $C_X$ of  $G-X$, is \emph{big}, that is of size at least $p(k)+1$, and 
the size of $G-V(C_X)$ is  bounded by some function of $k$ (see \autoref{lem_es_bound}). 
Given that $C_X$ is the big component corresponding to a solution $X$, we also  
consider the set $S_{X}$ of the neighbours  of the  vertices of $C_X$ in $G$ 
and we  set $U_X=V(G)\setminus (V(C_X)\cup S_X)$. 
We show that $|S_X|\leq k$ and $|U_X|\leq p(k)$.

Our next step is to use the {\sl random separation technique}, introduced by Cai, Chan, and Chan in~\cite{ChitnisCHPP16}.
 We construct in {\sf FPT}-time a family $\mathcal{F}$ of at most $f(k)\cdot  \log n$  partitions $(R,B)$ of $V(G)$ to ``red'' and ``blue'' vertices such that for every elimination set $X$ corresponding to a potential solution, ${\cal F}$ 
 contains some $(R,B)$ where   $U_X\subseteq R$ and $S_X\subseteq B$. In our algorithm, we go over all these blue-red partitions   and, for each one of them, we  
check whether there exists an  elimination set $X$  (called {\em colorful elimination set}) where all vertices in $S_X$ are blue and all vertices in $U_{X}$ are red.  

The correct ``guess'' of the above red-blue partition permits us to design  a recursive procedure that solves the latter problem, i.e., finds a colourful  elimination set $X$. 
This procedure is different for each of the three versions of the problem and its variants are presented in Subsection \ref{sec_unbreak}. The key task here is to identify the big component $C_X$.
It  runs in {\sf FPT}-time and its correctness is based on the prefix structure of the formula $\phi$.

\paragraph{Organization of the paper.} In Section~\ref{sec_prelim} we provide the basic 
definitions of the concepts that we use in this paper:  complexity classes graphs, 
and formulas. In Section~\ref{sec_elim} we prove some properties and relations between the 
elimination ordering variants that we consider. We also  provide alternative definitions
and we prove their equivalencies with the original ones.  The main algorithmic result 
is in Section~\ref{sec_FPT} where we explain how we apply the recursive understanding technique, the randomized separtion technique, and we present the branching procedure for the ``colourful version'' of each variant. Section~\ref{sec_hard} gives the lower bound of the paper. This uses parameterized reduction from the {\sc Set Cover} problem.
Finally, in Section~\ref{sec_concl},
we  provide some discussion on the kernelization complexity of our problems
as well as some directions on further research on elimination distance problems.

 \section{Preliminaries}\label{sec_prelim}

\paragraph{Sets.} 
We use $\mathbb{N}$ to denote the set of all non-negative numbers.
We denote by $\mathbf{a}=\langle a_1,\ldots,a_r\rangle$ a sequence of elements of a set $A$ and call $\mathbf{a}$ an \emph{$r$-tuple} of simply a \emph{tuple}. 
Note that the elements of $\mathbf{a}$ are not necessarily distinct. 
We denote by $\mathbf{a}\mathbf{b}=\langle a_1,\dots,a_r,b_1,\ldots,r_s\rangle$ the \emph{concatenation} of tuples $\mathbf{a}=\langle a_1,\ldots,a_r\rangle$ and $\mathbf{b}=\langle b_1\ldots,b_s\rangle$.

\paragraph{Parameterized Complexity.}  We refer to the recent books of Cygan et al.~\cite{CyganFKLMPPS15} and Downey and Fellows~\cite{DowneyF13} for the detailed introduction to the field. 
Formally, a \emph{parmeterized problem} is a language $L\subseteq\Sigma^*\times\mathbb{N}$, where $\Sigma^*$ is a set of strings over a finite alphabet $\Sigma$. This means that an input of a parameterized problem is a pair $(x,k)$, where $x$ is a string over $\Sigma$ and $k\in \mathbb{N}$ is a \emph{parameter}. 
A parameterized problem is \emph{fixed-parameter tractable} (or \classFPT) if it can be solved in time $f(k)\cdot |x|^{\Oh(1)}$ for some computable function~$f$. Also, we say that a parameterized problem belongs in the class $\classXP$  if it can be solved in time $|x|^{f(k)}$ for some computable function~$f$. 
The complexity class \classFPT contains  all fixed-parameter tractable problems.
Parameterized complexity theory also provides tools to disprove the existence of an \classFPT-algorithm for a problem under plausible complexity-theoretic assumptions. 
The standard way is to show that the problem is \classW{1} or \classW{2}-hard using a \emph{parameterized reduction} from a known \classW{1} or \classW{2}-hard problem; we refer to \cite{CyganFKLMPPS15,DowneyF13} for the formal definitions of the classes \classW{1} and \classW{2} and parameterized reductions. 

 \paragraph{Graphs.}
We consider only undirected simple graphs and use the standard graph theoretic terminology (see, e.g., \cite{Diestel12}). Throughout the paper we use $n$ to denote $|V(G)|$ if it does not create confusion.
For a set of vertices $S\subseteq V(G)$, we denote by $G[S]$ the subgraph of $G$ induced by the vertices from $S$. We also define $G-S=G[V(G)\setminus S]$; we write $G-v$ instead of $G-\{v\}$ for a single vertex set. For a vertex $v$, $N_G(v)$ denotes the \emph{open neighborhood} of $v$, that is, the set of vertices adjacent to $v$, and $N_G[v]=\{v\}\cup N_G(v)$ is the \emph{closed neighborhood}. For $S\subseteq V(G)$, $N_G(S)=\big(\bigcup_{v\in S}N_G(v)\big)\setminus S$ and $N_G[S]=\bigcup_{v\in S}N_G[v]$.
For a vertex $v$, $d_G(v)=|N_G(v)|$ denotes the \emph{degree} of $v$. 
A graph $G$ is \emph{connected} if for every two vertices $u$ and $v$, $G$ contains a path whose end-vertices are $u$ and $v$. 
For a positive integer $k$, $G$ is \emph{$k$-connected} if $|V(G)|\geq k$ and $G-S$ is connected for every $S\subseteq V(G)$ of size at most $k-1$.
A \emph{connected component} (or simply a \emph{component}) is an inclusion maximal induced connected subgraph of $G$. 
For two distinct vertices $u$ and $v$ of a graph $G$, a set $S\subseteq V(G)$ is a \emph{$(u,v)$-separator} if $G-S$ has no $(u,v)$-path. 

A \emph{rooted tree} is a tree $T$ with a selected node (we use the term ``node'' instead of ``vertex'' for such a tree) $r$ called a \emph{root}. The selection of $r$ defines the standard parent--child relation on $V(T)$. Nodes without children are called \emph{leaves} and we use $L(T)$ to denote the set of leaves of $T$.  The \emph{depth} $\dpt_T(v)$ of a node $v$ is the distance between $r$ and $v$, and the \emph{depth} (or \emph{height}) $\dpt(T)$ of $T$ is the maximum depth of a node. The nodes of the  $(r,v)$-path are called \emph{ancestors} of $v$.   We use $A_T(v)$ to denote the set of ancestors of $v$ in $T$. Note that $v$ is its own ancestor; we say that an ancestor is \emph{proper} if it is distinct from $v$.
Two nodes $u$ and $v$ of $T$ are \emph{comparable} if either $v$ is an ancestor of $u$ or $u$ is an ancestor of $v$. Otherwise, $u$ and $v$ are \emph{incomparable}.  A node $w$ of $T$ is the \emph{lowest common ancestor} of nodes $u$ and $v$ if $w$ is the ancestor of maximum depth of both $u$ and $v$. Note that the lowest common ancestor is unique and if $u$ and $v$ are incomparable then the lowest common ancestor is distinct from $u$ and $v$. 
A node $v$ is a \emph{descendant} of $u$ if $u$ if $u$ is an ancestor of $v$. By $D_T(u)$ we denote the set of descendants of $u$ in $T$. As with ancestors, a node   is its own descendant and we say that a descendant $v$ of $u$ is \emph{proper} if $u\neq v$. 
For a node $v$, the subtree induced by the descendants of $v$ is the \emph{subtree rooted in $v$}.

\paragraph{Formulas.}
In this paper we deal with \emph{first-order} and  \emph{monadic second-order} logic formulas on graphs.  

\medskip
The syntax of the first-order logic (FOL) formulas on graphs includes the logical connectives 
$\vee$, $\wedge$, $\neg$, variables for vertices, 
the quantifiers $\forall$, $\exists$ that are applied to these variables, the predicate $u \sim v$, where $u$ and $v$ are vertex variables and the interpretation is that $u$ and $v$ are adjacent, and the  equality of variables representing vertices. 
It also convenient to assume that we have the logical connectives $\rightarrow$ and $\leftrightarrow$.
An FOL formula $\varphi$ is  in {\em prenex normal form} if it is written as $\varphi={\tt Q}_{1}x_{1}{\tt Q}_{2}x_{2}\cdots{\tt Q}_{t}x_{t} \chi$  where 
each ${\tt Q}_i\in\{\forall,\exists\}$ is a quantifier,  $x_i$ is a variable, and $\chi$ is a quantifier-free part that depends on the variables $x_1,\ldots,x_t$. Then
${\tt Q}_{1}x_{1}{\tt Q}_{2}x_{2}\cdots{\tt Q}_{t}x_{t}$ is
 referred as the \emph{prefix} of $\varphi$. 
From now on, when we write ``FOL formula'', we mean an FOL formula on graphs that is in prenex normal form. Also we assume that a formula has no \emph{free}, that is, non-quantified variables unless we explicitly say that free variables are permitted.     
For an FOL formula $\varphi$  
and a graph $G$, we write $G\models \varphi$ to denote that $\varphi$ evaluates to \emph{true} on $G$.

We use the  arithmetic  hierarchy (also known as  Kleene-Mostowski  hierarchy) for the classification of formulas in the  first-order arithmetic language (see, e.g., \cite{Smorynski77}). For this, we define prefix classes    according to alternations of quantifiers, that is, switchings from $\forall$ to $\exists$ or vice versa in the prefix string of the formula. Here we allow a formula to have 
free variables.
Let $\Sigma_0=\Pi_0$  be the classes of FOL-formulas without quantifiers. 
For a positive integer $\ell$, the class $\Sigma_\ell$ contains formulas that may be written in the form 
\[\varphi=\exists x_1\exists x_2\cdots \exists x_s~ \psi,\]
where $\psi$ is  a $\Pi_{\ell-1}$-formula,  $s$ is some integer,  and $x_1,\ldots,x_s$ are free variables of $\psi$. Respectively, $\Pi_\ell$ consists of formulas 
\[\varphi=\forall x_1\forall x_2\cdots \forall x_s~ \psi,\]
where $\psi$ is  a $\Sigma_{\ell-1}$-formula and $x_1,\ldots,x_s$ are free variables of $\psi$.
Note that for $\ell'>\ell$, $\Sigma_{\ell}\cup\Pi_\ell\subseteq \Sigma_{\ell'}\cap\Pi_{\ell'}$, that is, every $\Sigma_\ell$ or $\Pi_\ell$ formula is both a 
$\Sigma_{\ell'}$ and $\Sigma_{\ell'}$-formula.

For technical reasons, we extend FOL formulas on graphs to  structures of a special type. We say that a pair $(G,\mathbf{v})$, where $\mathbf{v}=\langle v_1,\ldots,v_r\rangle$ is an $r$-tuple of vertices of $G$, is an \emph{$r$-structure}. Let $\varphi$ be an FOL formula without free variables and let $\mathbf{x}=\langle x_1,\ldots,x_r\rangle$ be an $r$-tuple of  distinct variables  of $\varphi$. We denote by $\varphi[\mathbf{x}]$ the formula obtained from $\varphi$ by the deletion of the quantification over $x_1,\ldots,x_r$, that is, these variables become the free variables of $\varphi[\mathbf{x}]$. For an $r$-structure $(G,\mathbf{v})$ with $\mathbf{v}=\langle v_1,\ldots,v_r\rangle$ and $\varphi[\mathbf{x}]$, we write $(G,\mathbf{v})\models \varphi[\mathbf{x}]$ to denote that $\varphi[\mathbf{x}]$ evaluates to \emph{true} on $G$ if $x_i$ is assigned $v_i$ for $i\in\{1,\ldots,r\}$. If $r=0$, that is, $\mathbf{v}$ and $\mathbf{x}$ are empty, then $(G,\mathbf{v})\models \varphi[\mathbf{x}]$ is equivalent to $G\models \varphi$.

As a subroutine in our algorithms, we have to evaluate  FOL formulas on graph, that is, solve the \probMC problem. Let $\varphi$ be a FOL formula. The task of \probMC is, given a graph $G$, decide whether $G\models \varphi$. It was shown by Vardi~\cite{Vardi82} that 
\probMC is \classPSPACE-complete. The problem is also hard from the parameterized complexity viewpoint when parameterized by the size of the formula. It was proved by Frick and Grohe in~\cite{FrickG04} that the problem is ${\sf AW}[*]$-complete for this parametrization (see, e.g., the book~\cite{FlumG06} for the definition of the class). Moreover, it can be noted that the problem is already \classW{1}-hard for formulas having only existential quantifiers, that is, for $\varphi\in \Sigma_1$, by observing that the existence of an independent set of size $k$ can be easily expressed by such a formula and \textsc{Independent Set} is well-known to be \classW{1}-complete~\cite{DowneyF13}. 
This implies that we cannot expect an FPT algorithm for the problem.  However, it is easy to see that  \probMC is in \classXP when parameterized by the number of variables, because the problem for a formula with $s$ variables can be solved in $\Oh(n^s)$ time by the exhaustive search (the currently best algorithm is given by Williams in~\cite{Williams14}). This explains the exponential dependence of the polynomials in running times in our algorithm on the number of variables.  For referencing, we state the following observation.

\begin{observation}\label{obs:mod-check}
\probMC for an FOL formula $\varphi$ can be solved in $n^{\Oh(|\varphi|)}$ time. 
\end{observation}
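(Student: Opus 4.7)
The plan is a straightforward exhaustive-search argument over the quantifier tree of the prenex formula. Write $\varphi = {\tt Q}_1 x_1 {\tt Q}_2 x_2 \cdots {\tt Q}_t x_t \chi$, where $\chi$ is the quantifier-free matrix and $t \leq |\varphi|$ is the number of bound variables. The first step is to define a recursive evaluator $\mathrm{Eval}(i, \sigma)$ that computes the truth value of ${\tt Q}_i x_i \cdots {\tt Q}_t x_t \chi$ on $G$ under a partial assignment $\sigma \colon \{x_1, \ldots, x_{i-1}\} \to V(G)$. When $i = t+1$, every variable has been bound, so the evaluator simply walks the parse tree of $\chi$ under $\sigma$, checking each atomic subformula (an equality of two variables or an instance of the adjacency predicate $u \sim v$) in constant time once the values of $\sigma$ are plugged in; this takes $\Oh(|\chi|) = \Oh(|\varphi|)$ time per leaf. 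When $i \leq t$, the evaluator iterates $v$ over $V(G)$, extends $\sigma$ by mapping $x_i \mapsto v$, recurses to $\mathrm{Eval}(i+1, \sigma \cup \{x_i \mapsto v\})$, and aggregates the results by disjunction or conjunction depending on whether ${\tt Q}_i$ is $\exists$ or $\forall$. Calling $\mathrm{Eval}(1,\emptyset)$ then returns the truth value of $G\models\varphi$.

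For the running-time analysis, observe that the recursion tree has branching factor at most $n$ and depth $t \leq |\varphi|$, so it has at most $n^{|\varphi|}$ leaves. Each leaf performs $\Oh(|\varphi|)$ work on the quantifier-free matrix, and the logical aggregation at each internal node costs only a constant per child, so the overall running time is bounded by $n^{|\varphi|}\cdot |\varphi|^{\Oh(1)} = n^{\Oh(|\varphi|)}$, matching the claimed bound.

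There is no genuine obstacle here: the observation is essentially the classical model-checking upper bound for relational first-order logic, going back to Vardi~\cite{Vardi82}. The only minor points to be careful about are that (i) maintaining and looking up the partial assignment $\sigma$ during the recursion costs only polynomial time in $|\varphi|$ since $|\sigma|\leq |\varphi|$, and (ii) the Boolean connectives $\vee, \wedge, \neg, \rightarrow, \leftrightarrow$ appearing in $\chi$ can be handled in constant time per occurrence while walking the parse tree; both facts are immediate from the fact that $\chi$ has at most $|\varphi|$ subformulas. Consequently, the naive recursive evaluator achieves the $n^{\Oh(|\varphi|)}$ bound without any further machinery.
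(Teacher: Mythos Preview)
Your proof is correct and matches the paper's approach: the paper does not give a formal proof of this observation at all, but simply notes that model checking for a formula with $s$ variables can be solved in $\Oh(n^s)$ time by exhaustive search, which is exactly the recursive evaluator you spell out. Your write-up is therefore a fully detailed version of the one-line justification the paper gives.
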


\medskip
In monadic second-oder logic (MSOL), we additionally can quantify over sets of vertices and edges. Formally, we can use variables for sets of vertices and edges and have  the predicate $x\in X$, where $x$ is a vertex (an edge, respectively) variable and $X$ a vertex set (an edge set, respectively) variable, denoting that $x$ is an element of $X$. As with FOL formulas, we write $G\models\varphi$ to denote that an MSOL formula $\varphi$ evaluates \emph{true} on $G$. We refer to the book of Courcelle and Engelfriet~\cite{CourcelleE12} for the details of MSOL on graphs.
  
\section{Properties of elimination distance}\label{sec_elim}
In this section we derive the properties of the elimination distances ,  $  \edphi{\edphione}$ and $\edphi{\edphitwo}$ 
that will be used in the proof of the main theorem. We also define $  \edphi{\edphthree}$.

\begin{observation}\label{obs:dist}
For every FOL formula $\varphi$ and every graph $G$, $\edphi{\edphitwo}(G)\leq \edphi{\edphione}(G)+1$. 
\end{observation}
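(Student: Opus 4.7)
The plan is to prove the inequality by induction on $|V(G)|$, with the case analysis mirroring the three branches of the recursive definitions of $\edphi{\edphione}$ and $\edphi{\edphitwo}$.

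For the base case $G=(\emptyset,\emptyset)$ both distances are $0$ by definition, so the inequality holds. For the inductive step on a graph $G$ with $|V(G)|\geq 1$, I would distinguish three cases according to how $\edphi{\edphitwo}(G)$ is computed. If $G\models\varphi$, then $\edphi{\edphitwo}(G)=0$ and the inequality is trivial since $\edphi{\edphione}(G)\geq 0$. If $G\not\models\varphi$ and $G$ is connected, then both distances use the ``$1+\min_v$'' branch. Applying the inductive hypothesis to each $G-v$ gives $\edphi{\edphitwo}(G-v)\leq \edphi{\edphione}(G-v)+1$, and minimizing over $v$ yields $\edphi{\edphitwo}(G)\leq \edphi{\edphione}(G)+1$.

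The remaining case is when $G\not\models\varphi$ and $G$ is disconnected. Here $\edphi{\edphitwo}(G)=\max\{1,\max_{C}\edphi{\edphitwo}(C)\}$, while $\edphi{\edphione}(G)=\max_C\edphi{\edphione}(C)$. The induction hypothesis applied to each component $C$ gives $\edphi{\edphitwo}(C)\leq \edphi{\edphione}(C)+1$, so $\max_C\edphi{\edphitwo}(C)\leq \edphi{\edphione}(G)+1$, and since $1\leq \edphi{\edphione}(G)+1$ as well, the maximum of the two is at most $\edphi{\edphione}(G)+1$.

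There is no real obstacle here; the only subtlety is verifying that this last disconnected/unsatisfied case — which is precisely where the two definitions genuinely diverge (as illustrated by the edgeless graph example with $\varphi=\forall x\forall y\;x=y$ from the preceding discussion) — loses at most one unit, and that this loss is absorbed by the $+1$ on the right-hand side. The proof can be presented compactly as an induction with the three-way case split above.
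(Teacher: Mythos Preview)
Your proof is correct and follows essentially the same three-case structure as the paper's proof. The one difference is that the paper inducts on the value of $\edphi{\edphione}(G)$ rather than on $|V(G)|$; your choice is slightly cleaner, since in the disconnected case each component has strictly fewer vertices and the induction hypothesis applies directly, whereas the paper must appeal back to its connected Case~1 argument to handle components $C$ with $\edphi{\edphione}(C)=\edphi{\edphione}(G)$.
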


\begin{proof}
The proof is by induction on the value of $\edphi{\edphione}(G)$. 

Suppose that $\edphi{\edphione}(G)=0$ for a graph $G$. If $G$ is connected, then $G\models \varphi$ and $\edphi{\edphitwo}(G)=0$. Hence, the inequality holds.
If $G$ is disconnected, then $C\models\varphi$ for every component $C$ of $G$. If $G\models \varphi$, then $\edphi{\edphitwo}(G)=0$. If $G\not\models\varphi$, then $\edphi{\edphitwo}(G)=\max\{1,\max\{\edphi{\edphitwo}(C)\mid C\text{ is a component of }G\}\}=1$. In both cases,  $\edphi{\edphitwo}(G)\leq \edphi{\edphione}(G)+1$.

Assume that    $\edphi{\edphione}(G)>0$ and $\edphi{\edphitwo}(G')\leq \edphi{\edphione}(G')+1$ for all $G'$ with $\edphi{\edphione}(G')<\edphi{\edphione}(G)$. The claim is trivial if $\edphi{\edphitwo}(G)=0$. Let $\edphi{\edphitwo}(G)>0$. We have two cases.

\subparagraph{Case~1.} $G$ is connected. By definition, there is $u\in V(G)$ such that 
$\edphi{\edphione}(G)=1+\edphi{\edphione}(G-u)$. Because $\edphi{\edphitwo}(G)>0$,
\begin{equation*}
\edphi{\edphitwo}(G)=1+\min_{v\in V(G)}\edphi{\edphitwo}(G-v)\leq 1+\edphi{\edphitwo}(G-u).
\end{equation*}
Then by induction,
\begin{equation*}
\edphi{\edphitwo}(G)\leq 1+\edphi{\edphitwo}(G-u)\leq 2+\edphi{\edphione}(G-u)=1+\edphi{\edphione}(G).
\end{equation*}

\subparagraph{Case~2.} $G$ is disconnected. Let $C_1,\ldots,C_s$ be the components of $G$. By definition, 
$\edphi{\edphione}(G)=\max_{1\leq i\leq s}\edphi{\edphione}(C_i)$. In particular, we have that $\edphi{\edphione}(C_i)\leq \edphi{\edphione}(G)$ for every $i\in\{1,\ldots,s\}$. Notice that by the already proved claim for connected graphs in Case~1, $\edphi{\edphitwo}(C_i)\leq \edphi{\edphione}(C_i)+1$ for every $i\in\{1,\ldots,s\}$.
Because $\edphi{\edphione}(G)>0$, $G\not\models \varphi$. Then 
\begin{align*} 
\edphi{\edphitwo}(G)=&\max\{1,\max_{1\leq i\leq s}\edphi{\edphitwo}(C_i)\}\leq \max\{1,\max_{1\leq i\leq s}(\edphi{\edphione}(C_i)+1)\}\\
=&\max_{1\leq i\leq s}\edphi{\edphione}(C_i)+1=\edphi{\edphione}(G)+1
\end{align*}
as required. This completes the proof. 
\end{proof}

The example of $\varphi=\forall x\forall y~x=y$ and an edgeless graph $G$ with at least two vertices shows that the inequality in Observation~\ref{obs:dist} is tight.
However, $\edphi{\edphione}(G)$ and $\edphi{\edphitwo}(G)$ can be far apart. Consider $\varphi=\exists u\exists v~\neg(u=v)\wedge \neg(u\sim v)$ that defines the property that a graph has two nonadjacent vertices. Let $G$ be the disjoint union of the complete $n$-vertex graph $K_n$ and an isolated vertex. Then $G\models\varphi$ and, therefore, $\edphi{\edphitwo}(G)=0$. From the other side, $G$ is disconnected and it is easy to see that $\edphi{\edphione}(K_n)=n$.
This means that $\edphi{\edphione}(G)=n$ and
$\edphi{\edphione}(G)-\edphi{\edphitwo}(G)=n$, that is, the difference can be arbitrary large. 

For algorithmic purposes, it is convenient for us to define $\edphi{\edphione}(G)$ and $\edphi{\edphitwo}(G)$ via deletions of sets of vertices of $G$ with a special structure. Similar approach was recently exploited by Agrawal et al.~\cite{SridharanPASK21} but we do it in a different way, because we consider two variants of eliminations distances.

Let $G$ be a graph and let $d\geq 0$ be an integer. We say that a set of vertices $X\subseteq V(G)$ is an \emph{elimination set of depth at most $d$} if there is a rooted tree $T$ of depth at most $d$  and a bijective mapping $\alpha\colon V(T)\rightarrow X$ such that 
for every two distinct incomparable nodes $x$ and $y$ of $T$, $\alpha(A_T(v))$ is an $(\alpha(x),\alpha(y))$-separator in $G$, where $v$ is the lowest common ancestor of $x$ and $y$
(recall that $A_T(v)$ denotes the set of ancestors of $v$). 
We also say that the pair $(T,\alpha)$ is a \emph{representation} of $X$  (or   \emph{represents} $X$).  
The \emph{depth} of $X\subseteq V(G)$, denoted $\dpt(X)$, is the minimum $d$ such that $X$ is an elimination set of depth at most $d$.  
 We assume that the empty set is an elimination set of depth $-1$.

We call a representation $(T,\alpha)$ of an elimination set $X\subseteq V(G)$ \emph{nice} if for every nonleaf node $v\in V(T)$ and its child $x$, the vertices of $\alpha(D_T(x))$ are in the same component of $G-A_T(v)$. The following property is useful for us.

\begin{lemma}\label{lem_nice}
Let $G$ be a connected graph and let $d\geq 0$ be an integer. Then a  nonempty $X\subseteq V(G)$ is an elimination set of depth at most $d$ if and only if $X$ has a nice representation $(T,\alpha)$ with $\dpt(T)\leq d$. Moreover, if $(T,\alpha)$ is a representation of $X$, then there is a nice representation $(T',\alpha)$ of $X$ with $V(T')=V(T)$ such that 
(i) $\alpha(L(T))\subseteq \alpha(L(T'))$ and (ii) for each $v\in X$, $\dpt_T(\alpha^{-1}(v))\geq \dpt_{T'}(\alpha^{-1}(v))$.      
\end{lemma}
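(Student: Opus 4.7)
The plan is to prove the ``moreover'' part by strong induction on $|V(T)|$ and derive the ``if and only if'' as an immediate corollary: one direction is trivial since a nice representation is a representation, while for the other direction one applies the moreover part to any representation of depth at most $d$ to obtain a nice one of depth at most $d$.

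For the inductive step with $|V(T)|\geq 2$, let $r$ be the root of $T$, $u=\alpha(r)$, and $C_1,\ldots,C_m$ the components of $G-u$. For each index $i$, set $V_i=\alpha^{-1}(X\cap V(C_i))$. For each nonempty $V_i$ I build a candidate tree $T_i$ on vertex set $V_i$ by defining the parent in $T_i$ of every $v\in V_i$ to be the deepest proper $T$-ancestor of $v$ lying in $V_i$, if any; otherwise $v$ is declared a root of $T_i$.

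The main obstacle is to verify that $T_i$ has a unique root (so is really a tree) and that $(T_i,\alpha|_{V_i})$ is a valid representation of $X\cap V(C_i)$ in $C_i$; both facts rely essentially on the representation property of the original $(T,\alpha)$ together with the connectedness of $C_i$. For uniqueness: two candidate roots $v_1,v_2$ would be $T$-incomparable (else one would be a $V_i$-ancestor of the other), and every $T$-ancestor of their $T$-LCA $w$ would have to lie outside $V_i$; thus $\alpha(A_T(w))\cap V(C_i)=\emptyset$, but since $C_i$ is connected and $\alpha(v_1),\alpha(v_2)\in V(C_i)$, removing $\alpha(A_T(w))$ cannot separate them in $G$, contradicting that $(T,\alpha)$ is a representation. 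For validity of $T_i$: given $T_i$-incomparable $x,y$ with $T_i$-LCA $v$, any path in $C_i-\alpha(A_{T_i}(v))$ from $\alpha(x)$ to $\alpha(y)$ lies in $V(C_i)$ and hence automatically avoids the $\alpha$-images of all vertices of $T$ not in $V_i$; it would therefore be a path in $G-\alpha(A_T(w))$ for the $T$-LCA $w$ of $x,y$, contradicting validity of $(T,\alpha)$.

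Applying the inductive hypothesis to each $(T_i,\alpha|_{V_i})$ in the connected graph $C_i$ yields a nice $(T_i',\alpha|_{V_i})$ on $V_i$ with $\dpt_{T_i'}(v)\leq\dpt_{T_i}(v)$ and $\alpha(L(T_i))\subseteq\alpha(L(T_i'))$. I then assemble $T'$ by taking $r$ as root and attaching each $T_i'$ beneath $r$ via the root of $T_i'$. Niceness and validity of $T'$ follow directly from the construction, using that $\{u\}$ separates the components $C_i$ in $G-u$. The depth bound holds because $r$ is a proper $T$-ancestor of every $v\in V_i$ but $r\notin V_i$, so $\dpt_{T_i}(v)\leq\dpt_T(v)-1$, giving $\dpt_{T'}(v)=1+\dpt_{T_i'}(v)\leq\dpt_T(v)$. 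Leaves are preserved because $v\in L(T)\cap V_i$ has no $T$-descendant, hence no $T_i$-descendant, so $v\in L(T_i)\subseteq L(T_i')\subseteq L(T')$.
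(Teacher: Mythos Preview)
Your proposal is correct and follows essentially the same approach as the paper's proof: peel off the root $r$ with $u=\alpha(r)$, redistribute the remaining nodes of $T$ among the components of $G-u$ by the ``deepest proper $T$-ancestor in the same component'' rule, verify that each piece is a valid representation, recurse, and reassemble under $r$. The only cosmetic differences are that you induct on $|V(T)|$ rather than on $d$, and that you build each $T_i$ on $V_i$ directly and argue uniqueness of its root, whereas the paper builds $T_i$ on $U_i\cup\{r\}$ and argues that $r$ has a unique child in $U_i$; these are equivalent formulations of the same observation.
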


\begin{proof}
Clearly, if $X\subseteq V(G)$ has a nice representation $(T,\alpha)$ with $\dpt(T)\leq d$, then $X$ is an elimination set of depth at most $d$. For the opposite direction, it is sufficient to show the second claim. Let   $(T,\alpha)$ is a representation of $X$ and $\dpt(T)\leq d$. 
We show the existence of $(T',\alpha)$ satisfying (i) and (ii) by induction on $d$.

The claim is trivial if  $|X|=1$ as $T'=T$ in this case.
  Assume that $|X|\geq 2$ and $d\geq 1$. Denote by $r$ the root of $T$ and let $u=\alpha(r)$. Consider the components $C_1,\ldots,C_s$ of $G-u$ containing at least one vertex of $X$. For every $i\in\{1,\ldots,s\}$, let $X_i=V(C_i)\cap X$ and $U_i=\alpha^{-1}(X_i)$. 
  For every $i\in\{1,\ldots,s\}$, we construct the tree $T_i$ with the set of vertices $U_i\cup\{r\}$ as follows. For every $x\in U_i$ such that $x\neq r$, we find a proper ancestor $y\in U_i$ with respect to $T$ of maximum depth and make $y$ the parent of $x$, and if $x$ has no ancestors in $U_i$, we make $r$ the parent of $x$. Because the choice of the parent is unique, $T_i$ has no cycles, and because we assign the parent to every node distinct  from $r$, we conclude that $T_i$ is a tree. Denote by $\tilde{T}$ the union of $T_1,\ldots,T_s$ and set $r$ be its root. Because every node of $\tilde{T}$ distinct from $r$ got a parent from the set of its proper ancestors in $T$, 
 (i) $\alpha(L(T))\subseteq \alpha(L(\tilde{T}))$ and (ii) for each $v\in X$, $\dpt_T(\alpha^{-1}(v))\geq \dpt_{\tilde{T}}(\alpha^{-1}(v))$.    
  
 We prove that $(\tilde{T},\alpha)$ represents $X$. Consider incomparable nodes $x$ and $y$ of $\tilde{T}$ and denote by $v$ their lowest common ancestor. 
 We have to show that $\alpha(A_{\tilde{T}}(v))$ is an $(\alpha(x),\alpha(y))$-separator in $G$. This is trivial if $\alpha(x)$ and $\alpha(y)$ are in distinct components of $G-u$. Assume that $\alpha(x)$ and $\alpha(y)$ are in the same component $C_i$ for some $i\in\{1,\ldots,s\}$, that is, $x,y\in U_i$. 
 Note that by the construction of $\tilde{T}$, $x$ and $y$ are incomparable in $T$. Let $v'$ be their lowest common ancestor in $T$. Clearly, $v$ is an ancestor of $v$ in $T$.  
 By the construction of $\tilde{T}$, $A_{\tilde{T}}(v)\cap V(C_i)=A_T(v')\cap V(C_i)$. Because $A_T(v)$ separates $\alpha(x)$ and $\alpha(y)$, we have that  $A_T(v)\cap V(C_i)$ is an $(\alpha(x),\alpha(y))$-separator. Therefore,  $\alpha(A_{\tilde{T}}(v))$ is also an $(\alpha(x),\alpha(y))$-separator. This proves that $(\tilde{T},\alpha)$ represents $X$.

Consider $i\in\{1,\ldots,s\}$. Observe that $r$ has a unique child in $U_i$ in $\tilde{T}$. Otherwise, if $x$ and $y$ are distinct children of $r$, we have that $x$ and $y$ have no ancestors in $U_i$ with respect to $T$. Let $v$ be the lowest common ancestor of $x$ and $y$ in $T$. Note that $v\neq x,y$ and $\alpha(A_T(v))$ does not separate $\alpha(x)$ and $\alpha(y)$ contradicting that $(T,\alpha)$ represents $X$. Hence,  $r$ has the  unique child $r_i$ in $U_i$. Let $\tilde{T}_i$ be the subtree of $\tilde{T}$ rooted in $r_i$. We set $\alpha_i(x)=\alpha(x)$ for $x\in U_i$.
Because $(\tilde{T},\alpha)$ represents $X$, it is straightforward to verify that $(\tilde{T}_i,\alpha_i)$ represents $X_j$ in the graph $C_i$. 
Because $\dpt(\tilde{T}_i)\leq d-1$, we can apply the inductive assumption. We obtain that there is a nice representation $(T'_i,\alpha_i)$ of $X_i$ in $C_i$ with $V(T_i')=V(\tilde{T}_i)$ such that 
(i) $\alpha(L(\tilde{T}_i))\subseteq \alpha(L(T_i'))$ and (ii) for each $v\in X_i$, $\dpt_{\tilde{T}_i}(\alpha_i^{-1}(v))\geq \dpt_{T_i'}(\alpha_i^{-1}(v))$.   
Notice that by the second condition, $T_i'$ is rooted in $r_i$.

We construct the trees $T_i'$ for all $i\in\{1,\dots,s\}$ and then construct $T'$ from their union by making $r_1,\ldots,r_s$ the children of $r$.
Clearly, $\dpt(T')\leq d$ and $(T',\alpha)$ is a nice representation of $X$ satisfying conditions (i) and (ii) of the lemma.
\end{proof}

It is also useful to characterize the depths of an elimination set  in a disconnected graph.

\begin{lemma}\label{lem_depth-disconn} 
Let $G$ be a graph with components $C_1,\ldots,C_s$ and let $X\subseteq V(G)$ such that $X\neq\emptyset$. Then
\begin{equation}\label{eq:disc}
\dpt(X)=\min_{1\leq i\leq s}\max\{\dpt(X_i),\max\{\dpt(X_j)\mid 1\leq j\leq s,~j\neq i\}+1 \},
\end{equation}
where $X_i=X\cap V(C_i)$ for $i\in\{1,\ldots,s\}$.
\end{lemma}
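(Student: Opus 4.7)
I would establish the two inequalities $\dpt(X)\le \min_i\max\{\dpt(X_i),\max_{j\ne i}\dpt(X_j)+1\}$ and $\dpt(X)\ge \min_i \max\{\dpt(X_i),\max_{j\ne i}\dpt(X_j)+1\}$ separately, both by manipulating representations of elimination sets.

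For the upper bound, I would fix an index $i$ with $X_i \ne \emptyset$ and, letting $(T_j,\alpha_j)$ be an optimal representation of $X_j$ in $C_j$ for every $j$ with $X_j\ne \emptyset$, glue them together as follows: keep $T_i$ with its root $r_i$, and for each $j\ne i$ attach $T_j$ to $T_i$ by making the root of $T_j$ a new child of $r_i$. The resulting tree $T$ has depth exactly $\max\{\dpt(X_i),\max_{j\ne i}(\dpt(X_j)+1)\}$. To check $(T,\alpha)$ represents $X$ in $G$, I would verify the separator condition case by case on the pair of incomparable nodes $x,y$: pairs inside the same $T_j$ are handled by the original representation and the fact that a $C_j$-separator is also a separator in $G$; pairs lying in different $T_j$'s have their images in different components of $G$, so any set (and in particular $\alpha(A_T(\mathrm{lca}(x,y)))$) separates them. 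The case $X_i=\emptyset$ is then handled as a degenerate subcase by picking any $k$ with $X_k\ne \emptyset$ and observing that the formula only gets larger.

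For the lower bound, I would start with an optimal representation $(T,\alpha)$ of $X$ in $G$ of depth $\dpt(X)$, let $r$ be the root of $T$, and let $i$ be the index of the component of $G$ containing $\alpha(r)$ (so $X_i\ne \emptyset$). The main structural claim I would prove is: for each $j\ne i$ with $X_j\ne \emptyset$, the set $U_j=\alpha^{-1}(X_j)$ lies entirely inside the subtree of $T$ rooted at a single child of $r$. This uses the separator property: if two nodes $x,y\in U_j$ had $r$ as lowest common ancestor, then $\{\alpha(r)\}$ would need to separate $\alpha(x)$ from $\alpha(y)$ in $G$, which is impossible since $\alpha(r)\notin C_j$ while $C_j$ is connected and contains both. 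From each $U_j$ I would then build $T_j'$ by contracting non-$U_j$ ancestors: the parent of $x\in U_j$ in $T_j'$ is its nearest proper ancestor in $T$ that belongs to $U_j$. Uniqueness of the root of $T_j'$ is proved by the same separator argument applied to two hypothetical rootless nodes (their LCA in $T$ is outside $U_j$, so the separator restricted to $C_j$ would be empty). The key verification is that if $x,y$ are incomparable in $T_j'$ with LCA $w$, then the $U_j$-ancestors of $w$ in $T$ coincide with the $U_j$-ancestors of the $T$-LCA of $x,y$, so $\alpha(A_{T_j'}(w))$ equals the restriction to $X_j$ of the original separator and therefore separates $\alpha(x)$ from $\alpha(y)$ in $C_j$. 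An identical construction produces $T_i'$ for $X_i$ in $C_i$, whose root is forced to be $r$ itself. Depth accounting gives $\dpt(X_i)\le \dpt(T_i')\le \dpt(T)$ and, for $j\ne i$, $\dpt(X_j)\le \dpt(T_j')\le \dpt(T)-1$, since $T_j'$ sits inside the subtree rooted at a child of $r$, which has depth $\dpt(T)-1$.

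The main obstacle is the second direction: specifically, proving that the contracted tree $T_j'$ is a valid representation of $X_j$ in $C_j$ requires showing that restricting the ancestor separator along the way from $T$ to $T_j'$ preserves the separator property inside $C_j$. Everything else---upper bound construction, verification of separators in the trivial cross-component case, and handling the degenerate case $X_i=\emptyset$---follows fairly mechanically once this structural reduction is in place.
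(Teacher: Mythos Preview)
Your proposal is correct and follows essentially the same approach as the paper: both directions are proved by manipulating representations, the upper bound by gluing optimal $(T_j,\alpha_j)$'s under the root of $T_i$, and the lower bound by contracting a global optimal representation along each $U_j=\alpha^{-1}(X_j)$ and using the separator axiom to show the contracted trees are valid representations with the required depth drop for $j\ne i$. The only cosmetic differences are that the paper disposes of empty $X_i$'s by a WLOG reduction at the outset (rather than as a degenerate subcase at the end), and in the lower bound the paper first keeps the root $r$ in every contracted tree and then argues it has a unique child in $U_j$, whereas you build $T_j'$ directly on $U_j$; the underlying separator argument is identical.
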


\begin{proof}
Recall that $\dpt(\emptyset)=-1$ by definition. This allows us to assume without loss of generality that $X_i\neq\emptyset$ for all $i\in\{1,\ldots,s\}$. Otherwise, we can delete each component $C_i$ such that $X_i=\emptyset$ without violating the value of $\dpt(X)$ and the right part of (\ref{eq:disc}).

To show that $$\dpt(X)\leq\min_{1\leq i\leq s}\max\{\dpt(X_i),\max\{\dpt(X_j)\mid 1\leq j\leq s,~j\neq i\}+1 \},$$ assume that the minimum value of the right part of (\ref{eq:disc}) is achieved for $i\in\{1,\ldots,s\}$. For every  $j\in\{1,\ldots,s\}$, let $(T_j,\alpha_j)$ be a representation of $X_j$ in $C_j$,
where $T_j$ is rooted in $r_j$ and  $\dpt(T_j)=\dpt(X_j)$. We construct the tree $T$ with the root $r=r_i$ from $T_1,\ldots,T_s$ by making each $r_j$ 
for $j\in\{1,\ldots,s\}\setminus\{i\}$ a child of $r$. Clearly,
$\dpt(T)=\max\{\dpt(T_i),\max\{\dpt(T_j)\mid 1\leq j\leq s,~j\neq i\}+1 \}=\max\{\dpt(X_i),\max\{\dpt(X_j)\mid 1\leq j\leq s,~j\neq i\}+1 \}$. 
We define $\alpha\colon V(T)\rightarrow X$ by setting
$\alpha(x)=\alpha_i(x)$ whenever $x\in X_i$ for some $i\in\{1,\ldots,s\}$.
It is straightforward to verify that $(T,\alpha)$ represents $X$.

To show the opposite inequality  $$\dpt(X)\geq\min_{1\leq i\leq s}\max\{\dpt(X_i),\max\{\dpt(X_j)\mid 1\leq j\leq s,~j\neq i\}+1 \},$$ let $(T,\alpha)$ be a representation of $X$, where $T$ is rooted in $r$ and $\dpt(T)=\dpt(X)$. By symmetry, we assume without loss of generality that $r\in V(C_1)$. 
For every $j\in\{1,\ldots,s\}$, let $U_i=\alpha^{-1}(X_j)$. 

The rest of the proof is done similarly to the proof of Lemma~\ref{lem_nice}. 
For every $j\in\{1,\ldots,s\}$, we construct the tree $T_j$ with the set of vertices $U_j\cup\{r\}$ as follows. For every $x\in U_j$ such that $x\neq r$, we find a proper ancestor $y\in U_j$ of $x$ with respect to $T$ of maximum depth and make $y$ the parent of $x$, and if $x$ has no ancestors in $U_j$, we make $r$ the parent of $x$. Because the choice of the parent is unique, $T_j$ has no cycles, and because we assign the parent to every node distinct  from $r$, we conclude that $T_j$ is a tree. Denote by $T'$ the union of $T_1,\ldots,T_s$ and set $r$ be the root. Because every node of $T'$ distinct from $r$ got a parent from the set of its proper ancestors in $T$, $\dpt(T')\leq\dpt(T)=\dpt(X)$. 

We claim that $(T',\alpha)$ represents $X$. To show this, let $x$ and $y$ be incomparable nodes of $T'$ and let $v$ be their lowest common ancestor. 
We  show that $\alpha(A_{T'}(v))$ is an $(\alpha(x),\alpha(y))$-separator in $G$. This is trivial if $\alpha(x)$ and $\alpha(y)$ are in distinct components of $G$. Assume that $\alpha(x)$ and $\alpha(y)$ are in the same component $C_j$ for some $j\in\{1,\ldots,s\}$, that is, $x,y\in U_j$. Notice that by the construction of $T'$, $A_{T'}(v)\cap C_j=A_T(v')\cap C_j$, where $v'$ is the lowest common ancestor of $x$ and $y$ in $T$. Because $A_T(v')$ separates $\alpha(x)$ and $\alpha(y)$, we have that  $A_T(v')\cap C_j$ is an $(\alpha(x),\alpha(y))$-separator. Therefore,  $\alpha(A_{T'}(v))$ is an $(\alpha(x),\alpha(y))$-separator as well, as required.

 Let $j\in\{2,\ldots,s\}$. Observe that $r$ has a unique child in $U_j$ in $T'$. Otherwise, if $x$ and $y$ are distinct children of $r$, we have that $x$ and $y$ have no ancestors in $U_j$. Let $v$ be the lowest common ancestor of $x$ and $y$ in $T$. Note that $v\neq x,y$ and $\alpha(A_T(v))$ does not separate $\alpha(x)$ and $\alpha(y)$ contradicting that $(T,\alpha)$ represents $X$. Hence,  $r$ has the  unique child $r_j$ in $U_i$. Let $T_j'$ be the subtree of $T'$ rooted in $r_j$. Define $\alpha_j(x)=\alpha(x)$ for $x\in U_j$.
 Since $(T',\alpha)$ represents $X$, we obtain that $(T_j',\alpha_j)$ represents $X_j$. Then 
 $\dpt(X_j)\leq \dpt(T_j')\leq \dpt(T)-1=\dpt(X)-1$. 
 
It is straightforward to verify that $(T_1,\alpha_1)$ represents $X_1$, where $\alpha_1(x)=\alpha(x)$ for $x\in U_1$. This means that 
$\dpt(X_1)\leq \dpt(T_1)\leq \dpt(X)$.  Because   $\dpt(X_j)+1\leq \dpt(X)$ for $j\in\{2,\ldots,s\}$,  
\begin{align*}
\dpt(X)\geq&\max\{\dpt(X_1),\dpt(X_2)+1,\ldots,\dpt(X_s)+1\}\\
\geq &\min_{1\leq i\leq s}\max\{\dpt(X_i),\max\{\dpt(X_j)\mid 1\leq j\leq s,~j\neq i\}+1 \}.
\end{align*}
This completes the proof.
\end{proof}

 It is sufficient for our purposes to characterize $\edphi{\edphione}(G)$ for connected graphs and we do it in the following lemma.

\begin{lemma}\label{lem_elim-dist-one} 
Let $\varphi$ be an FOL formula and let $G$ be a connected graph. 
Let also $d\geq 0$ be an  integer. Then $\edphi{\edphione}(G)\leq d$ if and only if $G$ contains an elimination set $X$ of depth at most $d-1$ such that $C\models \varphi$ for every component $C$ of $G-X$.
\end{lemma}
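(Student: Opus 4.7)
The plan is to prove both directions simultaneously by induction on $d$, with the base case $d=0$ being immediate from the definitions: $\edphi{\edphione}(G)\leq 0$ means $G\models\varphi$ (since $G$ is connected), which matches taking $X=\emptyset$ (the empty elimination set has depth $-1$) with the unique component $G$ itself satisfying $\varphi$. For the inductive step, assume the equivalence for all smaller depths. The case $G\models\varphi$ is trivial on both sides via $X=\emptyset$, so henceforth we may assume $G\not\models\varphi$.

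For the forward direction ($\Rightarrow$), suppose $\edphi{\edphione}(G)\leq d$ with $d\geq 1$. Since $G$ is connected and $G\not\models\varphi$, the recursive definition yields some $v\in V(G)$ with $\edphi{\edphione}(G-v)\leq d-1$. Let $C_1,\ldots,C_s$ be the components of $G-v$; by the disconnected case of the definition we obtain $\edphi{\edphione}(C_i)\leq d-1$ for every $i$. Each $C_i$ is connected, so the induction hypothesis produces an elimination set $X_i\subseteq V(C_i)$ of depth at most $d-2$, represented by some $(T_i,\alpha_i)$, with every component of $C_i-X_i$ modelling $\varphi$. I would then set $X=\{v\}\cup\bigcup_i X_i$ and form $T$ by taking $v$ as a new root and attaching each $T_i$ by making its root a child of $v$; the depth is at most $d-1$. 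Verifying the separator condition splits into two subcases: two incomparable nodes in the same $T_i$ have their lowest common ancestor inside $T_i$, where separation in $C_i$ already holds (and every $G$-path between them must stay within $C_i$ once $v$ is removed); incomparable nodes in different $T_i$'s have $v$ as their lowest common ancestor, and $\{v\}$ trivially separates distinct components of $G-v$. Finally, the components of $G-X$ are exactly the components of $\bigsqcup_i(C_i-X_i)$, all of which model $\varphi$.

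For the backward direction ($\Leftarrow$), suppose $X$ is an elimination set of depth at most $d-1$ with every component of $G-X$ modelling $\varphi$. Invoke Lemma~\ref{lem_nice} to assume $(T,\alpha)$ is a nice representation with $\dpt(T)\leq d-1$, and let $v=\alpha(r)$ where $r$ is the root. The niceness property (together with the uniqueness argument appearing inside the proof of Lemma~\ref{lem_nice}) guarantees that each component $C$ of $G-v$ that meets $X$ corresponds to exactly one child of $r$; thus for each component $C_i$ of $G-v$, the set $X_i:=X\cap V(C_i)$ is either empty or coincides with $\alpha(D_T(x_i))$ for a unique child $x_i$ of $r$, and is represented in $C_i$ by the subtree rooted at $x_i$ which has depth at most $d-2$. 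The separator condition restricted to this subtree carries over to $C_i$ because any ancestor set used in $T$ intersects $V(C_i)$ exactly in the corresponding subtree's ancestors (as $v\notin V(C_i)$). The components of $C_i-X_i$ are components of $G-X$, hence model $\varphi$, and the induction hypothesis yields $\edphi{\edphione}(C_i)\leq d-1$. Combining via the disconnected clause gives $\edphi{\edphione}(G-v)\leq d-1$, whence $\edphi{\edphione}(G)\leq d$.

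The main obstacle is the backward direction, specifically making rigorous use of the niceness property to argue that each component of $G-v$ corresponds to at most one subtree hanging off the root; without niceness, several children of $r$ could feed into the same component and the restricted substructure would be a forest rather than a tree, requiring an extra argument to bound its elimination depth. Lemma~\ref{lem_nice} is exactly what sidesteps this issue, so the proof hinges on invoking it cleanly at the start of the backward step.
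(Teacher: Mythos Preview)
Your proposal is correct and follows essentially the same approach as the paper's proof: both directions proceed by induction on $d$, the forward direction builds the representation by hanging the inductively obtained subtrees under a new root mapped to the deleted vertex, and the backward direction invokes Lemma~\ref{lem_nice} to pass to a nice representation and then peels off the root. The only differences are cosmetic---you run both directions in a single induction and treat all components of $G-v$ uniformly via the induction hypothesis, whereas the paper separates the directions and singles out the components not modelling $\varphi$; also, the injectivity of the map ``child of $r$ $\mapsto$ component of $G-v$'' that you attribute to the uniqueness argument inside Lemma~\ref{lem_nice} in fact follows directly from the separator condition in the definition of an elimination set (two children of $r$ have $r$ as lowest common ancestor, so $\{v\}$ separates their images), with niceness supplying only the well-definedness of that map.
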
 

\begin{proof}
First, we show that if $\edphi{\edphione}(G)\leq d$, then $G$ has an elimination set $X$ of depth at most $d-1$ such that $C\models \varphi$ for every component $C$ of $G-X$. The proof is by induction on $d$.
The claim is trivial if $\edphi{\edphione}(G)=0$ as $\dpt(\emptyset)=-1$ by the definition. Let $d\geq \edphi{\edphione}(G)\geq 1$.

Because $G$ is connected and $\edphi{\edphione}(G)>0$, there is $v\in V(G)$ such that $\edphi{\edphione}(G)=1+\edphi{\edphione}(G-v)$. We construct a node $r$ of $T$ and set it be the root. If $C\models\varphi$ for every component $C$ of $G-v$,  then the construction of $X$ and $T$ is completed and we define $\alpha(r)=v$. Otherwise, let $C_1,\ldots,C_s$ be the components of $G-v$ such that $C_i\not\models\varphi$ for $i\in\{1,\ldots,s\}$. 
Clearly, $\edphi{\edphione}(C_i)\leq d-1$ for $i\in\{1,\ldots,s\}$.  
Let $i\in\{1,\ldots,s\}$. By induction, there is  an elimination set $X_i\subseteq V(C_i)$ of depth  at most $d-2$ such that $H\models \varphi$ for every component $H$ of $C_i-X_i$. Then there is a corresponding 
representation $(T_i,\alpha_i)$ of $X_i$ in $C_i$. Let $r_i$ be the root of $T_i$.
We define $X=\{v\}\cup\bigcup_{i=1}^sX_i$, and construct $T$ from $T_1,\ldots,T_s$ by making $r_1,\ldots,r_s$ the children of $r$. Finally, we define 
\begin{equation*}  
\alpha(x)=
\begin{cases}
v,& \mbox{if }x=r,\\
\alpha_i(x),&\mbox{if }x\in V(C_i)\text{ for some }i\in\{1,\ldots,s\}.
\end{cases}
\end{equation*}
It is straightforward to verify that $X$ is an elimination set of depth at most $d-1$ with respect to $(T,\alpha)$.

For the opposite direction, we assume that $X$ is an elimination set of minimum depth  such that $C\models \varphi$ for every component $C$ of $G-X$.
We assume that  the depth of $X$ is $d-1$ 
 and prove that $\edphi{\edphione}{G}\leq d$. The proof is by induction on $d$.

The claim is trivial if $d=0$, that is, if $X=\emptyset$. 
Suppose that $d=1$, that is, the depth of an elimination set $X$ is zero and, therefore, $X=\{u\}$ for some $u\in V(G)$. We have that $C\models\varphi$ for every component $C$ of $G-u$. This means that $\edphi{\edphione}(C)=0$ for every component $C$ and, therefore, $\edphi{\edphione}(G-u)=0$.
Then because $\edphi{\edphione}(G)>0$, 
$\edphi{\edphione}(G)=1+\min_{v\in V(G)}\edphi{\edphione}(G-v)=1+\edphi{\edphione}(G-u)=1\leq d$. 

Suppose that $d\geq 2$ and the claim holds for the lesser values of $d$. 
Because $G$ is connected, by Lemma~\ref{lem_nice}, there is a nice represenation $(T,\alpha)$ of $X$ with $\dpt(T)=d-1$. 
Let $r$ be the root of $T$ and $u=\alpha(r)$. Because $\edphi{\edphione}(G)>0$, 
$\edphi{\edphione}(G)=1+\min_{v\in V(G)}\edphi{\edphione}(G-v)\leq 1+\edphi{\edphione}(G-u)$ and it is sufficient to show that $\edphi{\edphione}(G-u)\leq d-1$. 
For this, we have to prove that  $\edphi{\edphione}(C)\leq d-2$ for every component $C$ of 
 $G-u$.  
 
 If $V(C)\cap X=\emptyset$ for a component $C$, then $C$ is a component of $G-X$ and we have that $C\models \varphi$. Then  
$\edphi{\edphione}(C)=0\leq d-2$. Consider the components $C_1,\ldots,C_s$ of $G-u$ such that $V(C_i)\cap X\neq\emptyset$.  
Because $(T,\alpha)$ is nice, $r$ has $s$ children $x_1,\ldots,x_s$ such that for every $i\in\{1,\ldots,s\}$, $\alpha(V(T_i))\subseteq V(C_i)$, where $T_i$ is the subtree of $T$ rooted in $x_i$. Let $\alpha_i\colon V(T_i)\rightarrow V(C_i)$ be the restriction of $\alpha$ on $V(T_i)$ for $i\in\{1,\ldots,s\}$. Consider $i\in\{1,\ldots,s\}$. We have that $(T_i,\alpha_i)$ is a representation of $X_i=X\cap V(C_i)$. Notice that for each component $C$ of $C_i-X_i$, $C\models \varphi$. Clearly, $\dpt(T_i)< \dpt(T)$. This implies that we can use the inductive assumption  and conclude that $\edphi{\edphione}(C_i)\leq d-1$. Therefore, $\edphi{\edphione}(G-u)\leq d-1$ and this concludes the proof.
\end{proof}

To characterize $\edphi{\edphitwo}(G)$, we need additional definitions. 

Let $G$ be a connected graph and let $X\subseteq V(G)$ be an elimination set represented by $(T,\alpha)$. We say that a node $x\in V(T)$ is an \emph{anchor} of a component $C$ of $G-X$ if $x$ is the node  of maximum depth in $T$ such that $\alpha(x)\in N_G(V(C))$. We also say that $C$ is \emph{anchored} in $x$. Notice that the definition of an elimination set immediately implies the following property.

\begin{observation}\label{obs:anchor} 
Let $G$ be a connected graph and let $X\subseteq V(G)$ be an elimination set represented by $(T,\alpha)$. Then for every component $C$ of $G-X$, 
$N_G(V(C))\subseteq \alpha(A_T(x))$, where $x$ is an anchor of $C$.
\end{observation}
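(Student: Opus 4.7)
The plan is to verify this observation by a short case analysis on the position of a purported neighbor of $C$ relative to the anchor $x$ inside the tree $T$.

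First I would fix a vertex $u \in N_G(V(C))$ and observe that, because $C$ is a component of $G - X$, every vertex of $G$ adjacent to a vertex of $C$ and not itself in $V(C)$ must belong to $X$. Hence $u \in X$, so $u = \alpha(y)$ for a uniquely determined node $y \in V(T)$ (uniqueness uses the bijectivity of $\alpha$). The goal becomes showing $y \in A_T(x)$, that is, $y$ is an ancestor of $x$ in $T$.

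Next I would rule out the two alternatives by contradiction. If $y$ is a proper descendant of $x$, then $\dpt_T(y) > \dpt_T(x)$ while $\alpha(y) = u \in N_G(V(C))$, which contradicts the definition of $x$ as the node of maximum depth with $\alpha(x) \in N_G(V(C))$. Otherwise $y$ and $x$ are incomparable; let $v$ be their lowest common ancestor. Since $y \ne x$ and the two nodes are incomparable, $v$ is strictly above both, so $x, y \notin A_T(v)$, and since $\alpha$ is bijective this gives $\alpha(x), \alpha(y) \notin \alpha(A_T(v))$. By the definition of elimination set, $\alpha(A_T(v))$ is an $(\alpha(x),\alpha(y))$-separator in $G$. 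On the other hand, both $\alpha(x)$ and $\alpha(y)$ are adjacent to vertices in $V(C)$, and $C$ is connected, so there is a path from $\alpha(x)$ to $\alpha(y)$ whose internal vertices all lie in $V(C)$. Since $V(C) \cap X = \emptyset$, this path avoids $\alpha(A_T(v))$ entirely, contradicting the separator property.

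Combining the two impossibilities with the trichotomy (ancestor of $x$, proper descendant of $x$, or incomparable) forces $y \in A_T(x)$, so $u = \alpha(y) \in \alpha(A_T(x))$, which is the desired conclusion. I do not anticipate any real obstacle here: the only subtlety is remembering that $v$ lies strictly above both $x$ and $y$ in the incomparable case, which is precisely what makes the separator property applicable to $\alpha(x)$ and $\alpha(y)$ themselves rather than to some of their ancestors.
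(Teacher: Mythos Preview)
Your argument is correct and is precisely the natural unfolding of the definition that the paper leaves implicit: the paper does not supply a proof for this observation at all, stating only that ``the definition of an elimination set immediately implies'' it. Your case analysis (ancestor, proper descendant, incomparable) is the straightforward way to make this implication explicit, and every step is sound.
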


In particular, Observation~\ref{obs:anchor} implies that an anchor of each component of $G-X$ is unique. For a node $x\in V(T)$, we denote by $\property_x$ the set of components of $G-X$ anchored in $x$, and $G_x$ denotes the subgraph of $G$ induced by the vertices of the graphs of $\property_x$, that is, $G_x$ is the union of the components of $G-X$ anchored in $x$. Clearly, $\property_x$ and $G_x$ may be empty. Note that the anchors of the components of $G-X$ depend on the choice of a representation. Therefore, we use the above notation only when $(T,\alpha)$ is fixed and clear from the context.

 \begin{lemma}\label{lem_elim-dist-two} 
Let $\varphi$ be an FOL formula and let $G$ be a connected graph with $\edphi{\edphitwo}(G)>0$. Let also $d$ be a positive integer. Then $\edphi{\edphitwo}(G)\leq d$ if and only if $G$ contains an elimination set $X$ of depth at most $d-1$ with a representation $(T,\alpha)$ such that the following is fulfilled: 
\begin{itemize}
\item[(i)] for every nonleaf node $x\in V(T)$, $C\models \varphi$ for every  $C\in\property_x$,
\item[(ii)] for every leaf $x$ of $T$ with $\dpt_T(x)\leq d-2$, either $G_x\models \varphi$ or $C\models \varphi$ for every  $C\in\property_x$,
\item[(iii)] for every leaf $x$ of $T$ with $\dpt_T(x)=d-1$,  $G_x\models \varphi$.
\end{itemize}
\end{lemma}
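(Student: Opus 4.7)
The plan is to prove both implications by induction on $d$, in the spirit of the proof of Lemma~\ref{lem_elim-dist-one} but with additional care for the three conditions (i)--(iii) and for the notion of anchoring.

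For the forward direction, I would assume $\edphi{\edphitwo}(G)\leq d$ with $G$ connected and $\edphi{\edphitwo}(G)>0$, and select $v\in V(G)$ realizing $\edphi{\edphitwo}(G)=1+\edphi{\edphitwo}(G-v)$. The base case $d=1$ forces $G-v\models\varphi$, so I set $X=\{v\}$ with a single-node tree whose root $r$ has $\dpt_T(r)=0=d-1$; condition (iii) then holds as $G_r=G-v\models\varphi$. For $d\geq 2$, I distinguish whether $G-v\models\varphi$ (trivial leaf representation satisfying (ii)) or $G-v\not\models\varphi$. In the latter case I take the components $C_1,\ldots,C_s$ of $G-v$ and for each $C_i$ with $\edphi{\edphitwo}(C_i)>0$ invoke the inductive hypothesis on $C_i$ to obtain a representation $(T_i,\alpha_i)$ of some $X_i\subseteq V(C_i)$ satisfying the three conditions with parameter $d-1$, setting $X_i=\emptyset$ when $C_i\models\varphi$. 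I assemble $(T,\alpha)$ by making $r$ the root with $\alpha(r)=v$ and attaching each nonempty $T_i$'s root as a child of $r$. Validity of the representation uses that $\{v\}$ alone separates vertices from different $C_i$'s in $G$. The crucial observation for conditions (i)--(iii) is that $\property_r$ consists exactly of those $C_i$ with $X_i=\emptyset$---since any component of $G-X$ inside a $C_j$ with $X_j\neq\emptyset$ must be adjacent to $X_j$ (as $C_j$ is connected), hence anchored deeper. All such $C_i$ satisfy $\varphi$, validating (i) when $r$ is a nonleaf, and when $r$ is a leaf (all $X_i=\emptyset$) condition (ii) is satisfied via its second alternative. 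For $x\in V(T_i)$, the anchor of each component of $G-X$ lying in $V(C_i)$ is the same in $T$ and in $T_i$, so the inductive conditions transfer directly.

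For the backward direction, I would assume $(T,\alpha)$ satisfies (i)--(iii) with $\dpt(T)\leq d-1$, let $v=\alpha(r)$ for the root $r$, and argue that $\edphi{\edphitwo}(G)\leq d$. If $r$ is a leaf, then $X=\{v\}$; condition (iii) (when $d=1$) gives $G-v\models\varphi$, and condition (ii) (when $d\geq 2$) yields either $G-v\models\varphi$ or all components of $G-v$ satisfying $\varphi$. In each case one checks directly that $\edphi{\edphitwo}(G-v)\leq 1\leq d-1$, whence $\edphi{\edphitwo}(G)\leq d$. Otherwise $r$ is a nonleaf, and I apply Lemma~\ref{lem_nice} to replace $(T,\alpha)$ by a nice representation $(T^n,\alpha)$ with the same vertex labels and no larger depths; the construction in its proof in fact guarantees that for every component $C_j$ of $G-v$ with $X_j:=X\cap V(C_j)\neq\emptyset$, there is a unique child $c_j$ of $r$ in $T^n$ whose subtree covers $X_j$. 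After verifying that $(T^n,\alpha)$ inherits conditions (i)--(iii), the subtree $T^n_{c_j}$ represents $X_j$ in $C_j$, has depth at most $d-2$, and inherits the three conditions with parameter $d-1$; the inductive hypothesis then gives $\edphi{\edphitwo}(C_j)\leq d-1$. For $j$ with $X_j=\emptyset$, the component $C_j$ lies in $\property_r$ so condition (i) yields $C_j\models\varphi$ and $\edphi{\edphitwo}(C_j)=0$. Combining these bounds over the components of $G-v$ via the disconnected case of $\edphi{\edphitwo}$ gives $\edphi{\edphitwo}(G-v)\leq d-1$, and therefore $\edphi{\edphitwo}(G)\leq d$.

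The main technical obstacle I expect to face is the verification that conditions (i)--(iii) are preserved under the nice-ification of Lemma~\ref{lem_nice}. Since the tree structure changes while $\alpha$ is left intact, depths of individual nodes and the anchor of each component of $G-X$ could a priori shift, which might move a component between the $\property_x$ of different nodes or alter whether $G_x\models\varphi$ is needed. However, the construction in the proof of Lemma~\ref{lem_nice} restructures the tree in a way compatible with the recursive component decomposition: the unique-principal-subtree property at each internal node forces the anchor of each component of $G-X$ in the nice tree to coincide with its anchor under the recursive component-by-component analysis used in the forward direction. Establishing this coincidence carefully---together with the boundary case where a node that was nonleaf in $T$ becomes a leaf in $T^n$ (requiring us to switch from checking (i) to checking (ii))---constitutes the bulk of the formal work.
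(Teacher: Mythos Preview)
Your proposal is correct and follows essentially the same approach as the paper: both directions proceed by induction, peeling off the root vertex $v=\alpha(r)$ and recursing into the components of $G-v$, with Lemma~\ref{lem_nice} invoked in the backward direction to obtain a nice representation so that the subtrees under $r$ correspond to these components. The paper handles the technical obstacle you flag in a single sentence (``By the second claim of Lemma~\ref{lem_nice}, we can assume that $T$ is nice''), so your plan to verify explicitly that conditions (i)--(iii) survive nice-ification---using that leaves stay leaves, depths do not increase, and anchors are preserved because ancestry in $T'$ refines ancestry in $T$---is in fact more careful than the paper's own treatment.
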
 

\begin{proof}
The lemma is proved similarly Lemma~\ref{lem_elim-dist-one}. We begin with showing that if $\edphi{\edphitwo}(G)\leq d$, then $G$ has an elimination set $X$ of depth at most $d-1$ with a representation $(T,\alpha)$ such that conditions (i)--(iii) are fulfilled. For this, we inductively construct $X$ and $(T,\alpha)$ with $\dpt(T)\leq d-1$ using the definition of $\edphi{\edphitwo}(G)$. 

Since $G$ is connected and $\edphi{\edphitwo}(G)>0$, there is $v\in V(G)$ such that $\edphi{\edphitwo}(G)=1+\edphi{\edphione}(G-v)$. We construct a node $r$ of $T$ and set it be the root. If either $G-v\models\varphi$
or $C\models\varphi$ for every component $C$ of $G-v$,  then the construction of $X$ and $T$ is completed and we define $\alpha(r)=v$. 
Note that $\edphi{\edphitwo}(G)=1$  in the first case  and $\edphi{\edphitwo}(G)=2$  in the second. This implies that (i)--(iii) are fulfilled. Assume from now that this is not the case.

Denote by  $C_1,\ldots,C_s$  the components of $G-v$ such that $C_i\not\models\varphi$ for $i\in\{1,\ldots,s\}$. By definition,  $\edphi{\edphitwo}(C_i)\leq d-1$ for $i\in\{1,\ldots,s\}$.  Notice that for each component $C$ of $G-v$ distinct from $C_1,\ldots,C_s$, $C\models\varphi$.
Then we can assume inductively that for every $i\in\{1,\ldots,s\}$, there is  an elimination set $X_i\subseteq V(C_i)$ of depth  at most $d-2$ with respect to  $C_i$  with a representation  $(T_i,\alpha_i)$ such that conditions (i)--(iii) are fulfilled. Let $r_i$ be the root of $T_i$ for $i\in\{1,\ldots,s\}$. 
We define $X=\{v\}\cup\bigcup_{i=1}^sX_i$, and construct $T$ from $T_1,\ldots,T_s$ by making $r_1,\ldots,r_s$ the children of $r$. Then we set 
\begin{equation*}  
\alpha(x)=
\begin{cases}
v,& \mbox{if }x=r,\\
\alpha_i(x),&\mbox{if }x\in V(C_i)\text{ for some }i\in\{1,\ldots,s\}.
\end{cases}
\end{equation*}
Using the inductive assumptions that (i)--(iii) are fulfilled for  $X_i$ with $(T_i,\alpha_i)$ for every $i\in\{1,\ldots,s\}$ and the observation that $\property_r$ consists of the components of $G-v$ distinct from $C_1,\ldots,C_s$, we obtain that (i)--(iii) are fulfilled for $X$ and the representation $(T,\alpha)$.  

To show the implication in the opposite direction,  assume that $X$ is an elimination set of depth at most $d-1$ with a representation $(T,\alpha)$ satisfying (i)--(iii). By the second claim of Lemma~\ref{lem_nice}, we can assume that $T$ is nice. We show that $\edphi{\edphitwo}(G)\leq d$ by the induction on $\dpt(T)$.

Suppose that $\dpt(T)=0$, that is, the depth of an elimination set $X$ is zero and, therefore, $X=\{u\}$ for some $u\in V(G)$.
If $d=1$, then  $G_u\models \varphi$ and $\edphi{\edphitwo}(G)=1$. If $d\geq 2$, then either  $G_u\models \varphi$ or $C\models \varphi$ for every component $C$ of $G-u$. In both cases, $\edphi{\edphitwo}(G)\leq 2$ by the definition of $\edphi{\edphitwo}(G)$.

Assume that $\dpt(T)\geq 1$. In particular, $d\geq 2$. Since $G$ is connected and $\edphi{\edphitwo}(G)>0$, 
$\edphi{\edphitwo}(G)=1+\min_{v\in V(G)}\edphi{\edphitwo}(G-v)\leq 1+\edphi{\edphitwo}(G-u)$ and it is sufficient to show that $\edphi{\edphitwo}(G-u)\leq d-1$. 
If $G-u\models\varphi$, then $\edphi{\edphitwo}(G)=1\leq d$. Assume from now that $G-u\not\models\varphi$. Then, by the definition of $\edphi{\edphione}(G)$, it is sufficient to show  that  $\edphi{\edphitwo}(C)\leq d-2$ for every component $C$ of $G-u$.

If $V(C)\cap X=\emptyset$ for a component $C$ of $G-u$, then $C\in \property_r$ and  $C\models \varphi$. Then  
$\edphi{\edphitwo}(C)=0\leq d-2$. Consider the components $C_1,\ldots,C_s$ of $G-u$ such that $V(C_i)\cap X\neq\emptyset$.  
Because $(T,\alpha)$ is nice, $r$ has $s$ children $x_1,\ldots,x_s$ such that for every $i\in\{1,\ldots,s\}$, $\alpha(V(T_i))\subseteq V(C_i)$, where $T_i$ is the subtree of $T$ rooted in $x_i$. Let $\alpha_i\colon V(T_i)\rightarrow V(C_i)$ be the restriction of $\alpha$ on $V(T_i)$ for $i\in\{1,\ldots,s\}$. Consider $i\in\{1,\ldots,s\}$. We have that $(T_i,\alpha_i)$ is a representation of $X_i=X\cap V(C_i)$ satisfying (i)--(iii). 
Notice that $\dpt(T_i)< \dpt(T)$. Then by the inductive assumption  $\edphi{\edphitwo}(C_i)\leq d-1$. Therefore, $\edphi{\edphitwo}(G-u)\leq d-1$ and this concludes the proof.
\end{proof}
 
 Lemmas~\ref{lem_elim-dist-one} and~\ref{lem_elim-dist-two} demonstrate that $\edphi{\edphione}(G)$ and $\edphi{\edphitwo}(G)$, respectively, can be defined via the deletion of  an elimination set. We also use these results in order to define a third variant of the elimination distance.
 
 \begin{definition}[Elimination distance  $\edphi{\edphthree}$]
  Let $\varphi$ be a FOL formula.
 For a graph $G$, $\edphi{\edphthree}(G)$ is the minimum $d$ such that $G$ has an elimination set $X\subseteq V(G)$ of depth $d-1$ such that $G-X\models\varphi$. 
 \end{definition}
 Notice that if the trees in the considered representations  of elimination sets are constrained  to be paths, then we obtain the classical \emph{deletion distance}, that is, the minimum size of a set $X\subseteq V(G)$ such that $G-X\models \varphi$. 
 
Given a FOL formula $\varphi$,  we define the following three variants of the \textsc{Elimination Distance} problems for $\star\in\{\edphione,\edphitwo,\edphthree\}$:

\defparproblema{\probED{\star}}{A graph $G$ and a nonnegative integer $k$.}{$k$}{Decide whether $\edphi{\star}(G)\leq k$.} 
 
 These problems may be seen as generalizations of \probDEL problem for a formula $\varphi$ that asks, given a graph $G$ and a nonnegative integer $k$, whether there is a set $S$ of size at most $k$ such that $G-S\models\varphi$. In particular, Observation~\ref{obs:anchor} implies the following.
 
 \begin{observation}\label{obs:deletion}
 \probDEL and \probED{\star} for $\star\in\{\edphione,\edphitwo,\edphthree\}$ are equivalent on instances $(G,k)$, where $G$ is a $(k+1)$-connected graph. 
 \end{observation}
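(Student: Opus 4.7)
The plan is to observe that in a $(k+1)$-connected graph $G$, any elimination set $X$ of depth at most $k-1$ has a representation $(T,\alpha)$ in which $T$ is a path. Indeed, if some node $v$ of $T$ had two distinct children $x_1,x_2$, then $\alpha(A_T(v))$ would be an $(\alpha(x_1),\alpha(x_2))$-separator in $G$ of size at most $\dpt_T(v)+1\leq k$, contradicting $(k+1)$-connectivity. Therefore $|X|=|V(T)|\leq k$, and $G-X$ is connected. Furthermore, the minimum degree in a $(k+1)$-connected graph is at least $k$, so every vertex of $X$ has a neighbour in $V(G)\setminus X$; in particular the unique leaf $x^*$ of $T$ satisfies $\alpha(x^*)\in N_G(V(G-X))$, and by Observation~\ref{obs:anchor} the sole component of $G-X$ is anchored at $x^*$. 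Hence $\property_{x^*}=\{G-X\}$, $G_{x^*}=G-X$, and $\property_x=\emptyset$ for $x\neq x^*$.

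With this structure in place, each variant of \probED{\star} reduces cleanly to \probDEL. For \probED{\edphthree} the definition immediately gives $G-X\models\varphi$. For \probED{\edphione} the case $\edphi{\edphione}(G)=0$ is trivial ($S=\emptyset$ works); otherwise Lemma~\ref{lem_elim-dist-one} together with the connectivity of $G-X$ yields $G-X\models\varphi$. For \probED{\edphitwo} the case $\edphi{\edphitwo}(G)=0$ is likewise trivial; otherwise Lemma~\ref{lem_elim-dist-two} applies, and conditions (i)--(iii) collapse to $G-X\models\varphi$, since (i) is vacuous at non-leaf nodes and both (ii) and (iii) at $x^*$ say $G_{x^*}=G-X\models\varphi$. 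Hence every \yes-instance of \probED{\star} is a \yes-instance of \probDEL.

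For the converse, given $S\subseteq V(G)$ with $|S|\leq k$ and $G-S\models\varphi$, I would take $T$ to be any path on $|S|$ nodes with $\alpha$ an arbitrary bijection onto $S$, and set $X:=S$. Any two nodes of a path are comparable, so the separator condition in the definition of elimination sets is vacuous; therefore $X$ is an elimination set of depth $|S|-1\leq k-1$. The three bounds $\edphi{\edphione}(G),\edphi{\edphitwo}(G),\edphi{\edphthree}(G)\leq k$ then follow directly: for $\edphi{\edphthree}$ by definition; for $\edphi{\edphione}$ because $G-X$ is connected (or empty) and satisfies $\varphi$; and for $\edphi{\edphitwo}$ by the same verification of the conditions of Lemma~\ref{lem_elim-dist-two} as in the forward direction. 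The only conceptually novel step throughout is the path-structure argument for the representation tree; everything else is routine bookkeeping against Lemmas~\ref{lem_elim-dist-one} and~\ref{lem_elim-dist-two}.
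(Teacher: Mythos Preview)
Your proof is correct and, since the paper only states that Observation~\ref{obs:anchor} implies the result without spelling out any details, your argument is in fact more complete than the paper's. The underlying idea is the same---$(k+1)$-connectivity forbids the small separators inherent in any nontrivial elimination structure---but you take a slightly different route: you argue directly from the definition of an elimination set that the representation tree $T$ must be a path (no node can have two children), whence $|X|\leq k$ and $G-X$ is connected, whereas the paper's one-line justification via Observation~\ref{obs:anchor} presumably goes through the bound $|N_G(V(C))|\leq k$ for a component $C$ of $G-X$, which together with $(k+1)$-connectivity forces $X=N_G(V(C))$. Both arguments arrive at $|X|\leq k$ and a single component $G-X$, after which the verification against Lemmas~\ref{lem_elim-dist-one} and~\ref{lem_elim-dist-two} is routine, exactly as you do it. One small point worth making explicit: when you invoke Lemma~\ref{lem_elim-dist-two} in the forward direction for $\edphi{\edphitwo}$, you use that the representation $(T,\alpha)$ supplied by that lemma also has $\dpt(T)\leq k-1$ (so that your path argument applies to it); this is indeed how the lemma is stated and proved, but it is implicit rather than stated outright in its formulation.
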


\section{An FPT algorithm for $\Sigma_3$-formulas}\label{sec_FPT}
In this section, we show the main algorithmic result, Theorem~\ref{thm_algorithm_informal},  that \probED{\star} is \classFPT for formulas from $\Sigma_3$. Now we state this theorem formally. 

\medskip\noindent\textbf{Theorem~1.}\emph{
For every FOL formula $\varphi\in\Sigma_3$, $\probED{\star}$ can be solved in $f(k)\cdot n^{\Oh(|\varphi|)}$ time for each $\star\in\{\edphione,\edphitwo,\edphthree\}$. }

 \medskip
 
We prove the theorem using the \emph{recursive understanding} technique introduced by Chitnis et al.~\cite{ChitnisCHPP16}. It was recently demonstrated by Agrawal et al.~\cite{SridharanPASK21} that this approach is useful for elimination problems. 
 As we are interested in the quality result, we apply the meta theorem of Lokshtanov et al.~\cite{LokshtanovR0Z18} (see the arxiv version~\cite{LokshtanovR0Z18a} for more details). This simplifies the arguments, but makes the proof nonconstructive. Moreover, we only show the existence of nonuniform FPT algorithms. 
However, it is possible to show the theorem in constructive way by giving uniform algorithm by either using the original approach of Chitnis et al.~\cite{ChitnisCHPP16} or the dynamic programming scheme proposed by Cygan et al.~\cite{CyganLPPS19}.

The remaining part of the section contains the proof of Theorem~\ref{thm_algorithm_informal}. In Subsection~\ref{sec_recursive}, we introduce the notation and provide auxiliary results needed to apply the recursive understanding technique, and in Subsection~\ref{sec_unbreak}, we prove that the \probED{\star} is \classFPT for the key case when the input graphs cannot be partitioned in big parts by separators of bounded size.   
 
\subsection{Recursive understanding}\label{sec_recursive}
Let $G$ be a graph. A pair $(A,B)$, where $A,B\subseteq V(G)$ and $A\cup B=V(G)$, is called a \emph{separtion} of $G$ if there is no edge $uv$ with $u\in A\setminus B$ and $v\in B\setminus A$.   In other words, $A\cap B$ is a $(u,v)$-separator for every $u\in A\setminus B$ and $v\in B\setminus A$.
The \emph{order} of $(A,B)$ is $|A\cap B|$.

Let $p,q$ be positive integers. A graph $G$ is said to be \emph{$(p,q)$-unbreakable} if for every separation $(A,B)$ of $G$ of order at most $q$, either $|A\setminus B|\leq p$ or $|B\setminus A|\leq p$, that is, $G$ has no separator of size at most $q$ that partitions the graph into two parts of size at least $p+1$ each.

We state the restricted variant of the meta theorem of  Lokshtanov et al.~\cite{LokshtanovR0Z18}. Lokshtanov et al. proved the theorem  for structures and counting monadic second-order logic.  For us, it is sufficient to state the theorem for graphs and MSOL.

\begin{theorem}[{\cite[Theorem~1]{LokshtanovR0Z18}} ]\label{thm_meta}
Let $\psi$ be a MSOL formula. For all $q\in \mathbb{N}$, there exists $p\in\mathbb{N}$ such that if there exists an algorithm that solves  \probMC for $\psi$ on $(p,q)$-unbreakable graphs in $\Oh(n^d)$ time for some $d\geq 4$, then \probMC for $\psi$ can be solved on general graphs in $\Oh(n^d)$ time.  
\end{theorem}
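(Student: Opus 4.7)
The plan is to prove Theorem~\ref{thm_meta} by the \emph{protrusion replacement} technique. Let $r$ denote the quantifier rank of $\psi$. The first ingredient is that, for every $q$, MSOL equivalence on $q$-boundaried graphs (i.e.\ graphs carrying a distinguished $q$-tuple of boundary vertices) restricted to formulas of quantifier rank at most $r$ has finite index; its equivalence classes are called \emph{$(r,q)$-types}. This is a standard consequence of Ehrenfeucht--Fra\"iss\'e games adapted to MSOL. For each $(r,q)$-type $\tau$ I fix a canonical representative $H_{\tau}$ of minimum size, and let $M=\max_{\tau}|V(H_{\tau})|$, a finite quantity depending only on $r$ and $q$. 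I then set $p=M+q+1$.

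Next I would establish a \emph{replacement lemma}: if $(A,B)$ is a separation of $G$ of order at most $q$, and $H$ is a $q$-boundaried graph whose $(r,q)$-type equals that of $(G[A],A\cap B)$, then the graph $G'$ obtained from $G$ by replacing the subgraph on $A$ by $H$ (identifying boundary vertices) satisfies $G'\models\psi$ iff $G\models\psi$. This follows by a Feferman--Vaught style composition argument: Duplicator's winning strategy in the MSOL Ehrenfeucht--Fra\"iss\'e game on $(G[A],A\cap B)$ versus $(H,\partial H)$ lifts verbatim to a winning strategy on $G$ versus $G'$, since boundary identifications preserve the game relation.

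With these tools the algorithm is natural. Given $G$: if $G$ is $(p,q)$-unbreakable, invoke the hypothesis in time $O(n^d)$. Otherwise find a separation $(A,B)$ of order at most $q$ with $|A\setminus B|,|B\setminus A|>p$; take the smaller side, compute its $(r,q)$-type (by brute enumeration, since there are finitely many types and the type depends only on the piece up to isomorphism), and substitute the canonical $H_{\tau}$. Because $|V(H_{\tau})|\leq M<p$, the total number of vertices strictly drops, so iterating terminates after at most $n$ rounds with a $(p,q)$-unbreakable graph to which the hypothesis algorithm applies. Correctness is immediate from the replacement lemma, applied inductively.

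The main obstacle is controlling the total running time. Locating a separation of order at most $q$ can be done in $n^{q+O(1)}$ time via enumeration plus min-cut computations, and a naive iteration would add a factor of $n$ on top. To fit within $O(n^d)$ for the given $d\geq 4$, I would replace the iterative strategy by a single \emph{decomposition phase} that produces a rooted tree of $(p,q)$-unbreakable pieces connected through $q$-separators (with the $q$-dependence absorbed into the implicit constants), followed by a bottom-up evaluation in which the hypothesis is applied to each piece and the resulting types are composed along the tree. Since the sizes of the pieces sum to $O(n)$ and the hypothesis runs in $O(|V(\cdot)|^d)$ per piece, convexity of $n\mapsto n^{d}$ for $d\geq 1$ yields the claimed $O(n^d)$ overall bound; the constraint $d\geq 4$ comes precisely from ensuring that the polynomial-time decomposition work is dominated by the unbreakable solves.
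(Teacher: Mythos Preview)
The paper does not prove Theorem~\ref{thm_meta}; it is quoted verbatim from~\cite{LokshtanovR0Z18} and used as a black box, so there is no in-paper proof to compare against.

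Your sketch follows the broad shape of the recursive-understanding method (decompose into unbreakable pieces, replace protrusions by bounded-size representatives, compose bottom-up), but it has a genuine gap. The hypothesis hands you an $O(n^d)$ algorithm that decides the \emph{single} sentence $\psi$ on $(p,q)$-unbreakable graphs. A Feferman--Vaught style bottom-up pass, however, needs the full $(r,q)$-type of each piece, not merely whether it satisfies $\psi$. Your line ``compute its $(r,q)$-type (by brute enumeration, since there are finitely many types)'' does not work: the smaller side of the separation may still have $\Theta(n)$ vertices, and determining its type means model-checking every rank-$r$ MSOL sentence on it---exactly the problem you are trying to solve, on a graph with no unbreakability guarantee. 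The same issue reappears in your tree-based variant: ``the hypothesis is applied to each piece and the resulting types are composed'' conflates deciding $\psi$ with computing the type; the hypothesis supplies only the former.

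This is precisely the obstacle that makes~\cite{LokshtanovR0Z18} nontrivial, and it is why the present paper notes that invoking the meta-theorem renders the overall argument nonconstructive and nonuniform. The actual proof in~\cite{LokshtanovR0Z18} circumvents direct type computation by a more delicate recursive replacement argument; closing this circularity is the real content of the theorem, and your sketch does not address it.
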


It is crucial that the considered problems may be expressed in MSOL.

\begin{lemma}\label{lem_expr}
For every FOL formula $\varphi$, every $\star\in\{\edphione,\edphitwo,\edphthree\}$, and every integer $k\geq 0$, there is a MSOL formula $\psi_k^{\star}$ such that for each graph $G$, $G\models \psi_k^{\star}$ if and only if $\edphi{\star}(G)\leq k$. 
\end{lemma}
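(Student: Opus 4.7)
The plan is to translate the recursive definitions of $\edphi{\edphione}$, $\edphi{\edphitwo}$, and $\edphi{\edphthree}$ (using \autoref{lem_depth-disconn} in the third case) into MSOL formulas, exploiting the fact that MSOL quantifies over vertex subsets and can therefore simulate recursion on induced subgraphs. Three standard ingredients set the stage. First, for any FOL formula $\chi$, the property ``$G[A]\models\chi$'' is expressible as an MSOL (in fact FOL) formula with free vertex-set variable $A$, via the \emph{relativization} that replaces $\exists x\,\theta$ by $\exists x(x\in A\wedge\theta)$ and $\forall x\,\theta$ by $\forall x(x\in A\to\theta)$. Second, the predicates ``$G[A]$ is connected'' and ``$B$ is the vertex set of a connected component of $G[A]$'' are classical MSOL-expressible relations of their free vertex-set variables. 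Third, Boolean set operations like $X\cap B$ and $A\setminus\{v\}$ are MSOL-definable.

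For $\star\in\{\edphione,\edphitwo\}$, I will define by induction on $d\geq 0$ two MSOL formulas $\chi_d^{\star}(A)$ and $\chi_d^{\star,\mathrm{conn}}(A)$ such that $G\models\chi_d^{\star}(A)$ iff $\edphi{\star}(G[A])\leq d$ (and $\chi_d^{\star,\mathrm{conn}}$ is the restriction to connected $G[A]$). The clauses mirror the recursive definition of $\edphi{\star}$: $\chi_d^{\edphione,\mathrm{conn}}(A)$ says $G[A]\models\varphi$ or, when $d\geq 1$, $\exists v\in A : \chi_{d-1}^{\edphione}(A\setminus\{v\})$, while $\chi_d^{\edphione}(A)$ asserts $\chi_d^{\edphione,\mathrm{conn}}(B)$ for every component $B$ of $G[A]$. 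The $\edphitwo$ case is analogous, with the extra disjunct $G[A]\models\varphi$ in $\chi_d^{\edphitwo}$ (and the proviso $d\geq 1$ in the disconnected branch) to match the ``$\max\{1,\cdot\}$'' in the definition. Each passage from the general formula through its connected counterpart and back decreases $d$ by one, so the unfolding bottoms out and yields a finite formula $\psi_k^{\star}:=\chi_k^{\star}(V(G))$ whose size depends only on $k$ and $|\varphi|$.

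For $\star=\edphthree$, I first express ``$X$ is an elimination set of depth at most $d$ in $G[A]$'' by an MSOL formula $\mathrm{Elim}_d(X,A)$ defined inductively. The base $d=-1$ says $X=\emptyset$. The connected clause says $X=\emptyset$ or there is $v\in X$ with $\mathrm{Elim}_{d-1}(X\cap B,B)$ holding for every component $B$ of $G[A\setminus\{v\}]$, directly mirroring the role of the root in a nice representation (\autoref{lem_nice}). The disconnected clause invokes \autoref{lem_depth-disconn}: there is a ``primary'' component $C^{*}$ of $G[A]$ for which $\mathrm{Elim}_d(X\cap C^{*},C^{*})$ holds, and every other component $C'$ satisfies $\mathrm{Elim}_{d-1}(X\cap C',C')$. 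Since components are connected, the disconnected branch immediately feeds the connected branch at the next step, which decrements $d$, keeping the recursion finite. Finally, $\psi_k^{\edphthree}:=\exists X\,(\mathrm{Elim}_{k-1}(X,V(G))\wedge \chi_{\varphi}(V(G)\setminus X))$, where $\chi_{\varphi}$ is the relativization of $\varphi$ to the complement of $X$.

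The principal obstacle is controlling the recursion on $d$ when the disconnected branch does not itself decrement the parameter; a naive transcription risks an unbounded unfolding. The remedy, used uniformly in $\chi_d^{\edphitwo}$ and in $\mathrm{Elim}_d$, is to split every level into a ``general'' and a ``connected'' variant: the disconnected branch dispatches (at the same level $d$) to the connected-case formula on each component, and only the connected case decrements $d$, after which connectivity may again be lost. Since this toggling happens at most once per value of $d$, the nesting depth of the unfolded MSOL formula is $O(k)$, and this finite construction delivers the claimed family $\{\psi_k^{\star}\}_{k\geq 0}$.
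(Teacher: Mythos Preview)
Your proposal is correct and follows essentially the same approach as the paper: both unfold the recursive definitions of $\edphi{\edphione}$ and $\edphi{\edphitwo}$ directly into MSOL (using relativization of $\varphi$ to vertex subsets and the MSOL-expressibility of ``is a component''), and both handle $\edphi{\edphthree}$ by expressing ``$X$ is an elimination set of depth $\le d$'' via \autoref{lem_nice} for the connected case and \autoref{lem_depth-disconn} for the disconnected case, before existentially quantifying over $X$ and asserting $\varphi$ on the complement. Your explicit split of every level into a ``general'' and a ``connected'' variant is slightly more careful than the paper's presentation (where the self-reference $\tilde{\xi}_d(X\cap Y,Y)$ on a component $Y$ is implicitly resolved by the connected-case formula $\xi_d$), but the underlying construction is the same.
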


\begin{proof}
We use capital letter to write vertex set variables and small letters are used for vertex variables. To simplify notation, we introduce some auxiliary formulas. 
Notice that we can express that $Z=X\cap Y$ in MSOL and we write  $X\cap Y$ for such an expression. Similarly, we write $X-Y$ to express that $Z=X\setminus Y$, and we write $X-y$ for $X\setminus\{y\}$. Also $\overline{X}$ is used for the complement of $X$.
It is well-known that the connectivity property can be expressed in MSOL, because of the following observation: a set $X\subseteq V(G)$ induces a connected subgraph of $G$ if and only if for every partition $(U,W)$ of $X$, there is an edge $uw\in E(G)$ such that $u\in U$ and $w\in W$. Then we can observe that for every $X\subseteq V(G)$, $G[X]$ is a component of $G$ if and  only if $X$ induces a connected subgraph but for every $v\in V(G)\setminus X$, $G[X\cup\{v\}]$ is not a connected graph. 
This 
allows us to use the MSOL formula $\comp(X)$ with a free variable $X$ expressing the property that $X$ induces  a component.  Clearly, every FOL formula is a MSOL formula. In particular, this means that we can construct the MSOL formula $\varphi(X)$  for a free variable $X$ expressing the property that the subgraph induced by $X$ models $\varphi$.  

First, we show the lemma for $\star\in\{\edphione,\edphitwo\}$ using the definitions. 
For this, we inductively  construct  $\psi_k^{\edphione}$ and $\psi_k^{\edphitwo}$. 

It is easy to see that for $k=0$, $\psi_0^{\edphione}=\forall X~\comp(X)\rightarrow \varphi(X)$. 

Now let $k\geq 1$ and assume that $\psi_{k-1}^{\edphione}$ is constructed. Then we can define the MSOL formula $\psi_{k-1}^{\edphione}(X)$ for 
a free variable $X$ expressing the property that the subgraph induced by $X$ models $\psi_{k-1}^{\edphione}$. Then it is straightforward to verify that 
\begin{equation*}
\psi_k^{\edphione}=\psi_{k-1}^{\edphione}\vee(\forall X~\comp(X)\rightarrow (\exists x (x\in X)\wedge \psi_{k-1}^{\edphione}(X-x))).
\end{equation*}

Next, we construct $\psi_k^{\edphitwo}$ for $k\geq 0$.
It is straightforward to see that $\psi_0^{\edphitwo}=\varphi$ and
\begin{equation*}
\psi_1^{\edphitwo}=\psi_0^{\edphitwo}\vee\big(\forall X~\comp(X)\rightarrow (\psi_{0}^{\edphitwo}(X)\vee(\exists x~(x\in X) \wedge \psi_{0}^{\edphitwo}(X-x)))\big).
\end{equation*}
Then for $k\geq 2$,
\begin{equation*}
\psi_k^{\edphitwo}=\psi_{k-1}^{\edphitwo}\vee\big(\forall X~\comp(X)\rightarrow (\exists x~ (x\in X) \wedge \psi_{k-1}^{\edphitwo}(X-x))
\big),
\end{equation*}
where $\psi_{k-1}^{\edphitwo}(X)$ for 
a free variable $X$ expresses the property that the subgraph induced by $X$ models $\psi_{k-1}^{\edphitwo}$.

Finally, we prove the claim for $\psi_k^{\edphthree}$. Here, the proof is more complicated and uses Lemmas~\ref{lem_nice} and \ref{lem_depth-disconn}. We express the property that $X$ is an elimination set of set at most $d$. 

By Lemma~\ref{lem_nice}, if $G$ is a connected graph and $d\geq 0$, then $\dpt(X)\leq d$ if and only if $X$ has a nice representation of depth at most $d$. For a free variable $X$ and an integer $d\geq -1$, we define the formula $\xi_d(X)$ expressing that $X$ has a nice representation $(T,\alpha)$ of depth at most $d$. 
For $d=-1$, $\xi_d(X)=(X=\emptyset)$, and for $d=0$, $\xi_d(X)=(|X|=1)$ by the definition (clearly, the property $|X|=1$ can be expressed in MSOL). Assume that $d\geq 1$ and $\xi_{d-1}(X)$ is already constructed. Additionally, we assume that we are given the formula $\xi_{d-1}(X,Y)$ which expresses the property that $X$ 
has a nice representation  of depth at most $d-1$ in the subgraph induced by $Y$. For this, we observe that  $\xi_{d-1}(X,Y)$ can be constructed from  $\xi_{d-1}(X)$ in a straightforward way. Also we use $\comp(Y,x)$ to denote the formula expressing that $Y$ induces a component of the subgraph obtained by the deletion of $x$.
Then
\begin{equation*}
\xi_d(X)=\xi_{d-1}(X)\vee\big(\exists x~(x\in X)\wedge(\forall Y~(\comp(Y,x)\wedge (X\cap Y\neq\emptyset))\rightarrow \xi_{d-1}(X\cap Y,Y))
\big).
\end{equation*}
To see this, it is sufficient to observe that $\exists x~(x\in X)\wedge(\forall Y~(\comp(Y,x)\wedge (X\cap Y\neq\emptyset))\rightarrow \xi_{d-1}(X\cap Y,Y))$ expresses that $G$ has a vertex $x\in X$ such that for the root $r$ of $T$, $x=\alpha(r)$, and in each component $C$ of $G-x$ containing some vertices of $X$, there is a subtree of $T$ of depth at most $d-1$ that can be used to represent $V(C)\cap X$ in $C$.  

Now we construct the formula $\tilde{\xi}_d$ which expresses that $X$ is an elimination set of depth at most $d$ using Lemma~\ref{lem_depth-disconn}. It is easy to see that $\tilde{\xi}_{-1}(X)=(X=\emptyset)$ and $\tilde{\xi}_d(X)=(|X|=1)$. Assume that $d\geq 1$, $\tilde{\xi}_{d-1}(X)$ is already constructed, and we have a formula $\tilde{\xi}_{d-1}(X,Y)$ expressing that  the depth of $X$ is at most $d-1$
 in the subgraph induced by $Y$. Then by Lemma~\ref{lem_depth-disconn}, 
 \begin{align*}
\tilde{\xi}_d(X)=&\tilde{\xi}_{d-1}(X)\\
\vee& (\exists Y~\comp(Y)\wedge \tilde{\xi}_{d}(X\cap Y,Y)\wedge (\forall Z~(\comp(Z)\wedge (Z\neq Y))\rightarrow \tilde{\xi}_{d-1}(X\cap Z,Z) ) ).
\end{align*}

Using $\tilde{\xi}_d$ for $d\geq -1$, we can write $\psi_k^{\edphthree}$ for $k\geq 0$ as follows
\begin{equation*} 
\psi_k^{\edphthree}=\exists X~\tilde{\xi}_{k-1}\wedge{\varphi}(\overline{X}). 
\end{equation*}
This completes the proof. 
\end{proof}

Theorem~\ref{thm_meta} and Lemma~\ref{lem_expr} allows us to reduce the proof of Theorem~\ref{thm_algorithm_informal} to  solving \probED{\star} for $\star\in\{\edphione,\edphitwo,\edphthree\}$ on unbreakable graphs. For this, we show that any elimination set in an unbreakable graph has bounded size.

\begin{lemma}\label{lem_es_bound}
Let $G$ be a $(p,q)$-unbreakable graph for positive integers $p$ and $q$ with $|V(G)|> (3p+2q)(p+1)$. Let also $X\subseteq V(G)$ 
 be an elimination set of depth at most $d\leq q-1$. Then $|X|\leq p+q$. Furthermore,  there is a unique component $C$ of $G-X$ with at least $p+1$ vertices and $|V(G)\setminus N_G[V(C)]|\leq p$.   
\end{lemma}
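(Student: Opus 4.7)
The plan is to locate a spine inside any nice representation of $X$ that follows, at each level, the unique large component of $G$ minus the ancestors accumulated so far; every vertex of $X$ off this spine will then lie on the small side of a single separation of bounded order, while the terminal large component will itself become the desired $C$.

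As a preliminary, I reduce to the case that $G$ is connected. Since $|V(G)|>2p$, the $(p,q)$-unbreakability of $G$ applied to the trivial separations into connected components forces a unique component $C_0$ of size greater than $p$, while $V(G)\setminus V(C_0)$ has at most $p$ vertices. A routine check shows that $C_0$ inherits $(p,q)$-unbreakability (by extending any separation of $C_0$ with the small part), and Lemma~\ref{lem_depth-disconn} together with the size bound on the small components lets me restrict the analysis to $X\cap V(C_0)$.

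Assume $G$ is connected and, by Lemma~\ref{lem_nice}, fix a nice representation $(T,\alpha)$ of $X$ of depth at most $d$. I inductively define a spine $x_0=r,x_1,\ldots,x_\ell$ in $T$. Having defined $x_i$, let $A_i=\alpha(A_T(x_i))$, a set of at most $d+1\leq q$ vertices. Since $|V(G)|-|A_i|>2p$, the $(p,q)$-unbreakability of $G$ forces $G-A_i$ to have a unique component $C_i^*$ of size exceeding $p$, and the remaining components of $G-A_i$ together contain at most $p$ vertices. If $x_i$ has a child $y$ in $T$ with $\alpha(D_T(y))\subseteq V(C_i^*)$ (at most one such $y$ by niceness) I put $x_{i+1}=y$; otherwise I halt. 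Because $\alpha(x_{i+1})\in V(C_i^*)$, the component $C_{i+1}^*$ must be a piece of $C_i^*-\alpha(x_{i+1})$, giving the nested containment $V(C_\ell^*)\subseteq V(C_{\ell-1}^*)\subseteq\cdots\subseteq V(C_0^*)$.

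Any $v\in X\setminus A_\ell$ has the form $\alpha(z)$ for some $z$ in a subtree that branched off the spine at some level $i\leq\ell$, and niceness places $v$ in a component of $G-A_i$ different from $C_i^*$; by the nested containment this puts $v$ outside $V(C_\ell^*)$. Hence $X\setminus A_\ell\subseteq V(G)\setminus V(C_\ell^*)\setminus A_\ell$, and applying unbreakability to the separation $(V(C_\ell^*)\cup A_\ell,\,V(G)\setminus V(C_\ell^*))$ of order $|A_\ell|\leq q$ yields $|V(G)\setminus V(C_\ell^*)\setminus A_\ell|\leq p$, and so $|X|\leq|A_\ell|+p\leq p+q$. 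Setting $C:=C_\ell^*$, the fact that $X\setminus A_\ell$ lies outside $V(C)$ makes $C$ a component of $G-X$ of size greater than $p$, and all other components of $G-X$ sit inside the size-$p$ set $V(G)\setminus V(C)\setminus A_\ell$, yielding uniqueness. Finally, $N_G(V(C))\subseteq A_\ell$ because $C$ is a component of $G-A_\ell$, so the separation $(V(C)\cup N_G(V(C)),\,V(G)\setminus V(C))$ has order at most $|A_\ell|\leq q$, and $(p,q)$-unbreakability gives $|V(G)\setminus N_G[V(C)]|\leq p$. The chief obstacle is verifying the nested containment $V(C_\ell^*)\subseteq\cdots\subseteq V(C_0^*)$, which is what concentrates all off-spine mass of $X$ into a single $q$-bounded barrier and drives both the size bound on $X$ and the closed-neighborhood bound on $C$.
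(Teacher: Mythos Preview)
Your spine argument is a genuinely different route from the paper's proof and is more direct: the paper first proves a weak bound $|X|\le 3p+2q$ via a centroid of $T$, then refines to $p+q$ by a pigeonhole over leaves, and only afterwards locates the big component; you instead build the big component $C=C_\ell^*$ constructively while descending $T$, and the bound $|X|\le p+q$ falls out of a single application of unbreakability at the end. The nested containment $V(C_\ell^*)\subseteq\cdots\subseteq V(C_0^*)$ is correct (since $C_{i+1}^*$ is the big piece of $C_i^*-\alpha(x_{i+1})$), and the off-spine analysis using niceness is sound.

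There are, however, two points that need repair. First, the claim that $|V(G)|-|A_i|>2p$ alone forces $G-A_i$ to have a component of size $>p$ is false: with components of sizes $p,p,1$ no separation of $G-A_i$ has both sides exceeding $p$. The standard greedy grouping only yields a contradiction once $|V(G)|-|A_i|\ge 3p+1$; this is of course amply implied by the hypothesis $|V(G)|>(3p+2q)(p+1)$, so you should simply cite that bound instead.

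Second, your reduction to the connected case is lossy as written. Proving the lemma for $C_0$ and $X_0=X\cap V(C_0)$ gives $|X_0|\le p+q$, but then $|X|\le |X_0|+|V(G)\setminus V(C_0)|\le 2p+q$, not $p+q$. The fix is either to skip the reduction entirely (a nice representation of $X$ in $G$ can be assembled from nice representations in each component via Lemma~\ref{lem_depth-disconn}, and your spine will then stay inside $C_0$ automatically), or to run the spine inside $C_0$ but apply the final unbreakability step to $G$ rather than $C_0$: since $C_0$ is a full component of $G$, one has $N_G(V(C_\ell^*))=N_{C_0}(V(C_\ell^*))\subseteq A_\ell$, so $(V(C_\ell^*)\cup A_\ell,\,V(G)\setminus V(C_\ell^*))$ is a separation of $G$ of order at most $q$, and now $X\setminus A_\ell\subseteq V(G)\setminus V(C_\ell^*)\setminus A_\ell$ (including the part of $X$ in the small components of $G$) has size at most $p$.
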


\begin{proof}
Let $(T,\alpha)$ be a representation of $X$ with $\dpt(T)\leq d$. Denote by $r$ the root of $T$.

First, we show the weaker bound $|X|\leq 3p+2q$. 

For the sake of contradiction, assume that $|X|\geq 3p+2q+1$. Because $T$ is a tree, it has a node $x$ such that every component of $T-x$ has at most $\frac{1}{2}|V(T|$ nodes.  Let $S=A_T(x)$ and $S'=\alpha(A_T(x))$. Since $\dpt(T)\leq d \leq q-1$, $|S|=|S'|\leq q$.
By the definition of a representation, for every two distinct components  $C$ and $C'$ of $T-S$, and every $x\in \alpha(V(C))$ and $y\in\alpha(V(C'))$, $S'$ is an $(x,y)$-separator in  $G$.     

We now claim that every component $C$ of $T-S$ has at most $p$ nodes.
Suppose to the contrary that there is a component $C$ of $T-S$ with at least $p+1$ nodes. Consider the components $C_1,\ldots,C_s$ of $G-S'$ such that 
$V(C_i)\cap \alpha(V(C))\neq \emptyset $.
Define $A=S\cup\bigcup_{i=1}^sV(C_i)$. Note that $|A\setminus S|\geq |V(C)|\geq p+1$. 
Let $Y=V(T)\setminus (S\cup V(C))$.
 By the choice of $x$, $|V(C)|\leq \frac{1}{2}|V(T)|$. 
Then $|Y|\geq \frac{1}{2}|V(T)|-q\geq (\frac{3}{2}p+q+\frac{1}{2})-q=\frac{3}{2}p+\frac{1}{2}\geq p+1$.  
Observe that for every node $y\in Y$,  $\alpha(y)\notin V(C_i)$ for $i\in\{1,\ldots,s\}$. Then $\alpha(Y)\subseteq V(G)\setminus A$ and $|V(G)\setminus A|\geq p+1$. For $B=(V(G)\setminus A)\cup S'$, we have that $(A,B)$ is a separation of $G$ with $S'=A\cap B$. In particular, $(A,B)$ is a separation of order at most $q$. However, $|A\setminus B|\geq p+1$ and $|B\setminus A|$. This contradicts the unbreakability condition and the claim follows. 

Denote by $C_1,\ldots,C_s$ the components of $T-S$.  Consider a set of indices $I\subseteq\{1,\ldots,s\}$ such that $|\bigcup_{i\in I}V(C_i)|\geq p+1$ and for every proper $I'\subset I$,
$|\bigcup_{i\in I'}V(C_i)|\leq p$. Such a set $I$ exists, because 
$|\bigcup_{i=1}^sV(C_i)|\geq |V(T)|-q\geq 3p+q+1$.
Since each component has at most $p$ nodes, we have that $|\bigcup_{i\in I}V(C_i)|\leq 2p$. Then,
because $|V(T)\setminus S|\geq 3p+1$, $|\bigcup_{i\in \{1,\ldots,s\}\setminus I}V(C_i)|\geq p+1$. 

Consider the components $C_1',\ldots,C_t'$ of $G-S'$ that contain at least one  vertex of $\alpha(V(C_i))$ for some $i\in I$. Define 
$A=S'\cup\bigcup_{i=1}^tV(C_i')$. Note that $|A\setminus S'|\geq p+1$, because  $|\bigcup_{i\in I}V(C_i)|\geq p+1$.
Let $B=(V(G)\setminus A)\cup S'$. Since $\bigcup_{i\in \{1,\ldots,s\}\setminus I}\alpha(V(C_i))\subseteq B\setminus S$, $|B\setminus S|\geq p+1$. 
Then we obtain that $(A,B)$ is a separation of $G$ of order at most $q$ with $|
A\setminus B|\geq p+1$ and $|B\setminus A|\geq p+1$; a contradiction. This concludes the proof of our claim that $|X|\leq 3p+2q$. 

Now we improve the obtained upper bound. Because $|X|\leq 3p+2q$ and $|V(G)|> (3p+2q)(p+1)$, $|V(G)\setminus X|>(3p+2q)p$. Observe that for the set of leaves $L(T)$, we have that 
$|L(T)|\leq 3p+2q$.  By Observation~\ref{obs:anchor}, it holds that for every component $C$ of $G-X$, $N_G(V(C))\subseteq A_T(x)$ for some $x\in L(T)$. By the pigeon hole principle, we conclude that there is $x\in L(T)$ such that for the components $C_1,\ldots,C_s$ of $G-X$ with 
$N_G(V(C_i))\subseteq A_T(x)$ for $i\in\{1,\ldots,s\}$, it holds that $|\bigcup_{i=1}^sV(C_i)|\geq p+1$.
Let $S=\alpha(A_T(x))$. Note that $|S|\leq d+1\leq q$. Consider $A=S\cup\bigcup_{i=1}^s V(C_i)$ and $B=(V(G)\setminus A)\cup S$. We obtain that $(A,B)$ is a separation of $G$ of order at most $q$ and $|A\setminus B|\geq p+1$. Since  $G$ is $(p,q)$-unbreakable, we have that $|B|\leq p+q$. 
Notice that $X\subseteq B$. Thus, $|X|\leq p+q$. 

To show the second claim, note that $|L(T)|\geq p+q$.  In the same way as above,  there is $x\in L(T)$ such that for the components $C_1,\ldots,C_s$ of $G-X$ with $N_G(V(C_i))\subseteq A_T(x)$ for $i\in\{1,\ldots,s\}$, it holds that $|\bigcup_{i=1}^sV(C_i)|\geq p+1$. We show that there is $i\in\{1,\ldots,s\}$ such that $|V(C_i)|\geq p+1$. For the sake of contradiction, assume that $|V(C_i)|\leq p+1$ for all  $i\in\{1,\ldots,s\}$.  
Then there is a set of indices $I\subseteq\{1,\ldots,s\}$ such that $|\bigcup_{i\in I}V(C_i)|\geq p+1$ and for every proper $I'\subset I$,
$|\bigcup_{i\in I'}V(C_i)|\leq p$. Because each component has at most $p$ vertices, $|\bigcup_{i\in I}V(C_i)|\leq 2p$. Consider $A=S\cup\bigcup_{i=1}^sV(C_i)$, where $S=\alpha(A_T(x))$ and $B=(V(G)\setminus A)\cup S$. Note that $|B\setminus S|\geq |V(G)|-2p-q\geq p+1$. Then $(A,B)$ is a separation of $G$ of order at most $q$ with  $|A\setminus B|\geq p+1$ and $|B\setminus A|\geq p+1$; a contradiction with the condition that $G$ is $(p,q)$-unbreakable. This implies that there is a component $C$ of $G-X$ with $|V(C)|\geq p+1$. 

Because $G$ is a $(p,q)$-unbreakable graph and $|N_G(V(C))|\leq q$, we have that $|V(G)\setminus N_G[V(C)]|\leq p$. To see it, it is sufficient to consider the separation $(A,B)$ of $G$ with $A=N_G[V(C)]$ and $B=V(G)\setminus V(C)$. Clearly, $|B\setminus A|\leq p$ and, therefore, $|V(G)\setminus N_G[V(C)]|\leq p$.  This also implies the uniqueness of a component of $G-S$ with at least $p+1$, because for every other component $C'$, we have that $V(C')\subseteq B\setminus A$. This concludes the proof.
\end{proof}

Using the notation in Lemma~\ref{lem_es_bound}, we say that a component $C$ of $G-X$ with at least $p+1$ vertices is \emph{big} and the other components are \emph{small}.

We can use backtracking to verify, given a  $X$, whether $\dpt(X)\leq d$. For this we combine Lemmas~\ref{lem_nice} and~\ref{lem_depth-disconn} with backtracking and obtain the following straightforward lemma.
 
\begin{lemma}\label{lem_repr-find}
Given a graph $G$, a set of vertices $X\subseteq V(G)$, and an integer $d\geq -1$, it can be decided in 
$|X|^{\Oh(d)}\cdot n^{\Oh(1)}$ time whether $\dpt(X)\leq d$.
\end{lemma}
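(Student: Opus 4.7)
The plan is to design a recursive backtracking procedure driven by the two structural results already in hand: Lemma~\ref{lem_nice} lets us restrict the search, in a connected graph, to \emph{nice} representations, while Lemma~\ref{lem_depth-disconn} expresses the depth of $X$ in a disconnected graph as a minimum over a choice of ``root'' component. Concretely, I would handle the trivial base cases $X=\emptyset$ (depth $-1$) and $|X|=1$ (depth $0$) by inspection. When $G$ is disconnected with components $C_1,\ldots,C_s$ and $X_i=X\cap V(C_i)$, I would invoke Lemma~\ref{lem_depth-disconn}: for each $i$ with $X_i\neq\emptyset$, recursively test whether $\dpt(X_i)\leq d$ inside $C_i$ together with $\dpt(X_j)\leq d-1$ inside every $C_j$, $j\neq i$, accepting if any $i$ works. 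When $G$ is connected, Lemma~\ref{lem_nice} tells us that it suffices to guess the vertex $v\in X$ playing the role of $\alpha(r)$ in a nice representation; for each such guess, recursively check that $\dpt(X\cap V(C))\leq d-1$ in every component $C$ of $G-v$ that meets $X$.

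Correctness is a direct unwinding of the two lemmas: in the connected case, a nice representation has $\alpha(r)\in X$ and each subtree rooted at a child of $r$ is a representation, inside the corresponding component of $G-\alpha(r)$, of the intersection of $X$ with that component, so some root guess must succeed iff $\dpt(X)\leq d$; the disconnected case is exactly the formula in Lemma~\ref{lem_depth-disconn}. For the running time, let $T(k,d)$ count the recursive calls with $|X|=k$ and target depth $d$. Each call spends $n^{O(1)}$ time to compute components and partition $X$. In the connected branch we try $k$ root vertices and recurse at depth $d-1$ on subproblems whose sizes sum to at most $k-1$; in the disconnected branch we try at most $k$ candidates for the root component, producing one depth-$d$ subcall on a strictly smaller connected subgraph together with at most $k-1$ depth-$(d-1)$ subcalls. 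A straightforward induction on $d$ then gives $T(k,d)\leq k^{O(d)}$ and hence total running time $|X|^{O(d)}\cdot n^{O(1)}$.

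The subtlety I expect to have to address is the time analysis of the disconnected branch, because the depth parameter does not decrease in the depth-$d$ subcall, so one cannot recurse naively on $d$. The resolution is that the depth-$d$ subcall is made on a strictly smaller graph which is, moreover, connected, so it is immediately passed to the connected branch where $d$ does strictly decrease; one may formalize this either by amortizing over the pair (number of components of $G$ meeting $X$, depth $d$), or by merging the disconnected step with the following connected step into a single branching step of arity $O(k)$ and depth budget $d$. Either way the branching measure $k^{O(d)}$ survives, which is exactly the bound claimed in the lemma.
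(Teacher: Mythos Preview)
Your proposal is correct and matches the paper's approach exactly: the paper states the lemma as a ``straightforward'' consequence of combining Lemmas~\ref{lem_nice} and~\ref{lem_depth-disconn} with backtracking, and gives no further proof. You have supplied precisely the details the paper omits, including the correct handling of the subtlety that the depth-$d$ subcall in the disconnected branch lands on a connected component and thus immediately feeds into a branch where $d$ decreases.
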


We have to solve \probED{\star} for $\star\in \{\edphione,\edphitwo\}$ on instances of bounded size. It is straightforward to see that this also can be done by backtracking following the definitions of $\edphi{\edphione}$ and $\edphi{\edphitwo}$. 

\begin{lemma}\label{lem_bounded-size}
Let $\varphi$ be a FOL formula. Then $\probED{\star}$ can be solved in $n^{\Oh{(k+|\varphi|)}}$ time   for $\star\in \{\edphione,\edphitwo\}$.
\end{lemma}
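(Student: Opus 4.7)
The plan is to implement a recursive backtracking procedure that mirrors the definitions of $\edphi{\edphione}$ and $\edphi{\edphitwo}$ directly. On a call $(H, k')$, I would first invoke the model-checking algorithm from Observation~\ref{obs:mod-check} to decide in $n^{\Oh(|\varphi|)}$ time whether $H \models \varphi$; if so, return \true. Otherwise, if $H$ is connected and $k' \geq 1$, branch on each $v \in V(H)$ and recursively test whether $\edphi{\star}(H-v) \leq k'-1$. If $H$ is disconnected, compute the components $C_1,\ldots,C_s$ in polynomial time and recurse on each: for $\edphi{\edphione}$ the call returns $\bigwedge_i [\edphi{\edphione}(C_i) \leq k']$, whereas for $\edphi{\edphitwo}$ the call returns the same conjunction but only after verifying that $k' \geq 1$, since the $\max\{1,\cdot\}$ clause in the definition forces $\edphi{\edphitwo}(H) \geq 1$ whenever $H$ is disconnected and $H \not\models \varphi$. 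Correctness follows by induction on $|V(H)| + k'$ straight from the definitions in Section~\ref{sec_elim}.

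For the running-time analysis, let $N(H, k')$ denote the number of recursive calls generated at $(H, k')$. The connected branching step gives the recurrence $N(H, k') \leq 1 + |V(H)| \cdot \max_{v} N(H-v, k'-1)$, while the disconnected step yields $N(H, k') \leq 1 + \sum_{i} N(C_i, k')$. A straightforward induction on $k' + |V(H)|$ shows that $N(H, k') \leq 2\,|V(H)|^{k'+1}$: the OR-branching over vertex deletions pays a factor of $|V(H)|$ but decrements $k'$, and in the disconnected case the bound $\sum_i |V(C_i)|^{k'+1} \leq |V(H)|^{k'+1}$ follows from $|V(C_i)| \leq |V(H)|$ together with $\sum_i |V(C_i)| \leq |V(H)|$. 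Since the dominant cost per call is the model check, contributing $n^{\Oh(|\varphi|)}$, multiplication yields a total running time of $n^{\Oh(k + |\varphi|)}$.

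The only real subtlety, and arguably the sole obstacle, is correctly realising the $\max\{1,\cdot\}$ clause in the definition of $\edphi{\edphitwo}$ for disconnected inputs: one must not return $\edphi{\edphitwo}(H) = 0$ in the situation where $H$ is disconnected, every component of $H$ models $\varphi$, but $H$ itself does not. The model-check invoked at the very start of every recursive call detects all genuine base cases $\edphi{\star}(H) = 0$, while the explicit $k' \geq 1$ guard in the disconnected branch for $\edphi{\edphitwo}$ faithfully implements the remaining clause. With this small bookkeeping in place, the analysis above goes through uniformly for both $\star \in \{\edphione, \edphitwo\}$ and delivers the claimed bound.
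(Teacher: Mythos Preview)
Your approach is exactly the one the paper has in mind (it merely says ``backtracking following the definitions''), and your running-time analysis is fine up to constants. However, there is a genuine correctness bug in your treatment of $\edphi{\edphione}$.

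You write that the very first step on every call $(H,k')$ is to test $H\models\varphi$ and return \true\ if so, and you later assert that this ``detects all genuine base cases $\edphi{\star}(H)=0$''. For $\star=\edphitwo$ that is correct, since by definition $\edphi{\edphitwo}(H)=0$ iff $H\models\varphi$. For $\star=\edphione$ it is false: the definition prioritises connectivity, so for disconnected $H$ one has $\edphi{\edphione}(H)=\max_C\edphi{\edphione}(C)$ \emph{regardless} of whether $H\models\varphi$. The paper's own example makes this concrete: with $\varphi=\exists u\exists v\,\neg(u{=}v)\wedge\neg(u{\sim}v)$ and $H$ the disjoint union of $K_n$ and an isolated vertex, we have $H\models\varphi$ yet $\edphi{\edphione}(H)=n$. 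Your procedure would return \true\ on $(H,k')$ for every $k'\geq 0$, which is wrong whenever $k'<n$. The same issue recurs inside the recursion, since $H-v$ is typically disconnected.

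The fix is trivial and does not touch your running-time argument: for $\edphione$, test connectivity first and invoke the model check only on connected inputs (equivalently, drop the global model check for $\edphione$ and let the component recursion bottom out at connected pieces). With that adjustment your proof goes through.
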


We also need the following technical lemma that will allow us to consider inclusion minimal elimination sets.

\begin{lemma}\label{lem_incl-min}
Let $G$ be a connected graph and let $X$ be a nonempty elimination set with a nice representation $(T,\alpha)$. Let also $C$ be a component of $G-X$ anchored in $x^*\in V(T)$. Suppose that  $S\subseteq N_G(V(C))$ and $C'$ is a component of $G-S$ with $V(C)\subseteq V(C')$.
Then  there is an elimination set $X'\subseteq X$ with a nice representation  $(T',\alpha')$ such that $V(T')\subseteq V(T)$ and the following is fulfilled:
\begin{itemize}
\item[(a)] $S\subseteq X'$ and $(N_G(V(C))\setminus S)\cap X'=\emptyset$, 
\item[(b)] for every component $H$ of $G-X'$, either $V(H)\subseteq V(C')$ or $H$ is a component of $G-X$, 
\item[(c)] for every node $y\in V(T')$, $\dpt_{T'}(y)\leq\dpt_T(y)$, 
\item[(d)] if a component $H$ of $G-X'$ distinct from $C'$ is anchored in a leaf $z$ of $T$, then $H$ in anchored in $z$ in $T'$ and $z$ is a leaf of $T'$, 
\item[(e)] if $x^*$ is  a leaf  of $T$ and $x^*\in S$, then $C'$ is anchored in $x^*$ in $T'$.
\end{itemize}  
\end{lemma}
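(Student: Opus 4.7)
The plan is to set $X' := X \setminus V(C')$, with $V(T') := V(T) \setminus \alpha^{-1}(V(C'))$ and $\alpha' := \alpha|_{V(T')}$. Properties (a) and (b) will follow immediately from three set-theoretic observations: $B := N_G(V(C)) \setminus S$ lies inside $V(C')$ (each $b \in B$ is a non-$S$ neighbour of $V(C)$ and therefore lies in the $(G-S)$-component of $V(C)$); $S \cap V(C') = \emptyset$; and $N_G(V(C')) = S$, so that $V(C')$ is enclosed by $S \subseteq X'$. Together these force $V(C')$ to be a single component of $G-X'$ (it is connected in $G-S \supseteq G-X'$ and has no external neighbours in $G-X'$), while each other component of $G-X'$ is a component of $G-X$ lying entirely outside $V(C')$, using that each component of $G-X$ is either fully inside or fully outside $V(C')$ by $(G-S)$-connectivity.

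The tree $T'$ will be built from $T$ by contracting removed nodes: the $T'$-parent of each non-root $y \in V(T')$ is the lowest proper $T$-ancestor of $y$ still lying in $V(T')$, and the $T$-root $r$ is retained as $T'$-root whenever $r \in V(T')$. Property (c) is then immediate since $\dpt_{T'}(y)$ equals the number of kept proper $T$-ancestors of $y$, which is at most $\dpt_T(y)$. For (d) and (e) I will use the strict-depth inequality: if $w \in V(T')$ is a proper $T$-ancestor of $z \in V(T')$, then $\dpt_{T'}(w) < \dpt_{T'}(z)$, because $z$ contributes to its own kept-ancestor count while $w$ does not. Combined with Observation~\ref{obs:anchor}, which confines $\alpha^{-1}(N_G(V(H)))$ to the $T$-ancestors of the $T$-anchor $z_H$ of every component $H$ of $G-X$, this forces the $T'$-anchor of every component of $G-X'$ distinct from $V(C')$ to coincide with its $T$-anchor; applying the same argument to $V(C')$ via $N_G(V(C')) = S$ gives (e).

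The main obstacle will be (i) verifying that $(T', \alpha')$ is a (nice) representation of $X'$ in $G$, and (ii) handling the subcase $r \notin V(T')$. For (i), given incomparable $x, y \in V(T')$ with $T'$-LCA $v$ and $T$-LCA $u$, the set $\alpha(A_T(u)) \setminus \alpha(A_{T'}(v))$ is contained in $V(C')$; any path in $G$ from $\alpha(x)$ to $\alpha(y)$ that avoids $\alpha(A_{T'}(v))$ must, by the $T$-representation property, pass through such a missing ancestor, hence enter $V(C')$; but $V(C')$'s only external connection is via $S \subseteq X'$, so once we also remove $\alpha(A_{T'}(v)) \supseteq S \cap A_{T'}(v)$ carefully, the path is blocked. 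Niceness will follow from the niceness of $(T, \alpha)$ by transferring the ``single component'' property along the contraction. The subcase $\alpha(r) \in V(C')$ may produce several candidate roots in the contracted $T'$, so I would instead argue by induction on $|V(T)|$: letting $c$ be the $T$-child of $r$ with $x^* \in D_T(c)$, the pair $(T_c, \alpha|_{V(T_c)})$ is a nice representation of $\alpha(V(T_c))$ as an elimination set of the component $G_c$ of $G - \{\alpha(r)\}$ containing $V(C)$; applying the inductive hypothesis to this smaller setup yields a subtree $T_c'$ of $T_c$, and setting $T' := T_c'$ and $X' := \alpha(V(T_c'))$ transfers conditions (a)--(e) back to $G$, since $\alpha(r)$ together with the $\alpha$-images of the subtrees rooted at the other children of $r$ sit inside the $V(C')$-component of $G-X'$ and do not disturb the anchors of the remaining components.
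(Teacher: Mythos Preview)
Your construction coincides with the paper's: set $X' = X \setminus V(C')$ and build $T'$ by contracting the removed nodes. The organization of the argument, however, is different, and the paper's organization is what makes the hard step go through.

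The paper first reduces to the case $|N_G(V(C)) \setminus S| \le 1$ by an outer induction that peels off one vertex of $R \setminus S$ at a time. In the base case $|R\setminus S|=1$ with unique vertex $u$, it runs an inner induction on $\dpt(T)$: if $u \neq \alpha(r)$ it recurses into the subtree containing $C$ (keeping the rest of $T$ intact); if $u = \alpha(r)$ it performs the direct contraction. Crucially, the direct contraction is only ever applied when the \emph{root} is the \emph{single} removed neighbour of $C$. This is what makes the representation check work: when the paper argues that $(T',\alpha')$ still separates two incomparable nodes, it shows that any offending vertex of $X'$ would be connected to $u=\alpha(r)$ in $G-S$ and hence lie in $V(C')$, a contradiction.

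Your plan instead applies the direct contraction whenever $\alpha(r)\notin V(C')$, without the reduction to $|R\setminus S|=1$. The gap is precisely your sentence ``once we also remove $\alpha(A_{T'}(v)) \supseteq S \cap A_{T'}(v)$ carefully, the path is blocked.'' For this to close you need $S\subseteq\alpha(A_{T'}(v))$, i.e.\ every node of $\alpha^{-1}(S)$ is a $T$-ancestor of the $T'$-LCA $v$. But $\alpha^{-1}(S)$ lies on the root-to-$x^*$ chain while $v$ is an arbitrary common $T'$-ancestor of two $T'$-incomparable nodes; nothing forces $v$ to sit below all of $\alpha^{-1}(S)$. A path from $\alpha(x)$ to $\alpha(y)$ avoiding $\alpha(A_{T'}(v))$ could enter and leave $V(C')$ through some $s\in S$ with $\alpha^{-1}(s)$ a proper $T$-descendant of $v$, and your sketch does not exclude this. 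The paper sidesteps the issue entirely: in its Case~1 the root is removed, the new root $r'$ is chosen inside $\alpha^{-1}(S)$, and the separation argument routes through $\alpha(r)$ rather than through~$S$.

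Your inductive treatment of the case $\alpha(r)\in V(C')$ is also thinner than it looks. You recurse into $G_c$, but the hypothesis delivers a nice representation in $G_c$, not in $G$; you must still argue that $(T',\alpha')$ is a (nice) representation in the larger graph and that (a)--(e) lift, which requires checking separations across the boundary $\{\alpha(r)\}$ and that the other subtrees of $r$ really are absorbed into $C'$. This is close to the paper's Case~2 and is doable, but it is a genuine step.

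In short: right construction, right overall shape, but the paper's extra induction on $|R\setminus S|$ is exactly what makes the representation-property verification tractable, and your more direct route does not yet supply a substitute for it.
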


\begin{proof}
Let $R=N_G(V(C))$.  The claim is  trivial if $S=\emptyset$, because $C'=G$ and we can take $X'=\emptyset$.  We assume that this is not the case. The proof is by induction on $|R\setminus S|$. The claim is straightforward  if $R=S$ as we can take $X'=X$ and consider the same representation $(T,\alpha)$.
The crucial case is the case $|R\setminus S|=1$. Let $u$ be the unique vertex of $R\setminus S$. We consider two possibilities for $u$. Let $v=\alpha(r)$, where $r$ is the root of $T$.

\medskip
\noindent{\bf Case~1.} $u=v$. Let $W=V(C')$. Notice that a vertex $w\in V(G)$ is in $W$ if and only if either $w\in V(C)$ or $w\notin S\cup V(C)$ and $G-S$ has a $(u,w)$-path. Define $X'=X\setminus W$. Clearly, (a) holds for this $X'$. Observe that for a component $H$ of $G-X$, we have that $V(H)\subseteq W$ if $N_G(V(H))$  contains a vertex of $W$ and $V(H)\cap W=\emptyset$ otherwise. In particular, this implies (b).

Next, we construct $T'$ and $\alpha'$. We set $V(T')=\alpha^{-1}(X')$ and define $\alpha'(x)=\alpha(x)$ for every $x\in V(T')$. Because $S\neq\emptyset$, there is a descendant  $r'$ of $r$ of minimum depth such that $\alpha(r')\in S$.
For every $w\in X'$ distinct from $\alpha(r')$, we consider $x=\alpha^{-1}(w)$ and find a proper ancestor $y$ of $x$ in $T$ of maximum depth such that $\alpha(y)\in X'$. Then we define $y$ be the parent of $x$. 

We argue that $T'$ is a tree rooted in $r'$. We have to show that for  every $w\in X'$ distinct from $\alpha(r')$, we have an ancestor $y$ of $x=\alpha^{-1}(w)$ in $T$ such that $\alpha(y)\in X'$. For the sake of contradiction, assume that there is $w\in X'$ such that for every proper ancestors $y$ of $x=\alpha^{-1}(w)$ in $T$, $\alpha(y)\notin X'$. Clearly, $x$ is not a descendant of $r'$ in $T$. In particular, $r'$ and $x$ are incomparable. 
Let $z$ be  the lowest proper ancestor of $r'$ and $y$ in $T$. We have that $\alpha(A_T(z))$ is an $(\alpha(r'),w)$-separator of $G$ and, moreover, $S$ has no vertices in the component $G-\alpha(A_T(z))$ containing $\alpha(x)$. Since $(T,\alpha)$ is nice, this component has an $(\alpha(z),w)$-path. Because $\alpha(z)\notin S$, $G-S$ has a $(u,\alpha(z))$-path. We conclude that $G-S$ has a $(u,w)$-path and $w\notin X'$; a contradiction.
This proves that $T'$ is a tree rooted in $r'$. 

We prove that $(T',\alpha')$ represents $X'$. Towards  a contradiction, assume that this is not the case, that is, there are distinct $x,y\in V(T')$ whose lowest common descendant $z\neq x,y$ and $\alpha'(x)$ and $\alpha'(y)$ are in the same component of $G-A_{T'}(z)$. By the definition of $T'$, $z$ has a descendant $z'$ such that $z'\neq x,y$ is the lowest common ancestor of $x$ and $y$ in $T$.  Clearly, either $x\notin N_G(V(C))$ or $y\notin N_G(V(C))$. By symmetry, assume that $y\notin N_G(V(C))$. Because $(T,\alpha)$ is a representation of $X$, $\alpha(A_T(z'))$ is an $(\alpha(x),\alpha(y))$-separator. This means that every $(\alpha'(x),\alpha'(y))$-path in $G$ contains a vertex of $(\alpha(x),\alpha(y))$. In particular, this implies that there is a vertex $z''\in A_T(z')$ such that $G$ has an $(\alpha(z''),\alpha'(y))$-path $P$ in $G$ such that the internal vertices of the path are in the component $G-\alpha(A_T(z'))$ containing $\alpha'(y)$. Because $y\notin N_G(V(C))$, we have that $P$ avoids the vertices of $S$. Since $z'\notin X'$, $G-S$ has a $(u,\alpha(z''))$-path $P'$. Concatenating $P'$ and $P$, we obtain that $G-S$ has a $(u,\alpha(y'))$-path. However this contradicts that $\alpha'(y)\in X'$. This proves that $(T',\alpha')$ represents $X'$.

By the construction of $T'$, it is easy to see that $T'$ is nice, because $T$ is nice. Also the construction of $T'$ immediately implies (c)--(e). This concludes the analysis of the first case.

\medskip
\noindent{\bf Case~2.} $u\neq v$. We show the claim by induction on $d=\dpt(T)$. Notice that $\dpt(T)\geq 1$, because $X\setminus\{v\}\neq\emptyset$. Let $C_1,\ldots,C_s$ be the components of $G-v$ such that $X_i=X\cap V(C_i)\neq\emptyset$ for $i\in\{1,\ldots,s\}$. Because $T$ is nice, $T$ has children $r_1,\ldots,r_s$ such that for every $i\in\{1,\ldots,s\}$, the subtree $T_i$ of $T$ rooted in $r_i$ together with $\alpha_i(x)=\alpha(x)$ for $x\in V(T_i)$ represent $X_i$ in $C_i$. By Observation~\ref{obs:anchor}, we can assume without loss of generality that $V(C)\subseteq V(C_1)$ and $N_G(V(C))\subseteq V(C_1)\cup\{v\}$.  
Because $\dpt(T_1)<\dpt(T)$, we can apply the inductive assumption and construct an elimination set $X_1'\subset X_1$ with a nice representation $(T_1',\alpha_1')$ satisfying (a)--(e). Then we construct $X'=X_1'\cup\bigcup_{i=1}^s X_i$. Then we construct $T'$ from $T_1'$ and $T_2,\ldots,T_s$ by making $r_1'$ and $r_2,\ldots,r_s$ the children of $r$, where $r_1'$ is the root of $T_1'$.  We set 
\begin{equation*}
\alpha'(x)=
\begin{cases}
\alpha_i(x),&\mbox{if }x\in X_i\text{ for some }i\in\{2,\ldots,s\},\\
\alpha_1'(x),&\mbox{if }x\in X_1'. 
\end{cases}
\end{equation*}
It is straightforward to verify that $X'$ and $(T',\alpha')$ satisfy (a)--(e).

\medskip
This concludes the poof for the base case $|R\setminus S|=1$. To show the claim for $|R\setminus S|>1$, consider a vertex $w\in R\setminus S$ and apply the claim for $S'=S\cup\{w\}$ using the inductive assumption. We have that there is 
an elimination set $X'\subseteq X$ with a nice representation  $(T',\alpha')$ such that $V(T')\subseteq V(T)$ and (a)--(e) are fulfilled with respect to $S'$. Then we apply the claim for $X'$ and $(T',\alpha')$ with respect to the component $C'$ and $S$. Clearly, we obtain an elimination set $X''\subseteq X'\subseteq X$ with a nice representation  $(T'',\alpha'')$ such that $V(T'')\subseteq V(T')\subseteq V(T)$ satisfying (a)--(e). This completes the proof.
\end{proof}

In our algorithms, we use the \emph{random separation} technique introduces by  Cai, Chan and Chan in~\cite{CaiCC06}. To avoid dealing with randomized algorithms, we use the following lemma stated by Chitnis et al. in~\cite{ChitnisCHPP16}.

\begin{lemma}[\cite{ChitnisCHPP16}]\label{lem_derand}
Given a set $U$ of size $n$ and integers $0\leq a,b\leq n$, one can construct in time $2^{\Oh(\min\{a,b\}\log (a+b))}\cdot n\log n$ a family $\mathcal{F}$ of at most   $2^{\Oh(\min\{a,b\}\log (a+b))}\cdot \log n$ subsets of $U$ such that the following holds: for any sets $A,B\subseteq U$, $A\cap B=\emptyset$, $|A|\leq a$, $|B|\leq b$, there exists a set $R\in \mathcal{F}$ with $A\subseteq R$ and $B\cap R=\emptyset$. 
\end{lemma}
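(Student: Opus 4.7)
The result is a standard derandomization of random separation, and I would prove it in three parts.

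First I would establish the target bound probabilistically. Consider the random subset $R \subseteq U$ obtained by placing each element of $U$ into $R$ independently with probability $p = a/(a+b)$ (if $a+b=0$ the lemma is trivial). For any fixed disjoint pair $(A,B)$ with $|A|=a$ and $|B|=b$, we have
\[
\Pr[A \subseteq R \text{ and } B \cap R = \emptyset] = p^a(1-p)^b = \frac{a^a b^b}{(a+b)^{a+b}},
\]
which by Stirling's approximation is at least $1/(\binom{a+b}{a}\cdot\mathrm{poly}(a+b))$. Since $\binom{a+b}{a} \leq 2^{\Oh(\min\{a,b\}\log(a+b))}$, drawing $\Theta(2^{\Oh(\min\{a,b\}\log(a+b))}\cdot \log n)$ independent samples and union-bounding over the at most $n^{a+b}$ possible target pairs yields, with positive probability, a family of the claimed size that simultaneously separates every admissible $(A,B)$. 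This establishes existence.

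Second, I would derandomize using a splitter construction. The appropriate object is a family $\mathcal{H}$ of functions $f : U \to [a+b]$ of size $2^{\Oh(\min\{a,b\}\log(a+b))} \cdot \log n$, constructible in time of the same order times $n$, such that for every disjoint $A,B \subseteq U$ with $|A|=a$ and $|B|=b$ there exists $f \in \mathcal{H}$ with $f(A) \subseteq [a]$ and $f(B) \subseteq [a+1,a+b]$. Such splitters can be produced by combining $(a+b)$-perfect hash families with the refined construction of Naor, Schulman and Srinivasan. Given $\mathcal{H}$, I set $\mathcal{F} = \{f^{-1}([a]) : f \in \mathcal{H}\}$. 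For any admissible $(A,B)$, pad them arbitrarily inside $U$ to exact sizes $a$ and $b$, apply the splitter guarantee, and read off $R = f^{-1}([a])$, which then contains $A$ and is disjoint from $B$; removing the padding preserves both containments.

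The main obstacle is matching the sharp bound $2^{\Oh(\min\{a,b\}\log(a+b))}$ rather than the easier bound $2^{\Oh(a+b)}$. The latter is immediate from $(n,a+b)$-universal sets, which realize every binary pattern on every $(a+b)$-subset and therefore necessarily have size at least $2^{a+b}$. For our purpose, however, we need to realize only one specific bipartition per $(a+b)$-subset, and this slack is what reduces the size essentially from $2^{a+b}$ down to $\binom{a+b}{\min\{a,b\}}$ up to polylogarithmic factors. Capturing this saving rigorously is the combinatorial core of the splitter construction, and rather than reproducing it I would invoke the Naor--Schulman--Srinivasan machinery as a black box.
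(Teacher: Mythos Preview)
The paper does not prove this lemma; it is quoted verbatim from Chitnis et al.\ \cite{ChitnisCHPP16} and used as a black box. Your sketch is a faithful outline of the standard proof that appears in that reference (random separation plus Naor--Schulman--Srinivasan splitters to hit the $\binom{a+b}{\min\{a,b\}}$ bound rather than $2^{a+b}$), so there is nothing to compare against within the present paper.
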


\subsection{Algorithm for unbreakable graphs}\label{sec_unbreak}
In this subsection, we give \classFPT-algorithms for \probED{\star} for $\star\in\{\edphione,\edphitwo,\edphthree\}$ for FOL formulas $\varphi\in\Sigma_3$ on unbreakable graphs.   
Throughout the subsection, we assume without loss of generality that 
\begin{equation*}
\varphi=\exists x_1\cdots\exists x_r\forall y_1\cdots \forall y_s \exists z_1\cdots\exists z_t~\chi,
\end{equation*}
where $\chi$ is quantifier-free and $r,s,t$ are positive integers, because we always can write a FOL formula from $\Sigma_3$ in this form by adding dummy variables if necessary.  We also write $\mathbf{x}=\langle x_1,\ldots,x_r\rangle$, $\mathbf{y}=\langle y_1,\ldots,y_s\rangle$, and
$\mathbf{z}=\langle z_1,\ldots,z_t\rangle$.

Notice that $\edphi{\edphione}(G)=0$ if and only if for every component $C$ of $G$, $C\models\varphi$. Also $\edphi{\edphitwo}(G)=0$ ($\edphi{\edphthree}(G)=0$, respectively) if and only if $G\models\varphi$. This implies that \probED{\star} for $\star\in\{\edphione,\edphitwo,\edphthree\}$ can be solved in time $n^{\Oh(|\varphi|)}$ if $k=0$ by Observation~\ref{obs:mod-check}, that is,
Theorem~\ref{thm_algorithm_informal} trivially holds for $k=0$. Hence, throughout this subsection we assume that the parameter $k$ in the considered instances is positive. 

By Theorem~\ref{thm_meta} and Lemma~\ref{lem_expr}, to prove Theorem~\ref{thm_algorithm_informal}, it is sufficient to demonstrate FPT algorithm for the considered problems on $(p(k),k)$-unbreakable graphs for a computable function $p\colon \mathbb{N}\rightarrow\mathbb{N}$.  Slightly abusing notation, we write $p$ instead of $p(k)$.  

The algorithms for \probED{\star} for $\star\in\{\edphione,\edphitwo,\edphthree\}$ are similar. However, there are differences that make it inconvenient to describe them together. Hence, we first give the details of the algorithm for \probED{\edphione} and then more briefly explain our algorithms for \probED{\edphitwo} and \probED{\edphthree}. 
Then we derive Theorem~\ref{thm_algorithm_informal} from Lemmas~\ref{lem_alg-one}, \ref{lem_alg-two}, and \ref{lem_alg-three} in which we summarize the properties of the algorithms for the considered problems. 

\paragraph{Algorithm for \probED{\edphione}.} 
Let $(G,k)$ be an instance of \probED{\edphione}, where $G$ is a $(p,k)$-unbreakable graph.
We assume without loss of generality that $G$ is connected. Otherwise, because $\edphi{\edphione}(G)=\max\{\edphi{\edphione}(C)\mid C\text{ a component of }G\}$, we can solve the problem for each component separately. If $|V(G)|\leq (3p+2k)(p+1)$, we solve the problem in $(p+k)^{\Oh(k+|\varphi|)}$ time by 
Lemma~ \ref{lem_bounded-size}. From now we assume that  $|V(G)|> (3p+2k)(p+1)$.

By Lemma~\ref{lem_elim-dist-one}, 
$(G,k)$ is a yes-instance of \probED{\edphione} if and only if  $G$ contains an elimination set $X$ of depth at most $k-1$ such that $C\models \varphi$ for every component $C$ of $G-X$. Our algorithm finds such a set $X$, called a \emph{solution}, if it exists. We verify in $n^{\Oh(|\varphi|)}$ time whether $X=\emptyset$ has the required property and return {\sf yes} if this holds. Assume that this is not the case, that is, we have to find a nonempty solution. 

Suppose that $(G,k)$ is a yes-instance and let $X$ be a solution with a representation $(T,\alpha)$. 
 By Lemma~\ref{lem_es_bound}, $|X|\leq p+k$ and   there is a unique big component $C$ of $G-X$ with at least $p+1$ vertices, the other components are small,  and $|V(G)\setminus N_G[V(C)]|\leq p$.  By Observation~\ref{obs:anchor}, 
$N_G(V(C))\subseteq \alpha(A_T(x))$, where $x$ is an anchor of $C$. In particular, this means that $|N_G(V(C))|\leq k$. We use these properties to identify $C$. This is done by combining the random separation technique~\cite{CaiCC06} with a recursive branching algorithm.

We use random separation to highlight  the hypothetical sets $S=N_G(V(C))$ and $U=V(G)\setminus N_G[V(C)]$ (if they exist). To avoid randomized algorithms, we directly use the derandomization tool from Lemma~\ref{lem_derand}. By this lemma, we can construct in $2^{\Oh(\min\{p,k\}\log(p+k))}\cdot n\log n$ time a family $\mathcal{F}$ of at most $2^{\Oh(\min\{p,k\}\log(p+k))}\cdot \log n$ subsets of $V(G)$ such that there is $R\in\mathcal{F}$ such that $U\subseteq R$ and $S\cap R=\emptyset$. In our algorithm, we go over all sets $R\in  \mathcal{F}$ and for each set $R$, we check whether there is a solution $X$ such that  $U\subseteq R$ and $S\cap R=\emptyset$ for the sets $S$
  and $U$ corresponding to $X$ (recall that  $S=N_G(V(C))$ and $U=V(G)\setminus N_G[V(C)],$ where $C$ is the unique big component of $G-X$ with at least $p+1$ vertices). Clearly, $(G,k)$ is a yes-instance of \probED{\edphione} if and only if there is  $R\subseteq \mathcal{F}$ and a solution $X$ with the required property. 
  
  \begin{wrapfigure}{r}{55mm} 
 \vspace{-4mm}	\vspace{-4mm}
	\centering{\includegraphics[scale=0.5]{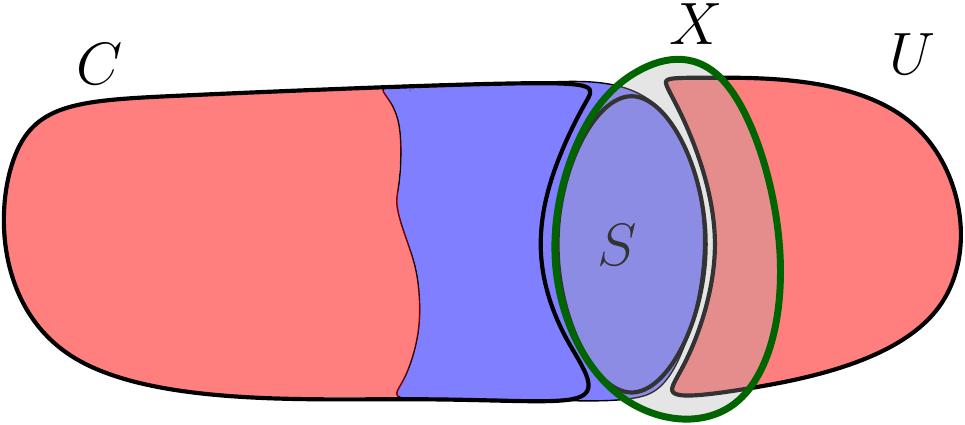}}
	\vspace{-4mm}
	\caption{A visualization of the set $X$, the component $C$, the sets $S$, and $U$, and and the way the red and blue colors are distributed among them.
	}
	\vspace{-4mm}
	\label{label_sysmbolization}
\end{wrapfigure}
  
From now on we assume that   $R\in \mathcal{F}$ is given. We set $B=V(G)\setminus R$. We say that the vertices of $R$ are \emph{red}  and the vertices of $B$ are \emph{blue}. We also call the components of $G[R]$  \emph{red components} of $G$ and we use the same convention for induced subgraphs of $G$. A solution $X$ is \emph{colorful} if the vertices of $U$ are red and the vertices of $S$ are blue (see \autoref{label_sysmbolization}). 
The crucial property of colorful solutions is that
\begin{quote}
 if a red vertex $v$ is in $U$, then the set of vertices of the red component $H$ containing $v$ is a subset of $U$. 
\end{quote}

If $G-X$ has a big component $C$ and $C\models \varphi$, then there is an $r$-tuple $\mathbf{v}=\langle v_1,\ldots,v_r\rangle$ of  vertices of $C$ such that $(C,\mathbf{v})\models \varphi[\mathbf{x}]$ (recall that $\varphi[\mathbf{x}]$ is the formula with the free variables $x_1,\ldots,x_r$ obtained from $\varphi$ by the removal of the quantification over $x_1,\ldots,x_r$, and  $(C,\mathbf{v})\models \varphi[\mathbf{x}]$ means that  $\varphi[\mathbf{x}]$ evaluates \emph{true} on $G$ when $x_i$ is assigned $v_i$ for all $i\in\{1,\ldots,r\}$). Using brute force, we consider all $r$-tuples $\mathbf{v}=\langle v_1,\ldots,v_r\rangle$ of  vertices of $G$, and for each $\mathbf{v}$, we explain how to check whether there is a colorful solution $X$ with the big component $C$ such that $v_i\in V(C)$ for all $i\in\{1,\ldots,r\}$.  Note that at most $n^r$ $r$-tuples $\mathbf{v}$ can be listed in $n^{\Oh(|\varphi|)}$ time. The algorithm returns {\sf yes} if we find a colorful solution for some choice of $\mathbf{v}$, and it concludes that there is no colorful solution for the considered selection of $R$ otherwise.  
  
From now we assume that $\mathbf{v}=\langle v_1,\ldots,v_r \rangle$ is fixed. Because these vertices should be in $C$, we temporarily (i.e., only for the current choice of $\mathbf{v}$) recolor them red to simplify further notation. 
We apply a recursive branching algorithm to find $C$ and $S$.  

By definition, we have that $(C,\mathbf{v})\models\varphi[\mathbf{x}]$ if and only if for every $s$-tuple $\mathbf{u}=\langle u_1,\ldots,u_s\rangle$ of  vertices of $C$,
$(C,\mathbf{vu})\models \varphi[\mathbf{xy}]$.  
Suppose that $(C,\mathbf{v})\not\models\varphi[\mathbf{x}]$. Then there is an $s$-tuple $\mathbf{u}=\langle u_1,\ldots,u_s\rangle$ of vertices such that 
$(C,\mathbf{vu})\not\models \varphi[\mathbf{xy}]$.  Notice now that, because $\varphi\in\Sigma_3$, we have that for any induced subgraph $C'$ of $C$ 
such that $v_i\in V(C')$ for every $i\in\{1,\ldots,r\}$,  if $(C',\mathbf{v})\models\varphi[\mathbf{x}]$, then there is $j\in\{1,\ldots,s\}$ such that $u_j\notin V(C')$. This implies that if  
 $(C,\mathbf{vu})\not\models \varphi[\mathbf{xy}]$, then there is $j\in\{1,\ldots,s\}$ such that either $u_j\in S$ and should be deleted or $u_j$ is in a component of $G-S$ distinct from $C$ and this component should be deleted together with its neighborhood. Note that $u_j$ is blue in the first case, and $u_j$ is red in the second. Moreover, in the second case, we should delete the red component containing $u_j$ together with its blue neighborhood. We branch on all possible deletions of $v_{i}$'s,  using the following subroutine $\textsc{FindC}(C,S,h)$, where  we initially set $C:=G$, $S:=\emptyset$, and $h:=k$.

\medskip
\noindent
{\bf Subroutine} $\textsc{FindC}(C,S,h)$. 
\begin{itemize}
\item If $(C,\mathbf{v})\models\varphi[\mathbf{x}]$ and $h\geq 0$, then return $C$, $S$, and stop.
\item If $(C,\mathbf{v})\not\models\varphi[\mathbf{x}]$ and $h\leq 0$, then stop.
\item If  $h\geq 1$ and there is an $s$-tuple $\mathbf{u}=\langle u_1,\ldots,u_s\rangle$ of vertices of $C$ such that $(C,\mathbf{vu})\not\models\varphi[\mathbf{xy}]$, then do the following for every $j\in\{1,\ldots,s\}$.
\begin{itemize} 
\item If $u_j\in B$ and there is a component $C'$ of $C-u_j$ such that $v_i\in V(C')$ for all $i\in\{1,\ldots,r\}$, then call $\textsc{FindC}(C',S\cup\{u_j\},h-1)$.
\item If $u_j\in R$ and there is a red component $H$ of $C$ with the set of vertices $W$ and $S'=N_C(W)$ such that (a) $u_j\in W$, (b) $|S'|\leq h$, and (c) there is a component $C'$ of $C-N_C[W]$ with  $v_i\in V(C')$ for all $i\in\{1,\ldots,r\}$, then  call $\textsc{FindC}(C',S\cup S', h-|S'|)$.
 \end{itemize}
\end{itemize}

We show the following lemma.

\begin{lemma}\label{lem_correct_one}
If $X$ is an inclusion minimal colorful solution to $(G,k)$ with the big component $C$ such that $v_i\in V(C)$ for all $i\in\{1,\ldots,s\}$ and  $(C,\mathbf{v})\models\varphi[\mathbf{x}]$, then there is a leaf of the search tree produced by  $\textsc{FindC}(G,\emptyset,k)$ for which the subroutine outputs $C$ and $S=N_G(V(C))$.
\end{lemma}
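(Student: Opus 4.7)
The plan is to construct a specific root-to-leaf branch of the search tree of $\textsc{FindC}(G,\emptyset,k)$ along which, at every recursive call $\textsc{FindC}(C_{\text{curr}},S_{\text{curr}},h)$, the following invariants hold: (i) $V(C)\subseteq V(C_{\text{curr}})$, (ii) $S_{\text{curr}}\subseteq S$, and (iii) $h=k-|S_{\text{curr}}|$. The initial call satisfies them trivially, and since $|S|\leq k$ (because $S=N_G(V(C))\subseteq\alpha(A_T(x^*))$ for the anchor $x^*$ of $C$ and $\dpt(T)\leq k-1$), the counter $h$ remains nonnegative throughout.

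For the inductive step, assume the invariants hold and the base case does not fire, so $(C_{\text{curr}},\mathbf{v})\not\models\varphi[\mathbf{x}]$. Pick a witness $\mathbf{u}$ with $(C_{\text{curr}},\mathbf{v}\mathbf{u})\not\models\varphi[\mathbf{x}\mathbf{y}]$. Some coordinate $u_j$ must lie outside $V(C)$: otherwise the purely existential $\varphi[\mathbf{x}\mathbf{y}]=\exists\mathbf{z}\,\chi$, combined with the hypothesis $(C,\mathbf{v})\models\varphi[\mathbf{x}]$, would yield a witness $\mathbf{z}\in V(C)\subseteq V(C_{\text{curr}})$ contradicting the choice of $\mathbf{u}$. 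Using $V(G)=V(C)\cup S\cup U$ (disjoint) with $S\subseteq B$ and $U\subseteq R$, a blue $u_j\notin V(C)$ must lie in $S$, and a red $u_j\notin V(C)$ must lie in $U$. I branch accordingly: in the blue subcase, by adding $\{u_j\}\subseteq S$ to $S_{\text{curr}}$; in the red subcase, the red component $H$ of $C_{\text{curr}}$ through $u_j$ is contained in $U$, and since no vertex of $V(C)$ has a neighbor in $U$, the separator $S':=N_{C_{\text{curr}}}(V(H))$ is disjoint from $V(C)$, consists of blue vertices, and therefore lies in $S$; moreover $|S'|\leq|S\setminus S_{\text{curr}}|\leq h$, so the branch is available. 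Either way invariants (i)--(iii) pass to the child, and $|V(C_{\text{curr}})\setminus V(C)|$ strictly decreases, so this path terminates.

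The main obstacle is ruling out erroneous leaves where the base case fires at $(C_{\text{curr}},\mathbf{v})\models\varphi[\mathbf{x}]$ with $V(C_{\text{curr}})\supsetneq V(C)$; here inclusion minimality of $X$, used through Lemma \ref{lem_incl-min}, is essential. Suppose such a call arises on the constructed branch and set $S_0:=N_G(V(C_{\text{curr}}))\subseteq S_{\text{curr}}\subseteq S$; then $C_{\text{curr}}$ is exactly the component of $G-S_0$ containing $V(C)$, so Lemma \ref{lem_incl-min} applied to $X$, $C$, $S_0$, and $C_{\text{curr}}$ produces $X'\subseteq X$ with $S_0\subseteq X'$, $\dpt(X')\leq\dpt(X)$, and every component of $G-X'$ either contained in $V(C_{\text{curr}})$ or equal to a component of $G-X$. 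Setting $X'':=X'\setminus V(C_{\text{curr}})$, I verify that $X''$ is again a colorful solution: $C_{\text{curr}}$ is a component of $G-X''$ because $S_0\subseteq X''$ isolates it, and $C_{\text{curr}}\models\varphi$ by hypothesis; the remaining components of $G-X''$ coincide with components of $G-X$ and hence model $\varphi$; and colorfulness transfers because $N_G(V(C_{\text{curr}}))\subseteq S\subseteq B$ and $V(G)\setminus N_G[V(C_{\text{curr}})]\subseteq U\subseteq R$. If $X''\subsetneq X$ this contradicts minimality; if $X''=X$ then $V(C_{\text{curr}})\cap X=\emptyset$ while $N_G(V(C_{\text{curr}}))\subseteq X$, so $C_{\text{curr}}$ is already a component of $G-X$ strictly containing $V(C)$, contradicting that $C$ is such a component. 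Hence the base case can fire only when $V(C_{\text{curr}})=V(C)$; at that moment any $w\in S\setminus S_{\text{curr}}$ would be adjacent to $V(C)$ in $G-S_{\text{curr}}$ and hence lie in $V(C_{\text{curr}})=V(C)$, which is absurd, so $S_{\text{curr}}=S$ and the leaf outputs $(C,S)$ as required.
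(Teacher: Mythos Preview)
Your approach is essentially the paper's: maintain invariants along one branch of the search tree, then invoke Lemma~\ref{lem_incl-min} together with inclusion minimality of $X$ to rule out premature termination. Two remarks.

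First, you implicitly use that $C_{\text{curr}}$ is a component of $G-S_{\text{curr}}$ (when you write $S_0:=N_G(V(C_{\text{curr}}))\subseteq S_{\text{curr}}$, and again in the final sentence). This is true and easy to check inductively---in the blue subcase $C'$ is a component of $C_{\text{curr}}-u_j$, hence of $G-(S_{\text{curr}}\cup\{u_j\})$; in the red subcase, since $N_{C_{\text{curr}}}(V(H))=S'$, the component $C'$ of $C_{\text{curr}}-N_{C_{\text{curr}}}[V(H)]$ is already a component of $C_{\text{curr}}-S'$---but it should be stated as part of your invariant. The paper carries this as its invariant~(b).

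Second, and more importantly, there is a gap in your minimality argument. You set $X'':=X'\setminus V(C_{\text{curr}})$ and assert that $X''$ is a colorful solution, but you never verify $\dpt(X'')\le k-1$. This is not automatic: passing to a subset of an elimination set can strictly increase its depth (for instance, $V(K_{1,n})$ has depth~$1$, but $V(K_{1,n})$ minus the center has depth~$n-1$). The repair is short but uses more than the \emph{statement} of Lemma~\ref{lem_incl-min}: an inspection of its proof shows that the produced $X'$ always satisfies $X'\cap V(C')=\emptyset$ (in Case~1 one literally sets $X'=X\setminus V(C')$, and the other cases reduce to this), so in fact $X''=X'$ and the depth bound follows from property~(c). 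The paper's own proof is terser at this point (``Lemma~\ref{lem_incl-min} immediately implies $C=\tilde C$''), so your more explicit write-up actually surfaces a subtlety the paper glosses over; it just needs this one additional observation to close.
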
 
 
\begin{proof} 
To prove the lemma, we show the following claim. If the subroutine  $\textsc{FindC}$ is called for $(\tilde{C},\tilde{S},\tilde{h})$ such that 
(a) $V(C)\subseteq V(\tilde{C})$, (b) $\tilde{S}=N_G(V(\tilde{C}))$, (c) $\tilde{S}\subseteq S$, and (d) $\tilde{h}=k-|\tilde{S}|$, then either the subroutine outputs $\tilde{C}$ and $\tilde{S}$ or it recursively calls  $\textsc{FindC}(\tilde{C}',\tilde{S}',\tilde{h}')$, where 
 (a$'$) $V(C)\subseteq V(\tilde{C}')$, (b$'$) $\tilde{S}=N_G(V(\tilde{C}'))$, (c$'$) $\tilde{S}'\subseteq S$, and (d$'$) $\tilde{h}'=k-|\tilde{S}'|$. 
 
 Notice that $\tilde{h}=k-|\tilde{S}|\geq 0$, because $|S|\leq k$. Hence, if  $(\tilde{C},\mathbf{v})\models\varphi[\mathbf{x}]$, then   $\textsc{FindC}(\tilde{C},\tilde{S},\tilde{h})$ outputs  $\tilde{C}$ and $\tilde{S}$ in the first step, and the claim holds. Assume that
 $(\tilde{C},\mathbf{v})\not\models\varphi[\mathbf{x}]$.   Because the subroutine is called  only for connected induced subgraphs of $G$, we have that 
 $\tilde{S}\subset S$ and, therefore, $\tilde{h}>0$. This implies that the subroutine does not stop in the second step. Then it proceeds to the third step and finds  an $s$-tuple $\mathbf{u}=\langle u_1,\ldots,u_s\rangle$ of vertices of $\tilde{C}$ such that $(\tilde{C},\mathbf{vu})\not\models\varphi[\mathbf{xy}]$. Because $(C,\mathbf{v})\models\varphi[\mathbf{x}]$, there is a $j\in \{1,\ldots,s\}$ such that $u_j\notin V(C)$. We consider the following two cases. 
 
\medskip
\noindent
{\bf Case~1.} $u_j\in S$. Notice that because $X$ is a colorful solution, $u_j$ is blue in this case. Observe also that $\tilde{C}-u_j$ has a component $\tilde{C}'$ such that $V(C)\subseteq V(\tilde{C}')$. 
Then the subroutine calls $\textsc{FindC}(\tilde{C}',\tilde{S}',\tilde{h}')$, where $\tilde{S}'=\tilde{S}\cup\{u_j\}$ and $\tilde{h}'=\tilde{h}-1$. It is easy to see that (a$'$)--(d$'$) are fulfilled for $\tilde{C}'$, $\tilde{S}'$, and $\tilde{h}'$. 

 \medskip
\noindent
{\bf Case~2.} $u_j\in U$. As $X$ is colorful, $u_j$ is red in this case. Let $H$ be the red component of $\tilde{C}$ containing $u_j$ and let $W=V(H)$. Because $X$ is a colorful solution, we have that $W\subseteq U$ and $N_{\tilde{C}}(W)\subseteq S$. Then $G-N_G[W]$ has a component $\tilde{C}'$ such that $V(C)\subseteq V(\tilde{C}')$. Then the subroutine calls $\textsc{FindC}(\tilde{C}',\tilde{S}',\tilde{h}')$, where $\tilde{S}'=\tilde{S}\cup N_{\tilde{C}}(W)$ and $\tilde{h}'=\tilde{h}-|N_{\tilde{C}}(W)|$. We obtain that (a$'$)--(d$'$) are fulfilled for $\tilde{C}'$, $\tilde{S}'$, and $\tilde{h}'$. This concludes the case analysis and the proof of the claim.

\medskip
Observe that conditions (a)--(d) of the claim are fulfilled if $C=G$, $S=\emptyset$, and $h=k$. Then the inductive application of the claim proves that there is a leaf of the search tree for which it outputs $\tilde{C}$ and  $\tilde{S}$ such that 
(a) $V(C)\subseteq V(\tilde{C})$, (b) $\tilde{S}=N_G(V(\tilde{C}))$, and (c) $\tilde{S}\subseteq S$. Recall that $X$ is an inclusion minimal colorful solution. Then Lemma~\ref{lem_incl-min} immediately implies that $C=\tilde{C}$ and $S=\tilde{S}$ and this concludes the proof.
 \end{proof} 
 
 Note that the number of branches of every node of the search tree produced by  $\textsc{FindC}(G,\emptyset,k)$ is at most $s$ and the depth of the search tree is at most $k$. This implies that the search tree has at most $s^k$ leaves.  By Lemma~\ref{lem_correct_one}, if $(G,k)$ has an inclusion minimal colorful solution $X$, then the subroutine outputs the corresponding big component $C$ containing $v_1,\ldots,v_r$ and $S$. We consider all pairs $(C,S)$ produced by  $\textsc{FindC}(G,\emptyset,k)$ and for each of these pairs, we verify whether there is a colorful solution corresponding to it. If we find such a solution we return {\sf yes} (or return the solution), and we return {\sf no} if we fail to find a clorful solution for each $C$ and $S$.  In the last case we conclude that we have no colorful solution and discard the current choice of $R\in \mathcal{F}$. 
  
Assume that $C$ and $S$ are given. Recall that $v_i\in V(C)$ for $i\in\{1,\ldots,r\}$, $S=N_G(V(C))$, and $(G,\mathbf{v})\models \varphi[\mathbf{x}]$.
First, we check whether $C$ is a big components of $G-S$ by verifying whether $|V(C)|\geq p+1$. Clearly, if $|V(C)|\leq p$, $C$ cannot be a big component of $G-X$ for a solution $X$ and we discard the considered choice of $C$ and $S$. Assume that this is not the case, that is, $|V(C)|\geq p+1$. Then because $G$ is a $(p,k)$-unbreakable graph, we have that $|V(G)\setminus N_{G}[V(C)]|\leq p$. We use brute force and consider every subset $Y\subseteq V(G)\setminus N_G[V(C)]$ and then verify whether (i) $X=S\cup Y$ is an elimination set of depth at most $k-1$ and (ii) for every component $C'\neq C$ of $G-X$, $C'\models \varphi$. Note that checking (i) can be done by Lemma~\ref{lem_repr-find}
in $(k+p)^{\Oh(k)}\cdot n^{\Oh(1)}$ time and (ii) can be verified in $n^{\Oh(|\varphi|)}$ time by Observation~\ref{obs:mod-check}. If we find $X=S\cup Y$ satisfying (i) and (ii), then we conclude that $X$ is a solution and return {\sf yes}. Otherwise, if we fail to find such a set, we return {\sf no}.

This concludes the description of the algorithm for \probED{\edphione} and its correctness proof.  We summarize in the following lemma.

\begin{lemma}\label{lem_alg-one}  
\probED{\edphione} on $(p,k)$-unbreakable graphs for $\varphi\in\Sigma_3$ can be solved in $2^{\Oh((p+k)\log (p+k) )}\cdot n^{\Oh(|\varphi|)}$ time.
\end{lemma}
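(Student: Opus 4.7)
The plan is to assemble the ingredients developed immediately above the lemma --- the structural decomposition, the colour-guessing via Lemma~\ref{lem_derand}, the branching subroutine \textsc{FindC}, and a final brute-force completion over the small side of the partition --- into a single nested-loop algorithm, and then bound its running time. First I would dispose of the easy cases: since $\edphi{\edphione}$ decomposes over components, we may assume $G$ is connected; and if $|V(G)|\le (3p+2k)(p+1)$, Lemma~\ref{lem_bounded-size} already gives a bound that is absorbed into $2^{\Oh((p+k)\log(p+k))}\cdot n^{\Oh(|\varphi|)}$. From here on we assume $G$ is large.

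The main regime proceeds in four nested loops. First, enumerate all $R\in\mathcal{F}$ produced by Lemma~\ref{lem_derand} applied with $a=p$, $b=k$; this guarantees that for any candidate sets $U$ (with $|U|\le p$) and $S$ (with $|S|\le k$) there is an $R$ satisfying $U\subseteq R$, $S\cap R=\emptyset$. Second, enumerate all $r$-tuples $\mathbf{v}$ of vertices and recolour them red. Third, run $\textsc{FindC}(G,\emptyset,k)$, collecting every leaf output $(C,S)$. Fourth, for each such output verify that $|V(C)|\ge p+1$ and then enumerate all subsets $Y\subseteq V(G)\setminus N_G[V(C)]$; by Lemma~\ref{lem_es_bound} this set has at most $p$ elements. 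For every candidate $X=S\cup Y$, test via Lemma~\ref{lem_repr-find} that $\dpt(X)\le k-1$ and via Observation~\ref{obs:mod-check} that $C'\models\varphi$ for each component $C'$ of $G-X$. If some branch succeeds, return \yes; otherwise return \no.

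Correctness proceeds as follows. If $(G,k)$ is a yes-instance then by Lemma~\ref{lem_incl-min} we may fix an inclusion-minimal solution $X$ with big component $C$, $S=N_G(V(C))$ and $U=V(G)\setminus N_G[V(C)]$; Lemma~\ref{lem_es_bound} bounds $|S|\le k$ and $|U|\le p$, so some $R\in\mathcal{F}$ renders $X$ colourful. Since $|V(C)|\ge p+1$ we must hit an $r$-tuple $\mathbf{v}$ with all entries in $V(C)$ and $(C,\mathbf{v})\models\varphi[\mathbf{x}]$. Lemma~\ref{lem_correct_one} then guarantees that some leaf of \textsc{FindC} reports exactly the pair $(C,S)$, after which $Y=X\setminus S$ will be enumerated and verified. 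Conversely, any successful branch directly yields a witness.

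For the running time, $|\mathcal{F}|=2^{\Oh((p+k)\log(p+k))}\log n$; the enumeration of $\mathbf{v}$ costs $n^r=n^{\Oh(|\varphi|)}$; the \textsc{FindC} search tree has branching at most $s\le|\varphi|$ and depth at most $k$, so at most $s^k$ leaves, with $n^{\Oh(|\varphi|)}$ work per node for the model check producing the witnessing $\mathbf{u}$; the final loop has $2^p$ subsets $Y$, each tested in $(p+k)^{\Oh(k)}\cdot n^{\Oh(1)} + n^{\Oh(|\varphi|)}$ time. Multiplying and collapsing the factors $2^p$, $s^k$, and $|\mathcal{F}|$ into a single $2^{\Oh((p+k)\log(p+k))}$ term (using $s\le|\varphi|$ only inside the polynomial $n^{\Oh(|\varphi|)}$) yields the stated bound. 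The main obstacle I anticipate is not correctness, which is already packaged by Lemmas~\ref{lem_correct_one} and \ref{lem_incl-min}, but the careful accounting that $s^k$, the $2^p$ enumeration of $Y$, and the depth-verification factor $(p+k)^{\Oh(k)}$ all fit within $2^{\Oh((p+k)\log(p+k))}$ while keeping the $n$-dependence at $n^{\Oh(|\varphi|)}$ rather than letting the $|\varphi|$-dependence leak into the exponential part.
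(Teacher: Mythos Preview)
Your proposal is correct and follows essentially the same approach as the paper, whose proof of this lemma only evaluates the running time of the algorithm already laid out in the preceding text; your write-up recapitulates both the algorithm and the time analysis with the same ingredients and in the same order. One minor clarification on your closing worry: the factor $s^k$ (written $|\varphi|^k$ in the paper) is absorbed into $2^{\Oh((p+k)\log(p+k))}$ not through $n^{\Oh(|\varphi|)}$ but simply because $\varphi$ is fixed, so $s$ is a constant and $s^k=2^{\Oh(k)}$ with the hidden constant depending on $|\varphi|$.
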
 
 
\begin{proof} 
Since the correctness of the algorithm was already established, it remains to evaluate the total running time. 
Recall that if  $|V(G)|\leq (3p+2k)(p+1)$, then the problem is solved in $(p+k)^{\Oh(k+|\varphi|)}$ time. Otherwise, we construct $\mathcal{F}$ of size at most
$2^{\Oh(\min\{p,k\}\log(p+k))}\cdot \log n$ in $2^{\Oh(\min\{p,k\}\log(p+k))}\cdot n\log n$ time. Then for every $R\in \mathcal{F}$, we try to find a colorful solution. For this, we first guess $\mathbf{v}$. Clearly, we have $n^{\Oh(|\varphi|)}$ possibilities for the choice of $\mathbf{v}$. Then we run 
the subroutine $\textsc{FindC}(C,S,h)$. Note that the search tree produced by the subroutine has at most $|\varphi|^k$ leaves and each call (without recursive calls) requires $n^{\Oh(|\varphi|)}$ time. Then the running time of the subroutine is $|\varphi|^k\cdot n^{\Oh(|\varphi|)}$. We consider the pairs $(C,S)$ produced by the subroutine, and for each $C$ and $S$, we verify whether we have a corresponding colorful solution $X$. The brute force selection of $X$ can be done in $2^{\Oh(p)}$ time. Then checking whether $X$ is a solution requires  $(k+p)^{\Oh(k)}\cdot n^{\Oh(1)}$. Then we conclude that the total running time is $2^{\Oh((p+k)\log (p+k) )}\cdot n^{\Oh(|\varphi|)}$.
\end{proof}

\paragraph{Algorithm for \probED{\edphitwo}.}  Let $(G,k)$ be an instance of \probED{\edphitwo}, where $G$ is a $(p,k)$-unbreakable graph. We check whether $G\models \varphi$ and immediately return {\sf yes} if this is fulfilled. Assume that this is not the case and that $\edphi{\edphitwo}(G)\geq 1$. Then we can assume without loss of generality that $G$ is connected.
Otherwise, because $\edphi{\edphitwo}(G)=\max\{1,\max\{\edphi{\edphitwo}(C)\mid C\text{ a component of }G\}\}$, we can solve the problem for each component separately. In the same way as with \probED{\edphione},  we solve the problem in $(p+k)^{\Oh(k+|\varphi|)}$ time by 
Lemma~ \ref{lem_bounded-size} if $|V(G)|\leq (3p+2k)(p+1)$. Therefore, from now on, we may assume that  $|V(G)|> (3p+2k)(p+1)$.

Let $(T,\alpha)$ be a representation of an elimination set $X$. Recall that 
 $\property_x$ denotes the set of components of $G-X$ anchored in $x$, where $x$ is a node of $T$. Also $G_x$ denotes the subgraph of $G$ induced by the vertices of the graphs of $\property_x$, that is, $G_x$ is the union of the components of $G-X$ anchored in $x$. 
By Lemma~\ref{lem_elim-dist-two}, 
$(G,k)$ is a yes-instance of \probED{\edphitwo} if and only if  $G$ contains an elimination set $X$ of depth at most $k-1$ with 
a representation $(T,\alpha)$ such that 
\begin{itemize}
\item[(i)] for every nonleaf node $x\in V(T)$, $C\models \varphi$ for every  $C\in\property_x$, 
\item[(ii)] for every leaf $x$ of $T$ with $\dpt_T(x)\leq k-2$, either $G_x\models \varphi$ or $C\models \varphi$ for every  $C\in\property_x$, and \item[(iii)] for every leaf $x$ of $T$ with $\dpt_T(x)=k-1$,  $G_x\models \varphi$.
\end{itemize}
We call such a set $X$ a \emph{solution}. 
 We observe that, given a set $X$, we can decide whether $X$ is a solution.

\begin{lemma}\label{lem_compute-repr}
Let $X\subseteq V(G)$ be nonempty. It can be decided in $|X|^k\cdot n^{\Oh(|\varphi|)}$ time whether $X$ has a representation $(T,\alpha)$ satisfying (i)--(iii).
\end{lemma}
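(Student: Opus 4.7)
The plan is to design a top-down branching procedure that enumerates, implicitly, only nice representations of $X$ — which by the second claim of Lemma~\ref{lem_nice} suffice. A nice representation enjoys a clean decomposition: once the root $v = \alpha(r)$ is fixed, the vertices of $X \setminus \{v\}$ partition according to the components of $G' - v$, and each part $X_i = X \cap V(C_i)$ must itself be represented by a nice subtree inside $C_i$ of depth one less. The components $C_j$ of $G' - v$ with $X_j = \emptyset$ are exactly the members of $\property_v$ at this level of the tree, so condition (i) for $v$ reduces to requiring $C_j \models \varphi$ for each such $j$.

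Concretely, I would implement a subroutine $f(G', X, d)$ that returns true iff $X$ admits a representation in $G'$ of depth at most $d$ satisfying (i), together with the analogues of (ii) and (iii) in which the threshold ``leaf at maximum depth'' is interpreted as ``leaf at depth $d$ of the current subtree.'' We start with $f(G, X, k-1)$. Given a call $f(G', X, d)$, we iterate over every candidate root $v \in X$. If $X = \{v\}$, then $v$ is a leaf at subtree-depth $0$, and we accept when either $d = 0$ and $G' - v \models \varphi$ (discharging (iii)), or $d > 0$ and either $G' - v \models \varphi$ or every component of $G' - v$ models $\varphi$ (discharging (ii)). Otherwise $v$ is a nonleaf, in which case we require $d \geq 1$, verify that each component $C_j$ of $G' - v$ disjoint from $X$ satisfies $C_j \models \varphi$ (discharging (i) for $v$, as these are exactly $\property_v$ at this level), and recursively check $f(C_i, X_i, d-1)$ for every $i$ with $X_i \neq \emptyset$. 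Correctness is immediate from the decomposition property of nice representations described above, combined with Lemma~\ref{lem_nice}.

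For the runtime, let $p(n) = n^{\Oh(|\varphi|)}$ bound the cost of one call excluding recursion — dominated by the bounded number of \probMC queries on induced subgraphs (Observation~\ref{obs:mod-check}) — and let $w(n,d)$ bound the total work when $|X| = n$. We have
\[
w(n,d) \leq n \cdot p(n) + \sum_{v \in X} \sum_i w(n_i^{(v)}, d-1).
\]
Using the elementary inequality $\sum_i n_i^d \leq (\sum_i n_i)^d$ (valid for $n_i \geq 1$ and $d \geq 1$, since $\sum (n_i/N)^d \leq \sum (n_i/N) = 1$) and the identity $(n-1)^d + 1 \leq n^d$ (read off from the binomial expansion of $n^d$), an induction on $d$ gives $w(n,d) \leq n^{d+1} \cdot p(n)$, and at $d = k-1$ this yields $|X|^k \cdot n^{\Oh(|\varphi|)}$ as claimed.

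The main subtlety is to align the leaf conditions of Lemma~\ref{lem_elim-dist-two} with the recursion: (iii) must fire exactly for leaves at absolute depth $k-1$ in $T$, which in the subroutine corresponds precisely to leaves created when the remaining depth budget $d$ has dropped to $0$. Tracking $d$ as a remaining budget, rather than as an absolute depth, keeps this bookkeeping transparent and leaves the $n^{d+1}$ running-time analysis unchanged from level to level.
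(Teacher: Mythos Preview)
Your proposal is correct and follows essentially the same approach as the paper: a top-down recursion over nice representations (justified via Lemma~\ref{lem_nice}), branching on the root vertex, checking condition~(i) on the $X$-free components of $G'-v$, and handling the leaf case against the remaining depth budget to discharge (ii)/(iii). Your running-time analysis via the recurrence $w(n,d)\le n^{d+1}p(n)$ is a bit more careful than the paper's (which simply bounds the branching by $|X|$ and the depth by $k$), but arrives at the same $|X|^k\cdot n^{\Oh(|\varphi|)}$ bound.
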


\begin{proof}
Because $G$ is connected and $k\geq 1$, it is sufficient to verify the existence of a nice representation. We do it by a recursive algorithm that for a given $x\in X$ finds a nice representation $(T,\alpha)$ such that $\alpha(r)=x$, where $r$ is the root of $T$. More precisely, given a graph $G$, a nonempty $X\subseteq V(H)$, a vertex $x\in X$, and a positive integer $k$, the algorithm find a nice representation $(T,\alpha)$ of $X$ satisfying (i)--(iii) such that $\alpha(r)=x$ if such a representation exists. 

Suppose that $|X|=1$, that is, $X=\{x\}$.
If  $G-x\models\varphi$, the algorithm returns a single-vertex tree rooted in $r$ with $\alpha(r)=x$. 
If $G-x\not\models\varphi$ and $k\geq 2$, we check whether $C\models \varphi$ for every component $C-x$. If this holds, then again, the algorithm returns a single-vertex tree rooted in $r$ with $\alpha(r)=x$. In all other cases, the algorithm returns {\sf no}.

Suppose from now that $|X|\geq 2$. If $k=1$, then we immediately return {\sf no} and stop. Also if there is a component $C$ of $G-x$ such that  
$V(C)\cap X=\emptyset$ and $C\not\models \varphi$, the the algorithm returns {\sf no} and stops. Assume that these are not cases. Let $C_1,\ldots,C_s$ be the components of $G-x$ such that $X_i=X\cap V(C_i)\neq \emptyset$. 

For every $i\in\{1,\ldots,s\}$, we call the algorithm recursively for $C_i$, $X_i$, every $y\in X_i$, and $k-1$. If there is $i\in\{1,\ldots,s\}$ such that the algorithm failed to produce a representation for every choice of $y\in X_i$, the algorithm returns {\sf no} and stops. Otherwise, the algorithm 
finds for every $i\in\{1,\ldots,s\}$ a vertex $x_i\in X_i$ and a nice representation $(T_i,\alpha_i)$ of $X_i$ in $C_i$ satisfying (i)--(iii) (with respect to the new parameters) such that the root $r_i$ is mapped to $x_i$ by $\alpha_i$. We construct $T$ from $T_1,\ldots,T_s$ by creating a root $r$ and making it the parent of $r_1,\ldots,r_s$. Then
\begin{equation*}
\alpha(z)=
\begin{cases}
x,&\mbox{if }z=r,\\
\alpha_i(z),&\mbox{if }z\in V(T_i)\text{ for some }i\in\{1,\ldots,s\}.
\end{cases}
\end{equation*}
 
This completes the description of the algorithm. It is straightforward to verify its  correctness using the definition of a nice representation  of an elimination set. To decide whether $X$ has a representation $(T,\alpha)$ satisfying (i)--(iii), we run the algorithm for all $x\in X$. Clearly, a representation exists if and only if the algorithm produces a representation for some choice of $x$. Since in each call of the algorithm, we make at most $|X|$ recursive calls and the depth of the recursion is at most $k$, the total running time is $|X|^k\cdot n^{\Oh(|\varphi|)}$. 
\end{proof}

Suppose that $(G,k)$ is a yes-instance and let $X$ be a solution with a nice representation $(T,\alpha)$. 
 By Lemma~\ref{lem_es_bound}, $|X|\leq p+k$ and   there is a unique big component $C$ of $G-X$ with at least $p+1$ vertices, the other components are small,  and $|V(G)\setminus N_G[V(C)]|\leq p$.  By Observation~\ref{obs:anchor}, 
$N_G(V(C))\subseteq \alpha(A_T(x))$, where $x$ is an anchor of $C$. In particular, this means that $|N_G(V(C))|\leq k$. 
As with the algorithm for \probED{\edphione}, our aim is to identify $C$. We consider two possibilities for $C$.

First, we try to find $C$ assuming that one of the following holds: either (a) the anchor of $C$ is not a leaf of $T$ or (b) the anchor $x$ is leaf but $\dpt_T(x)<k-1$ and $C'\models \varphi$ for every  $C'\in\property_x$, or (c) $G_x=C$. In this case, the algorithm is essentially identical to the algorithm for \probED{\edphione}. We use Lemma~\ref{lem_derand} to highlight $S$ and $U=V(G)\setminus N_G[V(C)]$. Then we guess $\mathbf{v}$ in $C$
and call the subroutine $\textsc{FindC}(G,\emptyset,k)$ to enumerate all candidate big components $C$ and $S=N_G(V(S))$. The difference occurs only in the last step of the algorithm, where we find a solution $X$.  
We use brute force and consider every subset $Y\subseteq V(G)\setminus N_G[V(C)]$ and then verify whether $X=S\cup Y$ is an elimination set of depth at most $k-1$ satisfying (i)--(iii) using Lemma~\ref{lem_compute-repr}.
 If we find a required $X$, then we conclude that $X$ is a solution and return {\sf yes}. Otherwise, if we fail to find such a set for every candidate $C$, we return {\sf no} for the considered set $R$ and discard it. The correctness is proved and the running time is analysed in exactly the same way as for \probED{\edphione}.

Next, if we failed to find a solution so far, we consider the remaining possibility that the anchor $x$ of $C$ is a leaf of $T$ and $G_x\models\varphi$, where $G_x$ is a disconnected graph. Our algorithm for this case uses the same approach as the algorithm for \probED{\edphione} but the arguments are more involved, as we aim to identify  $C$ together with the other components of $G_x$. In other words,  we find $G_x$.

Let $S=N_G(V(G_x))$. Note that $S\subseteq \alpha(A_T(x))$ and, therefore, $|S|\leq k$. Observe that $N_G(V(C))\subseteq S$. 
 Let also $U=V(G)\setminus (V(C)\cup S)$. Because $C$ is a big component and $G$ is $(p,k)$-unbreakable, $|U|\leq p$. 

Similarly to the algorithm for \probED{\edphione}, we use  Lemma~\ref{lem_derand} to highlight hypothetical $S$ and $U$. By this lemma, we can construct in $2^{\Oh(\min\{p,k\}\log(p+k))}\cdot n\log n$ time a family $\mathcal{F}$ of at most $2^{\Oh(\min\{p,k\}\log(p+k))}\cdot \log n$ subsets of $V(G)$ such that there is $R\in\mathcal{F}$ such that $U\subseteq R$ and $S\cap R=\emptyset$. In our algorithm, we go over all sets $R\subseteq \mathcal{F}$ and for each set $R$, we check whether there is a solution $X$ such that  $U\subseteq R$ and $S\cap R=\emptyset$ for the sets $S$
  and $U$ corresponding to $X$. Clearly, $(G,k)$ is a yes-instance of \probED{\edphitwo} if and only if there is  $R\subseteq \mathcal{F}$ and a solution $X$ with the required property. 
  
From now on we assume that   $R\subseteq \mathcal{F}$ is given. We set $B=V(G)\setminus R$. In the same way as before, we say that the vertices of $R$ are \emph{red}  and the vertices of $B$ are \emph{blue}. The components of $G[R]$ are called \emph{red components} of $G$ and the same convention is used for induced subgraphs of $G$. A solution $X$ is called \emph{colorful} if the vertivces of $U$ are red and the vertices of $S$ are blue. We aim to find a colorful solution. 

Assume that a colorful solution $X$ exists.  
Suppose that $w=\alpha(x)$ for the leaf $x$ of $T$ that is the anchor of $G_x$. Notice that $w\in B$. Then for every component $C'$ of $G_x$ distinct from $C$, we have that $C'$ is a red component and $z\in N_G(V(C'))$. We also observe that by the assumption for $R$, if $C'$ is a red component of $G$ such that $w\in N_G(V(H))$, then either $V(C')\subseteq V(C)$ or $C'$ is a component of $G_x$ distinct from $C$.  
Using these observations, we consider all possible choices of $w$ in $B$, and decide whether there is a colorful solution $X$ such that for the required $G_x$, the leaf $x$ of $T$ is mapped to $w$. We say that $X$ is a colorful solution \emph{attached} to $w$. 

From now we assume that $w$ is given. Let 
$W=\bigcup V(H)$, where the union is taken over all red components $H$ of $G$ such that $w\in N_G(V(H))$. Notice that if there is a colorful solution $X$ attached to $w$ for the considered choice of $w$, then $W\subseteq V(G_x)$ for the corresponding graph $G_x$. 

 Since we require that $G_x\models \varphi$, then there is an $r$-tuple $\mathbf{v}=\langle v_1,\ldots,v_r\rangle$ of  vertices of $G_x$ such that $(G_x,\mathbf{v})\models \varphi[\mathbf{x}]$. Using brute force, we consider all $r$-tuples $\mathbf{v}=\langle v_1,\ldots,v_r\rangle$ of  vertices of $G$ distinct from $w$, and for each $\mathbf{v}$, we check whether there is a colorful solution $X$ with $G_x$ such that $v_i\in V(G_x)$ for all $i\in\{1,\ldots,r\}$.  Note that at most $n^r$ $r$-tuples $\mathbf{v}$ can be listed in $n^{\Oh(|\varphi|)}$ time. The algorithm returns {\sf yes} if we find a colorful solution attached to $w$ for some choice of $\mathbf{v}$, and it concludes that there is no colorful solution for the considered selection of $R$ otherwise.  
  
From this point we assume that $\mathbf{v}=\langle v_1,\ldots,v_r \rangle$ is fixed. Because these vertices should be in $G_x$, we temporarily (i.e., only for the current choice of $\mathbf{v}$) recolor them red to simplify further notation and recompute $W$ if necessary. 
We apply a recursive branching algorithm to find $F=G_x$ and $S$.  Similarly to the subroutine $\textsc{FindC}$, we construct the subroutine $\textsc{FindF}(F,S,h)$, where initially $F=G-w$, $S=\{w\}$, and $h=k-1$.

\medskip
\noindent
{\bf Subroutine} $\textsc{FindF}(F,S,h)$.
\begin{itemize}
\item If $(F,\mathbf{v})\models\varphi[\mathbf{x}]$ and $h\geq 0$, then return $F$, $S$, and stop.
\item If $(F,\mathbf{v})\not\models\varphi[\mathbf{x}]$ and $h\leq 0$, then stop.
\item If  $h\geq 1$ and there is an $s$-tuple $\mathbf{u}=\langle u_1,\ldots,u_s\rangle$ of vertices of $F$ such that $(F,\mathbf{vu})\not\models\varphi[\mathbf{xy}]$, then do the following for every $j\in\{1,\ldots,s\}$.
\begin{itemize} 
\item If $u_j\in B$ and there is an induced subgraph $F'$ of $F$ that is the disjoint union of the components of $F-u_j$ containing vertices of $W$ and vertices $v_i$ for $i\in\{1,\ldots,r\}$, then call $\textsc{FindF}(F',S\cup\{u_j\},h-1)$.
\item If $u_j\in R$ and there is a red component $H$ of $C$ with  
the set of vertices $Z$ and $S'=N_F[Z]$ such that (a) $u_j\in Z$, (b) $Z\cap W=\emptyset$ and $v_i\notin Z$ for all $i\in\{1,\ldots,r\}$, 
 (c) $|S'|\leq h$, and (d) there is an induced subgraph $F'$ of $F$ that is a disjoint union of the components of $F-N_F[W]$ containing  vertices of $W$ and  vertices $v_i$ for some $i\in\{1,\ldots,r\}$, then  call $\textsc{FindF}(F',S\cup S', h-|S'|)$.
 \end{itemize}
\end{itemize}

In the same way as with Lemma~\ref{lem_correct_one}, we show the following.

\begin{lemma}\label{lem_correct-two}
If $X$ is an inclusion minimal colorful solution attached to $w$ for $(G,k)$ with such that (a) $X$ has a representation $(T,\alpha)$ with $\alpha(x)=w$, (b) $W\subseteq V(G_x)$, (c) $v_i\in V(G_x)$ for all $i\in\{1,\ldots,s\}$ and  $(G_x,\mathbf{v})\models\varphi[\mathbf{x}]$, then there is a leaf of the search tree produced by  $\textsc{FindC}(G,\{w\},k-1)$ for which the subroutine outputs $F$ and $S=N_G(V(F))$.
\end{lemma}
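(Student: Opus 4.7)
The plan is to mirror the inductive argument of Lemma~\ref{lem_correct_one} along a carefully chosen branch of the search tree produced by $\textsc{FindF}(G-w,\{w\},k-1)$. At each call $\textsc{FindF}(\tilde{F},\tilde{S},\tilde{h})$ on the traced branch, I would maintain the invariants: (a) $V(G_x)\cup W\cup\{v_1,\ldots,v_r\}\subseteq V(\tilde{F})$; (b) $\tilde{S}\subseteq S$, where $S=N_G(V(G_x))$; (c) $\tilde{h}=k-|\tilde{S}|\geq 0$; and (d) $\tilde{F}$ is a union of components of $G-\tilde{S}$. These hold at the initial call because $w\in S$ (since $x$ anchors $G_x$, some vertex of $V(G_x)$ is adjacent to $w$) and $V(G_x)\cup W\cup\{v_1,\ldots,v_r\}$ is disjoint from $\{w\}$.

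For the inductive step, assume $(\tilde{F},\mathbf{v})\not\models\varphi[\mathbf{x}]$ and let $\mathbf{u}=\langle u_1,\ldots,u_s\rangle$ be the witness chosen by the subroutine. Because $\varphi\in\Sigma_3$, $(G_x,\mathbf{v})\models\varphi[\mathbf{x}]$, and $G_x$ is an induced subgraph of $\tilde{F}$, any $s$-tuple contained in $V(G_x)$ admits a $t$-tuple inside $V(G_x)$ satisfying the quantifier-free part $\chi$ in $G_x$, hence in $\tilde{F}$; so some $u_j$ falls outside $V(G_x)$ and thus $u_j\in V(\tilde{F})\setminus V(G_x)\subseteq S\cup U$, where $U=V(G)\setminus(V(G_x)\cup S)$. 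The colorful coloring forces $S$ to be blue and $U$ red, hence $u_j$ is either blue and in $S$, or red and in $U$. In the blue case, adding $u_j\in S$ to $\tilde{S}$ preserves (a)--(d). In the red case, the red component $H$ of $u_j$ inside $\tilde{F}$ coincides with its red component in $G$ (no red path leaves $U$ through the blue boundary $S$), it satisfies $V(H)\subseteq U$ and $N_G(V(H))\subseteq S$, and therefore the call that adds $N_{\tilde{F}}(V(H))$ to $\tilde{S}$ again preserves the invariants.

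Since $|\tilde{S}|$ grows strictly along the traced branch and is bounded by $|S|\leq k$, the branch reaches a call at which the first stopping clause of the subroutine fires, outputting $(\tilde{F},\tilde{S})$. The main obstacle, and the only real departure from the proof of Lemma~\ref{lem_correct_one}, is identifying this output with $(G_x,S)$ in the setting where $G_x$ is a disjoint union of several components anchored at $x$. I would address this by applying Lemma~\ref{lem_incl-min} separately to each component $C$ of $G_x$, taking $\tilde{S}\cap N_G(V(C))\subseteq N_G(V(C))$ as the set $S$ of the lemma and the component of $G-(\tilde{S}\cap N_G(V(C)))$ containing $V(C)$ as $C'$: the lemma produces an elimination set $X'\subseteq X$, and the inclusion minimality of $X$ forces $X'=X$. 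Clause~(a) of Lemma~\ref{lem_incl-min} then gives $N_G(V(C))\subseteq\tilde{S}$, and ranging over all components of $G_x$ yields $S\subseteq\tilde{S}$ and hence $\tilde{S}=S$. Clauses~(b), (d), and~(e) of the same lemma, combined with the invariant that the subroutine only retains components touching $W\cup\{v_1,\ldots,v_r\}$, then identify $\tilde{F}$ with the union of components of $G-S$ contained in $V(G_x)$, which is exactly $G_x$.
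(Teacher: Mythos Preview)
Your proposal is correct and takes essentially the same approach as the paper. The paper gives no proof for this lemma beyond the remark that it is shown ``in the same way as'' Lemma~\ref{lem_correct_one}; your sketch carries out exactly that adaptation, correctly isolating the one genuine departure---the component-by-component application of Lemma~\ref{lem_incl-min} in the final minimality step to handle the fact that $G_x$ is a disjoint union rather than a single component.
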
 
 
 Since the number of branches of every node of the search tree produced by  $\textsc{FindF}(G,\{w\},k-1)$ is at most $s$ and the depth of the search tree is at most $k$,  the search tree has at most $r^k$ leaves.  By Lemma~\ref{lem_correct-two}, if $(G,k)$ has an inclusion minimal colorful solution $X$ attached to $w$ with respect to some representation $(T,\alpha)$ of $X$, then the subroutine outputs the corresponding graph  $F=G_x$ containing $v_1,\ldots,v_r$ and $S$. We consider all pairs $(F,S)$ produced by  $\textsc{FindF}(G,\{w\},k-1)$ and for each of these pairs, we verify whether there is a colorful solution corresponding to it. If we find such a solution we return {\sf yes} (or return the solution), and we return {\sf no} if we fail to find a colorful solution for each $F$ and $S$.  In the last case we conclude that we have no colorful solution and discard the current choice of $R\in \mathcal{F}$. 
  
Assume that $F$ and $S$ are given. Recall that $v_i\in V(C)$ for $i\in\{1,\ldots,r\}$, $S=N_G(V(C))$, and $(G,\mathbf{v})\models \varphi[\mathbf{x}]$.
First, we check whether $F$ has a big components of $G-S$ by verifying whether $F$ has a component with at least $p+1$
 vertices. If we have no such a component,  we discard the considered choice of $F$ and $S$. Assume that this is not the case. Then because $G$ is a $(p,k)$-unbreakable graph, we have that $|V(F)\setminus N_{G}[V(F)]|\leq p$. We use brute force and consider every subset $Y\subseteq V(G)\setminus N_G[V(F)]$ and then verify whether $X=S\cup Y$ is a solution using Lemma~\ref{lem_compute-repr}.
 If we find a solution, we return {\sf yes}. Otherwise, if we fail to find $Y$ with the required properties, we return {\sf no}.
 
This concludes the description of the algorithm for \probED{\edphione} and its correctness proof.  We summarize in the following lemma that is proved in the same way as Lemma~\ref{lem_alg-one}.

\begin{lemma}\label{lem_alg-two}  
\probED{\edphitwo} on $(p,k)$-unbreakable graphs for $\varphi\in\Sigma_3$ can be solved in $2^{\Oh((p+k)\log (p+k) )}\cdot n^{\Oh(|\varphi|)}$ time.
\end{lemma}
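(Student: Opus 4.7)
The plan is to verify that the algorithm described above is correct and then to bound each of its ingredients, in a fashion parallel to the proof of Lemma~\ref{lem_alg-one}. First, I would dispose of the trivial cases: if $G\models \varphi$ the algorithm returns \yes in $n^{\Oh(|\varphi|)}$ time by Observation~\ref{obs:mod-check}; if $G$ is disconnected we recurse on components using $\edphi{\edphitwo}(G)=\max\{1,\max_C\edphi{\edphitwo}(C)\}$; and if $|V(G)|\leq (3p+2k)(p+1)$ we invoke Lemma~\ref{lem_bounded-size}, contributing $(p+k)^{\Oh(k+|\varphi|)}$. Thus I may assume $G$ is connected, not modelling $\varphi$, and large enough that Lemma~\ref{lem_es_bound} applies, giving $|X|\leq p+k$, a unique big component $C$ of $G-X$, and $|V(G)\setminus N_G[V(C)]|\leq p$.

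Next, correctness is split into the two complementary cases already identified. In Case~A, where either the anchor $x$ of $C$ is a non-leaf of $T$, or $x$ is a leaf with $C'\models \varphi$ for every $C'\in\property_x$, or $G_x=C$, one can argue exactly as in Lemma~\ref{lem_alg-one}: a suitable $R\in\mathcal{F}$ isolates $U=V(G)\setminus N_G[V(C)]$ and $S=N_G(V(C))$, after which $\textsc{FindC}$ correctly enumerates a candidate pair $(C,S)$ containing a prescribed tuple $\mathbf{v}$, and the $\Sigma_3$ shape of $\varphi$ forces any failing witness $\mathbf{u}$ to point to a vertex that must be removed. In Case~B, where the anchor $x$ is a leaf and $G_x$ is disconnected with $G_x\models\varphi$, the extra ingredient is the guess of the vertex $w=\alpha(x)\in B$ and the set $W$ of red components whose neighbourhoods contain $w$; Lemma~\ref{lem_correct-two} then certifies that $\textsc{FindF}(G-w,\{w\},k-1)$ encounters a leaf producing the correct $F=G_x$ and $S=N_G(V(G_x))$. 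For each such $(F,S)$, the algorithm brute-forces the at most $p$ vertices of $V(G)\setminus N_G[V(F)]$ and validates $X=S\cup Y$ as a solution via Lemma~\ref{lem_compute-repr}.

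For the running time, the family $\mathcal{F}$ produced by Lemma~\ref{lem_derand} has size $2^{\Oh(\min\{p,k\}\log(p+k))}\cdot \log n$ and is produced in time of the same flavour. For each $R\in\mathcal{F}$, Case~A runs in the same budget as in Lemma~\ref{lem_alg-one}, namely $2^{\Oh((p+k)\log(p+k))}\cdot n^{\Oh(|\varphi|)}$. For Case~B, I enumerate the $n$ choices of $w$, the $n^{\Oh(|\varphi|)}$ choices of $\mathbf{v}$, and then run $\textsc{FindF}$, whose search tree has branching at most $s\leq |\varphi|$ and depth at most $k$, so it has at most $|\varphi|^k$ leaves and each node costs $n^{\Oh(|\varphi|)}$ by Observation~\ref{obs:mod-check}. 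For each of the at most $|\varphi|^k$ output pairs $(F,S)$, the brute-force over $Y\subseteq V(G)\setminus N_G[V(F)]$ costs $2^{\Oh(p)}$ and each candidate is validated in $(k+p)^{\Oh(k)}\cdot n^{\Oh(|\varphi|)}$ time by Lemma~\ref{lem_compute-repr}. Multiplying through and absorbing lower-order factors yields the claimed bound $2^{\Oh((p+k)\log(p+k))}\cdot n^{\Oh(|\varphi|)}$.

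The main obstacle I anticipate is subtle, not combinatorial: one must check that restricting to \emph{inclusion-minimal} colourful solutions is legitimate in Case~B, because the ``anchored'' structure of $G_x$ involves both the big component $C$ and the extra small red components forming $G_x\setminus C$. For this, Lemma~\ref{lem_incl-min} applied with the set $S=N_G(V(G_x))$ (rather than $N_G(V(C))$) is needed to ensure that shrinking $X$ preserves the anchoring of $G_x$ to the same leaf $w$, so that $\textsc{FindF}$ is entitled to branch only on vertices of $V(F)$ and on red components avoiding both $W$ and $\{v_1,\dots,v_r\}$. Everything else is a direct adaptation of the Case~A arguments from Lemma~\ref{lem_alg-one}.
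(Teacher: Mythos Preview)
Your proposal is correct and follows essentially the same approach as the paper: the paper's own proof of this lemma is a single sentence stating that it ``is proved in the same way as Lemma~\ref{lem_alg-one},'' and your plan faithfully unpacks that by separately bookkeeping Case~A (identical to the \textsc{FindC} analysis) and Case~B (the extra guess of $w$ and the \textsc{FindF} branching), together with the brute-force over $Y$ and validation via Lemma~\ref{lem_compute-repr}. Your remark about needing Lemma~\ref{lem_incl-min} with $S=N_G(V(G_x))$ to justify restricting to inclusion-minimal colourful solutions in Case~B is exactly the point underlying Lemma~\ref{lem_correct-two}, so nothing is missing.
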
 

\paragraph{Algorithm for \probED{\edphthree}.} Our final task is to construct an algorithm for \probED{\edphthree}. 
Let $(G,k)$ be an instance of \probED{\edphthree}, where $G$ is a $(p,k)$-unbreakable graph. If $G\models \varphi$, then we return {\sf yes}. Assume that this is not the case and $\edphi{\edphthree}(G)\geq 1$. 

Suppose that $G$ is disconnected. Denote by $C_1,\ldots,C_s$ the components of $G$. Because $G$ is a $(p,k)$-unbreakable graph, at most one component can have more $p$ vertices. Then we can assume that $|V(C_i)|\leq p$ for every $i\in\{2,\ldots,s\}$. 
For each $i\in\{2,\ldots,s\}$, we solve \probED{\edphthree} for $(C_i,k-1)$ and $(C_i,k)$ in $2^p\cdot p^{\Oh(k+|\varphi|)}$ time using brute force.  
Let $i\in\{2,\ldots,s\}$. For each set $X\subseteq V(C_i)$, we check whether $\dpt(X)\leq k-2$ ($\dpt(X)\leq k-1$, respectively) applying Lemma~\ref{lem_repr-find},
and if this holds we verify whether  $G-X\models\varphi$. This can be done in $2^p\cdot p^{\Oh(k)}\cdot p^{\Oh(|\varphi|)}$ time. 
If we find that either there is $i\in\{2,\ldots,s\}$ such that $\edphi{\edphthree}(C_i)\geq k+1$ or 
there
are two distinct $i,j\in\{2,\ldots,s\}$ such that $\edphi{\edphthree}(C_i)=\edphi{\edphthree}(C_j)=k$, we return {\sf no} by Lemma~\ref{lem_depth-disconn}. If there is a unique $i\in\{2,\ldots,s\}$ with $\edphi{\edphthree}(C_i)=k$ and $\edphi{\edphthree}(C_j)\leq k-1$ for $j\in\{2,\ldots,s\}$ distinct from $i$, $(G,k)$ is a yes-instance if and only if $(C_1,k-1)$ is a yes-instance by Lemma~\ref{lem_depth-disconn}. If $\edphi{\edphthree}(C_i)\leq k-1$ for every $i\in\{2,\ldots,s\}$, then by the same lemma,  $(G,k)$ is a yes-instance if and only if $(C_1,k)$ is a yes-instance. Thus, we are able to reduce solving the problem on $G$ to solving it on $C_1$. This implies that we can assume without loss of generality that $G$ is connected.

If $|V(G)|\leq (3p+2k)(p+1)$, we again can solve the problem using brute force in $2^{(3p+2k)(p+1)}\cdot ((3p+2k)(p+1))^{\Oh(k+|\varphi|)}$  time
in the same way as above.
Then we assume that  $|V(G)|> (3p+2k)(p+1)$.

Given a subset $X\subseteq V(G)$, we can verify in $|X|^{\Oh(k)}\cdot n^{\Oh(1)}$ whether $\dpt(X)\leq k-1$ by Lemma~\ref{lem_repr-find} and then can check whether $G-X\models\varphi$ using Observation~\ref{obs:mod-check}. Based on this, we aim to find $X$ that we call a \emph{solution} in the same way as for the previously considered problems.

For \probED{\edphione}, we used the random separation technique to highlight a big component of $G-X$ (or rather its complement), and for \probED{\edphitwo}, besides a big component we had to highlight some specific small components composing $G_x$ together with the big component. Now we are highlighting the small components, $X$ and the neighborhood  $N_G(V(C))\subseteq X$ of the big component. 

Suppose that $(G,k)$ is a yes-instance with a solution $X$.
By Lemma~\ref{lem_es_bound}, $|X|\leq p+k$ and $|V(G)\setminus N_G[V(C)]|\leq p$, where $C$ is a big component of $G-X$. 
By Lemma~\ref{lem_derand}, we can construct the family $\mathcal{F}$ of subsets of $V(G)$ of size at most $2^{\Oh((p+k)\log(p+k))}\cdot \log n$ in 
$2^{\Oh((p+k)\log(p+k))}\cdot n\log n$ time such that if $(G,k)$ has a solution $X$, then $\mathcal{F}$ has a set $R$ such that $V(H)\subseteq R$ for every small component and $R\cap X=\emptyset$.  
Then for every $R\in \mathcal{F}$, we aim to find a solution $X$ such that the vertices of the small components of $G-X$ are in $R$ and $X\cap R=\emptyset$.

From this point  we assume that $R$ is given. Consider $U=V(G)\setminus R$.  If $C$ is a big component of a (hypothetical) solution $X$ satisfying the above conditions, then $N_G(V(C))\subseteq U$ and $|N_G(V(C))|\leq k$. Recall that $|X\setminus N_G(V(C))|\leq |V(G)\setminus N_G[V(C)]|\leq p$. Since $|U|\leq n$, applying Lemma~\ref{lem_derand} for $U$, we  construct the family $\mathcal{F}'$ of subsets of $U$ of size at most $2^{\Oh(\min\{p,k\}\log(p+k))}\cdot \log n$ in 
$2^{\Oh(\min\{p,k\}\log(p+k))}\cdot n\log n$ time such that   $\mathcal{F}'$ has a set $Y$ with the property that $X\setminus N_G(V(C))\subseteq Y$ and $N_G(V(C))\cap Y=\emptyset$.
We consider all $Y\in\mathcal{F}'$ and try find a solution $X$ such that 
\begin{itemize}
\item[(i)] the vertices of the small components of $G-X$ are in $R$,
\item[(ii)] for the big component $C$ of $G-X$, $X\setminus N_G(V(C))\subseteq Y$,
\item[(iii)] for the big component $C$ of $G-X$, $N_G(V(C))\subseteq B$, where $B=V(G)\setminus (R\cup Y)$.
\end{itemize}
If a solution $X$ satisfies (i)--(iii), then we say that $X$ is \emph{colorful}.

We say that  the vertices of $R$ are \emph{red}, the vertices of $Y$ are yellow, and the vertices of $B$ are \emph{blue}. The components of $G[R]$ are called \emph{red} and
the components of $G[R\cup Y]$ are called \emph{red-yellow} components of $G$, and we use the same term for the induced subgraphs of $G$.

Assume that $X$ is a colorful solution. Notice that if $H$ is a red component of $G$, then either $H$ is a small component of $G-X$ with $N_G(V(H))\subseteq X$ or $V(H)\subseteq V(C)$, where $C$ is the big component. Also we have that if $H$ is a red-yellow component of $G$, then either $V(H)\subseteq V(C)$ or every red component of $V(H)$ is a small component of $G-X$.
These are the crucial properties of colorful solutions exploited by our algorithm.

Because $G-X\models\varphi$ for a solution $X$, it should exist an $r$-tuple $\mathbf{v}=\langle v_1,\ldots,v_r\rangle$ of  vertices of $G-X$ such that $(G-X,\mathbf{v})\models \varphi[\mathbf{x}]$. In the same way as for the previous problem, we use brute force to list  all $r$-tuples $\mathbf{v}=\langle v_1,\ldots,v_r\rangle$ of  vertices of $G$. Then for each $\mathbf{v}$, we check whether there is a colorful solution $X$ such 
that $v_i\notin X$ for all  $i\in\{1,\ldots,r\}$ and $(G-X,\mathbf{v})\models \varphi[\mathbf{x}]$.
There are at most $n^r$ $r$-tuples $\mathbf{v}$ can be listed  $n^{\Oh(|\varphi|)}$ time. Our algorithm returns 
{\sf yes} if we find a colorful solution for some choice of $\mathbf{v}$, and it concludes that there is no colorful solution for the considered selection of $R$ otherwise.  

From now we assume that $\mathbf{v}=\langle v_1,\ldots,v_r \rangle$ is fixed. Again we observe that these vertices should not belong to $X$ and we recolor them red for the considered choice of $R$.
We use a recursive branching algorithm to find $X$.
The algorithm exploits  the subroutine $\textsc{FindX}(Z,h)$, where initially $Z=\emptyset$ and $h=p+k$.

\medskip
\noindent
{\bf Subroutine} $\textsc{FindX}(Z,h)$. 
\begin{enumerate}
\item Set $F:=G-Z$.
\item If $(F,\mathbf{v})\models\varphi[\mathbf{x}]$, $\dpt(Z)\leq k-1$, and $h\geq 0$, then return $Z$, and stop executing the algorithm.
\item If $(F,\mathbf{v})\not\models\varphi[\mathbf{x}]$ and $h\leq 0$, then stop executing the subroutine.
\item If  $h\geq 1$ and there is an $s$-tuple $\mathbf{u}=\langle u_1,\ldots,u_s\rangle$ of vertices of $F$ such that $(F,\mathbf{vu})\not\models\varphi[\mathbf{xy}]$, then do the following: 
\begin{itemize}
\item If $u_j\in R$ for every $j\in\{1,\ldots,s\}$, then stop executing the subroutine.
\item Otherwise, for every $j\in\{1,\ldots,s\}$ such that $u_j\in Y\cup B$, call $\textsc{FindX}(Z\cup\{u_j\},h-1)$. 
\end{itemize}
\item If $\dpt(Z)\geq k$, then for every $x\in Z$ such that there is a red-yellow component $H$ of $F$ with the properties (i) $|N_F(V(H))|\leq k$, (ii) $|V(H)|\leq p$,  
(iii) $x\in V(H)$, and (iv) $N_F[V(H)]\cap (B\cup Y)\neq \emptyset$,  set $S:=N_F[V(H)]\cap (B\cup Y)$ and  call $\textsc{FindX}(Z\cup S,h-|S|)$. 
\end{enumerate}

Notice that if the subroutine outputs $Z$, then we stop the algorithm and report that we found a solution. 
If we stop in other steps, then we only stop the execution of the subroutine for the current call.  
The crucial property of the subroutine are proved the following lemma. Since  $\textsc{FindX}(Z,h)$ substantially differs from $\textsc{FindC}(C,S,h)$ and $\textsc{FindF}(F,S,h)$, we provide the proof.

\begin{lemma}\label{lem_correct-three}
If $(G,k)$ has a colorful inclusion minimal solution $X$ with $v_i\in V(G)\setminus X$ for all $i\in\{1,\ldots,s\}$ such that $(G-X,\mathbf{v})\models\varphi[\mathbf{x}]$, then $\textsc{FindX}(\emptyset,p+k)$ returns $X$.
\end{lemma}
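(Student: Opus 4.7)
The plan is to inductively exhibit a distinguished branch of the recursion tree produced by $\textsc{FindX}(\emptyset,p+k)$ along which the invariant $Z\subseteq X$ and $h=(p+k)-|Z|$ is maintained at every call $\textsc{FindX}(Z,h)$, culminating in $Z=X$ with Step~2 returning $X$. Lemma~\ref{lem_es_bound} supplies $|X|\leq p+k$, guaranteeing $h\geq 0$ throughout. If $Z=X$, then $F=G-X$ satisfies $(F,\mathbf{v})\models\varphi[\mathbf{x}]$ by hypothesis, $\dpt(Z)\leq k-1$, and $h\geq 0$, so Step~2 fires and returns $Z$. If $Z\subsetneq X$, the inclusion-minimality of $X$ forbids a premature Step~2 return (otherwise $Z$ would itself witness $\edphi{\edphthree}(G)\leq k$), so the distinguished branch must be continued by either Step~4 or Step~5.

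The heart of the argument concerns Step~4, which triggers when $(F,\mathbf{v})\not\models\varphi[\mathbf{x}]$. I will prove that any witness tuple $\mathbf{u}$ picked at this step satisfies $u_j\in X\setminus Z$ for some coordinate $j$. This is exactly where the $\Sigma_3$-structure of $\varphi$ is used: if instead every $u_j$ belonged to $V(G-X)$, then $(G-X,\mathbf{v})\models\varphi[\mathbf{x}]=\forall\mathbf{y}\,\exists\mathbf{z}\,\chi$ would yield $\mathbf{z}\in V(G-X)^t$ with $\chi$ true on $\mathbf{vuz}$; as $G-X$ is an induced subgraph of $F$ and $\chi$ is quantifier-free, the same $\mathbf{z}\in V(F)^t$ would witness $(F,\mathbf{vu})\models\varphi[\mathbf{xy}]$, contradicting the choice of $\mathbf{u}$. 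Because colorfulness gives $X\setminus N_G(V(C))\subseteq Y$ and $N_G(V(C))\subseteq B$, we have $X\subseteq Y\cup B$, so $u_j\in Y\cup B$; thus Step~4 does not abort (it only aborts if every $u_j\in R$) and branches on $u_j$, producing a recursive call with $Z\cup\{u_j\}\subseteq X$ and $h-1=(p+k)-|Z\cup\{u_j\}|$, preserving the invariant.

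The main obstacle is the residual case $(F,\mathbf{v})\models\varphi[\mathbf{x}]$ with $\dpt(Z)\geq k$, where neither Step~2 nor Step~4 provides a next step and progress must come from Step~5. To settle this I would argue that any red-yellow component $H$ of $F$ chosen at Step~5 decomposes, via the colorful structure of $X$, into yellow vertices drawn from $X\setminus N_G(V(C))$ together with small components of $G-X$ lying in $R$; inclusion-minimality of $X$ combined with properties~(ii) and~(iii) of colorfulness then forces $S=N_F[V(H)]\cap(B\cup Y)$ to be contained in $X\setminus Z$, so the recursive call $\textsc{FindX}(Z\cup S,h-|S|)$ again maintains the invariant. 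Since $|Z|$ strictly increases at each step along the distinguished branch and is bounded above by $|X|\leq p+k$, the branch reaches $Z=X$ after at most $p+k$ recursive calls, at which point Step~2 returns $X$, completing the proof.
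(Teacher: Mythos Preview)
Your treatment of Step~4 is correct and matches the paper's: the $\Sigma_3$ structure indeed forces some $u_j\in X\setminus Z\subseteq Y\cup B$, so the branch on that $u_j$ preserves the invariant.

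The gap is in Step~5. You assert that \emph{any} red-yellow component $H$ of $F$ selected at Step~5 decomposes into yellow vertices of $X\setminus N_G(V(C))$ together with small components of $G-X$, and that consequently $S=N_F[V(H)]\cap(B\cup Y)\subseteq X$. This is not justified and in general false: colorfulness imposes no colour restriction on the vertices of the big component $C$, so $C$ may contain yellow and blue vertices, and a red-yellow component of $F$ of size at most $p$ can lie entirely inside $C$ (bounded by blue vertices of $C$). For such an $H$, its yellow vertices and its blue neighbours in $F$ belong to $V(C)$, not to $X$, so $S\not\subseteq X$ and that branch does \emph{not} preserve the invariant. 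Inclusion-minimality of $X$ does not rescue this; it only prevents a premature Step~2 return.

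What is missing is precisely the one place where the hypothesis $\dpt(Z)\geq k$ is actually exploited. The paper compares $\dpt(Z)\geq k$ against $\dpt(X)\leq k-1$: fixing a representation $(T,\alpha)$ of $X$, the fact that $Z\subseteq X$ cannot itself be represented at depth $\leq k-1$ forces the existence of $x,y\in Z$ whose lowest common ancestor $z$ in $T$ is a proper ancestor of both and satisfies $\alpha(A_T(z))\setminus Z\neq\emptyset$. Since $\alpha(A_T(z))$ separates $x$ from $y$ in $G$ and $N_G(V(C))$ lies along a single root-to-leaf path of $T$, at most one of $x,y$ can lie in $N_G[V(C)]$; say $x\notin N_G[V(C)]$. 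This \emph{specific} $x\in Z$ (necessarily yellow) singles out a red-yellow component $H$ disjoint from $C$, and for \emph{that} $H$ colorfulness does give $S\subseteq X$. You never invoke this depth-comparison argument, so you neither establish that Step~5 offers a branch at all nor identify one that stays inside $X$.
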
 
 
\begin{proof} 
The lemma is proved similarly to Lemma~\ref{lem_correct_one}. Let $t=p+k$. We show that the algorithm maintains the following property: if the subroutine  $\textsc{FindX}$ is called for $(Z,h)$ such that 
(a) $Z\subseteq X$ and (b) $h=t-|Z|$, then either the subroutine outputs $Z=X$  or it recursively calls  $\textsc{FindX}(Z',h')$, where 
(a$'$) $Z'\subseteq X$ and (b$'$) $h'=t-|Z'|$

In the first step, the algorithms sets $F=G-Z$. 
If $(F,\mathbf{v})\models\varphi[\mathbf{x}]$, $\dpt(Z)\leq k-1$, $h\geq 0$, and $(F,\mathbf{v})\models\varphi[\mathbf{x}]$, then $Z$ is a colorful solution and the algorithm returns  
return $Z$. Since $X$ is inclusion minimal, we have that $X=Z$. Thus, the claim holds.
Assume that this is not the case. Since $Z\subseteq X$, we have that $h\geq 1$, that is, the subroutine does not stop in step~3. 
Clearly, we have that $(F,\mathbf{v})\not\models\varphi[\mathbf{x}]$ and/or $\dpt(Z)\geq k$.

Suppose that  $(F,\mathbf{v})\not\models\varphi[\mathbf{x}]$.  Then there is an $s$-tuple $\mathbf{u}=\langle u_1,\ldots,u_s\rangle$ of vertices of $F$ such that $(F,\mathbf{vu})\not\models\varphi[\mathbf{xy}]$. This means that the subroutine executes step~4. As 
$(G-X,\mathbf{v})\models\varphi[\mathbf{x}]$ and $Z\subseteq X$, there is $j\in\{1,\ldots,s\}$ such  that $u_j\in X\setminus X$. Because $X$ is a colorful solution, $u_j\in B$ . Therefore, the subroutine calls 
$\textsc{FindX}(Z',h')$ for $Z'=Y\setminus\{u_j\}$ and $h'=h-1$. It is easy to see that (a$'$) and (b$'$) are fulfilled.

Assume that $(F,\mathbf{v})\models\varphi[\mathbf{x}]$. Then $\dpt(Z)\geq k$. 
Because $\dpt(X)\leq k-1$, $X$ has a representation $(T,\alpha)$ with $\dpt(T)\leq k-1$. 
Because $\dpt(T)\leq k-1$ and $\dpt(Z)\geq k$, there are vertices $x,y\in Z$ such that the nodes $x'=\alpha^{-1}(x)$ and $y'=\alpha^{-1}(y)$ have the lowest common ancestor $z$ in $T$ such that $z\neq x,y$ and it holds that $\alpha(A_T(s))\setminus Z\neq\emptyset$ and $\alpha(A_T(z))$ is an $(x,y)$-separator in $G$. 
In particular, $x$ and $y$ cannot be both in $N_G[V(C)]$. By symmetry, we can assume that $x\notin N_G[V(C)]$. This means that there is a red-yellow component $H$ of $F$ such that   
 properties (i)--(iv) of step 5 are fulfilled. Because $X$ is a colorful solution, we have that $S=N_F[V(H)]\cap (B\cup Y)\subseteq X$. Thus, (a$'$) and (b$'$) are fulfilled for $Z'=Z\cup S$ and $h'=h-|S|$. As the subroutine calls $\textsc{FindX}(Z',h')$, we conclude that the claim if fulfilled. 

\medskip
Recall that we call $\textsc{FindX}(\emptyset,p+k)$ and note that conditions (a) and (b) are trivially fulfilled for $Z=\emptyset$ and $h=t$. Observe also that in each recursive call of the subroutine the parameter $h$ strictly decreases. Thus, we conclude that we output $X$ is some recursive call of $\textsc{FindX}(Z,h)$.
\end{proof}

Lemma~\ref{lem_correct-three} concludes the description of the algorithm and its correctness proof. We summarize and evaluate the running time in the following lemma. 

\begin{lemma}\label{lem_alg-three}  
\probED{\edphthree} on $(p,k)$-unbreakable graphs for $\varphi\in\Sigma_3$ can be solved in $2^{\Oh((p+k)(\log (p+k)+p))}\cdot n^{\Oh(|\varphi|)}$ time.
\end{lemma}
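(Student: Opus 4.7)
The correctness of the algorithm was established by the description preceding the lemma together with Lemma~\ref{lem_correct-three}; what remains is to bound the total running time. I would first handle the reductions: each small component $C_i$ (with $|V(C_i)|\leq p$) of a disconnected $G$ is processed by brute force over $2^{|V(C_i)|}$ subsets, using Lemma~\ref{lem_repr-find} for the depth test and Observation~\ref{obs:mod-check} for the model check, at cost $2^p\cdot p^{\Oh(k+|\varphi|)}$ per component. The fallback for $|V(G)|\leq (3p+2k)(p+1)$ is solved similarly in $2^{\Oh(p(p+k))}\cdot n^{\Oh(k+|\varphi|)}$ time, which already matches the leading $2^p$-factor exponent in the claim.

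For the main case, Lemma~\ref{lem_derand} yields the family $\mathcal{F}$ of subsets of $V(G)$ of size $2^{\Oh((p+k)\log(p+k))}\cdot\log n$, and for each $R\in\mathcal{F}$ a second family $\mathcal{F}'$ of subsets of $U$ of the same size, both constructible within the same factor times $n\log n$. Enumerating the $r$-tuple $\mathbf{v}$ contributes an $n^{\Oh(|\varphi|)}$ factor. The heart of the analysis is the search tree of $\textsc{FindX}(\emptyset,p+k)$: step~4 branches over at most $s\leq|\varphi|$ witnesses $u_j\in B\cup Y$, while step~5 branches over at most $|Z|\leq p+k$ vertices $x\in Z$, so each node has at most $\Oh(|\varphi|+p+k)$ children; every recursive call strictly decreases $h$ (by $1$ in step~4, by $|S|\geq 1$ in step~5), so the sum of decrements along any root-to-leaf path is at most $p+k$. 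Hence the tree has at most $(|\varphi|+p+k)^{p+k}$ leaves, and each node performs a model-checking query in $n^{\Oh(|\varphi|)}$ time (Observation~\ref{obs:mod-check}) together with a depth-at-most-$(k-1)$ check in $(p+k)^{\Oh(k)}\cdot n^{\Oh(1)}$ time (Lemma~\ref{lem_repr-find}).

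Multiplying the contributions of the two families, the enumeration of $\mathbf{v}$, and the branching subroutine yields the claimed bound $2^{\Oh((p+k)(\log(p+k)+p))}\cdot n^{\Oh(|\varphi|)}$: the $p(p+k)$ exponent absorbs the brute-force fallback on small instances, while the $(p+k)\log(p+k)$ exponent absorbs the sizes of $\mathcal{F}$, $\mathcal{F}'$, and the search tree. The main technical delicacy I anticipate in a formal write-up is that step~5 may delete up to $p+k$ vertices at once, so one must argue that the tree's leaf count still obeys the branching-factor-times-depth bound; this follows since each leaf corresponds to a sequence of decisions whose total decrement of $h$ is at most $p+k$, and each decision is described by one of at most $\Oh(|\varphi|+p+k)$ options at its node.
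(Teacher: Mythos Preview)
Your proposal is correct and follows essentially the same approach as the paper's own proof: you account for the small-instance brute force, the two applications of Lemma~\ref{lem_derand} producing $\mathcal{F}$ and $\mathcal{F}'$, the enumeration of $\mathbf{v}$, and the branching analysis of $\textsc{FindX}$ with branching factor $\Oh(|\varphi|+p+k)$ and recursion depth at most $p+k$. Your final paragraph on step~5 is a nice touch of extra care---the paper simply asserts that the recursion depth is at most $p+k$ and concludes $(p+k)^{\Oh(p+k)}\cdot n^{\Oh(|\varphi|)}$, whereas you explicitly justify why multi-vertex deletions do not inflate the leaf count---but this does not constitute a different approach.
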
 
 
\begin{proof} 
If $|V(G)|\leq (3p+2k)(p+1)$,  the problem is solved by  brute force  in $2^{(3p+2k)(p+1)}\cdot  ((3p+2k)(p+1))^{\Oh(k+|\varphi|)}$ time. Assume that  $|V(G)|> (3p+2k)(p+1)$.
Then we construct $\mathcal{F}$ in $2^{\Oh((p+k)\log(p+k))}\cdot n\log n$ time. The size of $\mathcal{F}$ is at most $2^{\Oh((p+k)\log(p+k))}\cdot \log n$, and
for every $R\in\mathcal{F}$, we construct $\mathcal{F}'$ in $2^{\Oh(\min\{p,k\}\log(p+k))}\cdot n\log n$ time. Recall that the size of $\mathcal{F}'$ is at most $2^{\Oh(\min\{p,k\}\log(p+k))}\cdot \log n$. Then we consider at most $n^r$ $r$-tuples of vertices $\mathbf{v}$ that can be listed in $n^{\Oh(|\varphi|)}$ time.
Finally, for every $R\in\mathcal{F}$,  $Y\in \mathcal{F}'$, and every $\mathbf{v}$, we call $\textsc{FindX}(\emptyset,p+k)$. 

Thus, it remains to evaluate the running time of $\textsc{FindX}(\emptyset,p+k)$.
Notice that in each call, $|Z|\leq p+k$. Then we can verify in $(p+k)^{\Oh(k)}\cdot n^{\Oh(1)}$ time whether $\dpt(Z)\leq k-1$ using Lemma~\ref{lem_repr-find}. Also we can check whether $(F,\mathbf{v})\models\varphi[\mathbf{x}]$ in $n^{\Oh(|\varphi|)}$ by Observation~\ref{obs:mod-check}. Simultaneously, we find an $s$-tuple $\mathbf{u}$ of vertices of $F$ such that $(F,\mathbf{vu})\not\models\varphi[\mathbf{xy}]$ if this is not the case. 
In step~4, we perform at most $s$ recursive calls. In step~5, finding $H$ can be done in polynomial time. Notice that we have at most $|Z|\leq p+k$ recursive calls in this step. The depth if the recursion is upper bounded by $k+p$. This implies that the running time of $\textsc{FindX}(\emptyset,p+k)$ is $(p+k)^{\Oh(p+k)}\cdot n^{|\varphi|}$.

Summarizing, we obtain that the total running time is $2^{\Oh((p+k)(\log (p+k)+p))}\cdot n^{\Oh(|\varphi|)}$.
\end{proof}

\section{Lower bound for $\Pi_3$-formulas}\label{sec_hard}
In this section, we complement Theorem~\ref{thm_algorithm_informal}  by proving that there are formulas in $\Pi_3$ for which \probED{\star} is \classW{2}-hard. We state now Theorem~\ref{thm_W_hard_informal} formally. 

\medskip\noindent\textbf{Theorem~2.}
\emph{For every $\star\in\{\edphione,\edphitwo,\edphthree\}$, there is $\varphi\in\Pi_3$ such that \probED{\star} is \classW{2}-hard.}
\medskip

\begin{proof}
We show that the problems are \classW{2}-hard for the formula $\varphi$ expressing the property that for every vertex $u$ of a graph, there is a vertex $v$ at distance at most two from $u$ of degree at most one. Notice that the property that a vertex $v$ of a given graph $G$ has degree at most one can be written as follows: for every $z_1,z_2\in V(G)$, if $v$ is adjacent to $z_1$ and $z_2$, then $z_1=z_2$.
 Thus, we define the formula 
\begin{equation*}
\psi(v,z_1,z_2)=[((v\sim z_1)\wedge(v\sim z_2))\rightarrow (z_1=z_2)]
\end{equation*}
with three free variables and set 
\begin{align*}
\varphi=\forall x\exists y_1\exists y_2 \forall z_1\forall z_2~&[\psi(x,z_1,z_2)\vee((x\sim y_1)\wedge\psi(y_1,z_1,z_2))\\
&\vee((x\sim y_1)\wedge(y_1\sim y_2)\wedge\psi(y_2,z_1,z_2))].
\end{align*}  
Clearly, $\varphi\in\Pi_3$.

To show \classW{2}-hardness, we reduce from the \textsc{Set Cover} problem. The problem asks, given a universe $U$, a family $\mathcal{S}$ of subsets of $\mathcal{S}$, and a positive integer $k$, whether there is $\mathcal{S}'\subset \mathcal{S}$ of size at most $k$ that covers $U$, that is, for every $u\in U$, there is $S\in\mathcal{S}'$ such that $u\in S$.
It is well-known that \textsc{Set Cover} is \classW{2}-complete when parameterized by $k$~\cite{DowneyF13}. 

\begin{figure}[ht]
\begin{center}
\scalebox{0.7}{\input{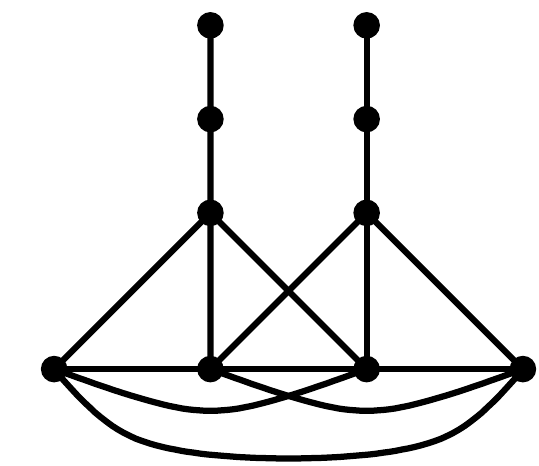_t}}
\end{center}
\caption{Construction of $G$ for $n=2$ and $m=2$ with $S_1=\{u_1,u_2,u_3\}$ and $S_2=\{u_2,u_3,u_4\}$; for simplicity, just one copy of each $u_i^{(p)}$ for $p\in\{0,\ldots,k\}$ is shown.}\label{fig:hard}
\end{figure}

Let $(U,\mathcal{S},k)$ be an instance of \textsc{Set Cover} with $U=\{u_1,\ldots,u_n\}$, $\mathcal{S}=\{S_1,\ldots,S_m\}$. 
We also assume that $n\geq 2$ and $k\leq m$.
We construct the following graph $G$ (see Figure~\ref{fig:hard}).
\begin{itemize}
\item For every $i\in\{1,\ldots,n\}$, construct $k+2$ vertices $u_i^{(1)},\ldots,u_i^{(k+2)}$, and then for every $i,j\in\{1,\ldots,n\}$ and all $p,q\in \{1,\ldots,k+2\}$ such that $(i,p)\neq(j,q)$, make $u_i^{(p)}$ and $u_j^{(q)}$ adjacent.
\item For every $j\in\{1,\dots,m\}$, construct three vertices $s_j,v_j,w_j$ and edges $s_jv_j$ and $v_jw_j$.
\item For every $i\in\{1,\ldots,n\}$ and every $j\in\{1,\ldots,m\}$, make $s_j$ adjacent to $u_i^{(1)},\ldots,u_i^{(k+2)}$ if $u_i\in S_j$. 
\end{itemize}
 
We claim that $G$ has a set cover of size at most $k$ if and only if $\edphi{\star}(G)\leq k$ for $\star\in\{\edphione,\edphitwo,\edphthree\}$. Notice that by the definition of $\varphi$, $H\models\varphi$ if and only if $C\models\varphi$ for every component $C$ of $H$. Therefore, $\edphi{\edphione}(G)=\edphi{\edphitwo}(G)=\edphi{\edphthree}(G)$, and it is sufficient 
to prove that $G$ has a set cover of size at most $k$ if and only if there is an elimination set $X$ of $G$ with $\dpt(G)\leq k-1$ such that $C\models\varphi$ for every component $C$ of $G-X$.

Suppose that sets $S_{j_1},\ldots,S_{j_k}\in\mathcal{S}$ form a set cover. We define $X=\{w_{j_1},\ldots,w_{j_k}\}$.  Since $|X|=k$, $\dpt(X)\leq k-1$. Notice that $H=G-X$ is connected. We claim that $H\models \varphi$. Recall that  $H\models \varphi$ if and only if for every vertex $x\in V(H)$ there is a vertex $y\in V(H)$  at distance at most two such that $d_H(y)\leq 1$.     
 This property trivially holds if $x\in \{s_j,v_j,w_j\}\setminus X$ for $j\in \{1,\ldots,m\}$. Consider a vertex $u_i^{(p)}$ for some $i\in\{1,\ldots,n\}$ and $p\in\{1,\ldots,k+2\}$. We have that there is $h\in\{1,\ldots,k\}$ such that $u_i\in S_{j_h}$. Then $u_is_{j_h}\in E(H)$. Because $w_{j_h}\in X$, we obtain that $d_H(v_{j_h})=1$. Since $s_{j_h}v_{j_h}\in E(H)$, $v_{j_h}$ is at distance two from $u_i^{(p)}$ as required. We conclude that $H\models \varphi$. 
 
 For the opposite direction, assume that  there is an elimination set $X$ of $G$ with $\dpt(X)\leq k-1$ such that $C\models\varphi$ for every component $C$ of $G-X$. 
 Consider $Z=\{u_i^{(p)}\mid 1\leq i\leq n,~1\leq p\leq k+2\}$. Because $Z$ is a clique, we have that $|Z\cap X|\leq k$. To see this, consider a representation  $(T,\alpha)$ of $X$ with $\dpt(T)\leq k-1$. Then there is a leaf $x$ of $T$ such that $\alpha^{-1}(X\cap Z)\subseteq A_T(x)$. Since $\dpt(T)\leq k-1$, we conclude that $|Z\cap X|\leq k$. Note that the vertices of  $Z\setminus X$ are in the same component $H$ of $G-X$.     
Let $W=\{w_1,\ldots,w_m\}$.  By Observation~\ref{obs:anchor}, $|N_G(V(H))\cap X|\leq k$. Hence, $|N_G(V(H))\cap (X\cap W)|\leq k$ as well. 
Let $\{w_{j_1},\ldots,w_{j_\ell}\}=N_G(V(H))\cap (X\cap W)$. We claim that the sets $S_{j_1},\ldots,S_{j_k}$ cover $U$.

Consider an arbitrary $i\in\{1,\ldots,n\}$. Because $|Z\cap W|\leq k$, there are two distinct $p,q\in\{1,\ldots,k+2\}$ such that $u_i^{(p)},u_i^{(q)}\in V(H)$.  Since $H\models\varphi$, there is a vertex $z\in V(H)$ at distance at most two from $u_i^{(p)}$ such that $d_H(z)\leq 1$. Since $n\geq 2$, we have that $|Z\setminus X|\geq 3$ and, therefore,  
$d_H(u_i^{(p)})\geq 2$. Moreover, for every $h\in \{1,\ldots,n\}$ and $r\in\{1,\ldots,k+2\}$, if $u_h^{(r)}\in V(H)$, then $d_H(u_h^{(r)})\geq 2$. Then the construction of $G$ implies that there is $h\in\{1,\ldots,m\}$ such that $s_h\in V(H)$ and $u_i^{(p)}s_h\in E(G)$ with the property that either $d_H(s_j)\leq 1$ or $s_j$ has a neighboor in $H$ of degree at most one. As $s_h$ is adjacent to $u_i^{(p)}$, this vertex is adjacent to $u_i^{(q)}$. Hence, $d_H(s_h)\geq 2$. We obtain that $v_h\in V(H)$ and $d_H(v_h)\leq 1$ by the construction of $G$. This means that $w_h\notin H$, that is, $w_h\in N_G(V(H))\cap (X\cap W)$. We conclude that there is $t\in\{1,\ldots,\ell\}$ such that $j_t=h$. Finally, because $u_i^{(p)}$ is adjacent to $s_{j_t}$, $u_i\in S_{j_t}$ and this concludes the proof. \end{proof}

\section{Discussion}\label{sec_concl}
We established a parameterized complexity dichotomy for the elimination problems whose aim is to satisfy a FOL formula $\varphi$ with respect to the quantification structure of the prefix. For this, we considered three variants of the elimination distance to the class of graphs modelling $\varphi$ and defined the \probED{\star} for $\star\in\{\edphione,\edphitwo,\edphthree\}$ corresponding to the considered type of  distance.
In Theorem~\ref{thm_algorithm_informal}, we proved that for every FOL formula $\varphi\in\Sigma_3$, \probED{\star} is \classFPT for $\star\in\{\edphione,\edphitwo,\edphthree\}$.
In Theorem~\ref{thm_W_hard_informal}, we showed that this result is tight in the sense that there are FOL formulas $\varphi\in\Pi_3$ such that these problems are \classW{2}-hard.

Notice that the above dichotomy is the same for all the considered variants of the elimination problems. Moreover, it coincides with the structural dichotomy obtained by  
for  \probDEL by Fomin, Golovach, and Thilikos in~\cite{FominGT20}. This leads to the following natural question: is there a FOL formula $\varphi$ such that the parameterized complexity of   \probED{\star} for $\star\in\{\edphione,\edphitwo,\edphthree\}$ and \probDEL differs? In particular, is there a formula $\varphi$ such that \probDEL is \classFPT but one of the problem  \probED{\star} for $\star\in\{\edphione,\edphitwo,\edphthree\}$ turns to be, say \classW{1} or \classW{2}-hard? Note that Lemma~\ref{lem_expr} holds for every FOL formula $\varphi$. Thus,  solving \probED{\star} for $\star\in\{\edphione,\edphitwo,\edphthree\}$ can be reduced to solving these problems on unbreakable graphs by Theorem~\ref{thm_meta}. Since  \probED{\star} for $\star\in\{\edphione,\edphitwo,\edphthree\}$ is somehow similar to \probDEL on unbreakable graphs, it may happen that \probED{\star} for $\star\in\{\edphione,\edphitwo,\edphthree\}$ are \classFPT whenever \probDEL is \classFPT. However proving this would demand applying different algorithmic tools as our techniques are tailored for $\varphi\in\Sigma_3$. Also it would be interesting to know whether there are FOL formulas such that  \probED{\star} for $\star\in\{\edphione,\edphitwo,\edphthree\}$ differ from the parameterized complexity viewpoint. 

In contrast with the same behaviour of the elimination and deletion problems with respect to the inclusion in \classFPT, we would like to point that they behave differently with respect to \emph{kernelization} (we refer to the books~\cite{CyganFKLMPPS15,FomnLSZ19} for the definition of the  notion). It was shown in~\cite{FominGT20} that \probDEL admits a polynomial kernel for $\varphi\in \Sigma_1\cup\Pi_1$ (in fact, \probDEL is polynomial for $\varphi\in\Sigma_1$) and there are formulas $\varphi\in \Pi_2$ and $\Sigma_2$ such that \probDEL has no polynomial kernel unless $\classNP\subseteq\classCoNP/{\sf poly}$.  
For the elimination problems, we can show the following lower bound.

\begin{proposition}\label{prop:nokern}
There are formulas $\varphi\in \Pi_1$ such that \probED{\star} do not admit polynomial kernels unless $\classNP\subseteq\classCoNP/{\sf poly}$ for $\star\in\{\edphione,\edphitwo,\edphthree\}$.
\end{proposition}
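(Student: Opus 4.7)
The plan is to prove Proposition~\ref{prop:nokern} by OR-cross-composition in the sense of Bodlaender, Jansen, and Kratsch, targeting \probED{\star}. I would fix $\varphi\in\Pi_1$ to be a simple forbidden-induced-subgraph formula, for instance the statement that $G$ is $H$-free for a carefully chosen fixed graph $H$ (such as the triangle $C_3$ or the claw $K_{1,3}$). The class $\property_{\varphi}$ is then hereditary and closed under disjoint union, so $G\models\varphi$ if and only if every component of $G$ does; as already noted in the proof of Theorem~2 for the analogous $\Pi_3$ construction, this implies $\edphi{\edphione}(G)=\edphi{\edphitwo}(G)=\edphi{\edphthree}(G)$ for all $G$, so it suffices to argue the lower bound uniformly for all three variants.

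First, I would pick as source problem the NP-hard version of \probDEL for the same $\varphi$---for example \textsc{Triangle Vertex Deletion} when $H=C_3$---which is NP-complete even for very restricted $\Pi_1$-formulas. Given $t$ instances $(G_1,k),\ldots,(G_t,k)$ of this source problem sharing the same parameter $k$, the goal is to construct in polynomial time a single graph $G$ together with a parameter $k'=\Oh(k+\log t)$ such that $\edphi{\star}(G)\le k'$ if and only if at least one of the $(G_i,k)$ is a yes-instance. Achieving $k'$ polynomial in $k$ and $\log t$ (while the input size is $\Omega(t)$) is precisely what OR-cross-composition demands to rule out polynomial kernels under $\classNP\not\subseteq\classCoNP/{\sf poly}$.

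Second, the heart of the construction is a \emph{selector gadget} of size $\Oh(\log t)$ attached to the $t$ input copies via compact interfaces, designed so that any elimination set of depth $k'$ is forced to behave as follows: (i)~it spends $\Oh(\log t)$ of its depth inside the gadget so as to pick out a single index $i^*$; (ii)~the copies $G_j$ with $j\neq i^*$ are already $H$-free ``for free'', because they are embedded via the gadget into structures that trivially avoid $H$ once a small interface is removed; and (iii)~the remaining $\Oh(k)$ units of depth along the active branch of the elimination tree must suffice to solve \probDEL on $G_{i^*}$ with parameter $k$. The hereditariness of $\varphi$ is indispensable here, since every component of $G-X$ must satisfy $\varphi$ on its own without further intervention from the elimination set.

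The main obstacle lies in engineering the selector gadget so that OR-behaviour (rather than the default AND-behaviour of disjoint-union constructions, which takes a maximum over components) emerges, and in proving the backward direction: extracting a yes-instance of the source problem from a hypothetical short elimination set for $G$. This requires a careful analysis of the elimination tree---leveraging niceness (Lemma~\ref{lem_nice}) together with the anchor structure (Observation~\ref{obs:anchor}) of Section~\ref{sec_elim}---to certify that exactly one root-to-leaf path of the tree carries a nontrivial depth budget into an input copy, thereby isolating the single $G_{i^*}$ that genuinely has to be solved. Given such a cross-composition, the standard Bodlaender--Jansen--Kratsch framework immediately yields the absence of a polynomial kernel for \probED{\star} unless $\classNP\subseteq\classCoNP/{\sf poly}$, for each $\star\in\{\edphione,\edphitwo,\edphthree\}$.
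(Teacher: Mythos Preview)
Your proposal contains a genuine misconception that leads you down an unnecessarily hard path. You explicitly note that disjoint union gives ``the default AND-behaviour'' and then set out to defeat this by engineering an OR-cross-composition with a $\Oh(\log t)$-size selector gadget. But AND-cross-composition is already sufficient to exclude polynomial kernels under $\classNP\not\subseteq\classCoNP/{\sf poly}$: this is exactly the framework of Bodlaender, Jansen, and Kratsch (combined with Drucker's resolution of the AND-distillation conjecture), and it is what the paper uses. No selector gadget is needed.

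Concretely, the paper takes $\varphi$ to be the triangle-freeness formula (your own candidate), observes that \probDEL for this $\varphi$ is \classNP-hard even on $(k+1)$-connected graphs (where, by Observation~\ref{obs:deletion}, it coincides with \probED{\star}), and then simply forms the disjoint union $G$ of instances $(G_1,k),\ldots,(G_t,k)$. For $\star\in\{\edphione,\edphitwo\}$ one has $\edphi{\star}(G)=\max_j \edphi{\star}(G_j)$, so $(G,k)$ is a yes-instance iff \emph{every} $(G_j,k)$ is---a textbook AND-composition. For $\star=\edphthree$ the depth formula for disconnected graphs (Lemma~\ref{lem_depth-disconn}) is slightly different, so the paper adds a single copy of $K_{k+3}$ and uses parameter $k+1$; this forces every other component to have depth at most $k$, again yielding AND-composition.

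Beyond the strategic detour, your proposal also has a real gap: the selector gadget is described only by its desiderata, not constructed. In particular, point~(ii)---that the non-selected copies $G_j$ become triangle-free ``for free'' after removing a small interface---is not at all obvious, since the $G_j$ themselves may be full of triangles that no small interface deletion can touch. Making OR-behaviour emerge here would require embedding each $G_j$ in a way that lets the gadget globally neutralize all its triangles, which is a substantial construction you have not supplied and which may well not exist for this $\varphi$.
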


\begin{proof}
We show the claim for the formula $\varphi$ expressing the property that a graph has no triangles, that is, cycles of length three: 
\begin{align*}
\varphi=\forall x\forall y\forall z~[(x=y)\vee(y=z)\vee(x=z)\vee\neg(x\sim y)\vee\neg(y\sim z)\vee\neg(x\sim z)].
\end{align*}
It is straightforward to see that $G\models\varphi$ if and only if $G$ has no triangles. 

By the classical results of Lewis and Yannakakis~\cite{lewis1980nodedeletion}, \probDEL is \classNP-complete. Then it  is easy to observe that the problem remains \classNP on instances $(G,k)$, where $G$ is a $(k+1)$-connected graph. For example, we can reduce from \probDEL on general graphs. Let $G$ be an $n$-vertex graph. We assume that $k<n-1$ as otherwise the problem is trivial. 
We construct the graph $G'$ from $G$ by adding $k+1$ copies of the complete bipartite graph $K_{n,n}$ and making each vertex of one part of the vertex partition to a unique vertex of $G$. Clearly, $G'$ is $(k+1)$-connected and it is easy to see that $G-X$ is triangle-free if and only if $G'$ has no triangles for every $X\subseteq V(G')$.
 This proves the \classNP-hardness for \probDEL on $(k+1)$-connected graphs.  Then Observation~\ref{obs:deletion} implies that \probED{\star} is \classNP-complete for every $\star\in\{\edphione,\edphitwo,\edphthree\}$.

Let $(G_1,k),\ldots (G_t,k)$ be instances of \probED{\star} for some $\star\in\{\edphione,\edphitwo,\edphthree\}$. 
Let $G$ be the disjoint union of $G_1,\ldots,G_t$. Then for $\star\in\{\edphione,\edphitwo\}$, we have that $(G,k)$ is a yes-instance of \probED{\star} if and only if $(G_j,k)$ is a yes-instance of \probED{\star} for every $j\in\{1,\ldots,t\}$. 
Then by the result of Bodlaender, Jansen, and Kratsch~\cite{BodlaenderJK14} (see also \cite[Part~III]{FomnLSZ19} for the introduction to the technique), \probED{\star} does not admit a polynomial kernel unless $\classNP\subseteq\classCoNP/{\sf poly}$. For \probED{\edphthree}, consider $G'$ that is the disjoint union of $G$ and $K_{k+3}$. Clearly, $\edphi{\edphthree}(K_{k+2})=k+1$. Then by Lemma~\ref{lem_depth-disconn}, $(G',k+1)$ is a yes-instance of \probED{\edphthree} if and only if  $(G_j,k)$ is a yes-instance of \probED{\edphthree} for every $j\in\{1,\ldots,t\}$. This implies that \probED{\edphthree} has no polynomial kernel unless $\classNP\subseteq\classCoNP/{\sf poly}$.
\end{proof}

Notice that Proposition~\ref{prop:nokern} does no exclude existence of \emph{Turing kernels} (we again refer to~\cite{CyganFKLMPPS15,FomnLSZ19} for the definition of the  notion).  This makes it natural to ask whether \probED{\star} admit polynomial Turing kernels for $\varphi\in\Sigma_3$ for $\star\in\{\edphione,\edphitwo,\edphthree\}$.

We defined the depth of a set $X\subseteq V(G)$ using a representation. However, there is an equivalent definition that uses the notion of \emph{tree-depth} (see, e.g.,~\cite{BulianD16} for the definition). Let $G$ be a graph and let $X\subseteq V(G)$. We define the \emph{torso} of $X$ as the graph $H$ obtained from $G[X]$ by making every two vertices $u,v\in X$ adjacent if three is a component $C$ of $G-X$ such that $u,v\in N_G(V(C))$. Then the following property can be shown by the definition of tree-depth.

\begin{observation}\label{obs:tree-depth}
For a set $X\subseteq V(G)$ and an integer $k$, $\dpt(X)\leq k-1$ if and only if the tree-depth of the torso of $X$ is at most $k$.
\end{observation}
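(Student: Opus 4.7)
The plan is to prove both implications by direct translation between a representation $(T,\alpha)$ of $X$ in $G$ and a rooted tree on $X$ whose closure contains the torso $H$ of $X$. The torso is defined precisely so that its edges record, as direct adjacencies, both $G$-adjacencies inside $X$ and the ``shortcuts'' through components of $G-X$; this makes the separator condition built into $\dpt$ line up with the ancestor--descendant structure of a tree-depth decomposition.

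For the $(\Rightarrow)$ direction I would take a representation $(T,\alpha)$ with $\dpt(T)\leq k-1$, relabel $T$ by $\alpha$ to obtain a rooted tree on $X$, and show that every edge of $H$ joins comparable vertices of $T$. Fix a torso edge $uv$; by construction, $G$ has a $(u,v)$-path whose internal vertices (if any) lie in $V(G)\setminus X$. If $u=\alpha(x)$, $v=\alpha(y)$ with $x,y$ incomparable and lowest common ancestor $z$, then $z\notin\{x,y\}$, so $x,y\notin A_T(z)$, and the internal vertices of the path are outside $X\supseteq\alpha(A_T(z))$. Thus the entire path avoids $\alpha(A_T(z))$, contradicting that this set is an $(\alpha(x),\alpha(y))$-separator. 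Hence $u,v$ are comparable in $T$, and $T$ witnesses that the tree-depth of $H$ is at most $\dpt(T)+1\leq k$.

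For the $(\Leftarrow)$ direction I would start from a rooted tree $T$ on $X$ of depth at most $k-1$ whose closure contains $H$, set $\alpha=\mathrm{id}$, and verify the representation condition. Consider incomparable $x,y\in X$ with lowest common ancestor $z$; then $x$ lies in the subtree rooted at one child of $z$ and $y$ in the subtree rooted at a different child. The key structural step is: any two vertices of $X$ lying in the same connected component of $G-A_T(z)$ must belong to the subtree rooted at a single child of $z$. Granting this, $x$ and $y$ lie in different components of $G-A_T(z)$, so $A_T(z)$ separates them in $G$. To prove the structural step, take a $(u,v)$-path in $G-A_T(z)$ with $u,v\in X$, extract its subsequence of $X$-vertices, and observe that every consecutive pair along the subsequence is either a direct edge of $G[X]$ or a pair in the common neighborhood of a single component of $G-X$; in either case it is an edge of $H$, hence a comparable pair in $T$. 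Since no vertex of the subsequence lies in $A_T(z)$, comparability forces all of them into the same child subtree of $z$.

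The main obstacle is the structural step in the $(\Leftarrow)$ direction, which formalizes the intuition that the lowest common ancestor of $x$ and $y$ cannot be bypassed in the closure of $T$; crucially, this uses that \emph{torso} edges, not merely $G$-edges, must be respected by $T$. A bookkeeping item worth pinning down is the tree-depth convention: here the tree-depth of $H$ is the minimum, over rooted trees $T$ on $V(H)$ whose closure contains $H$, of $\dpt(T)+1$ (so that a single-vertex graph has tree-depth $1$), which accounts for the off-by-one between $\dpt(X)\leq k-1$ and tree-depth at most $k$.
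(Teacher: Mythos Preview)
The paper does not actually prove this observation; it simply asserts that it ``can be shown by the definition of tree-depth.'' Your argument is precisely the natural unpacking of that assertion and is correct: in both directions you use that torso edges between vertices of $X$ are exactly the pairs joined by a $G$-path whose internal vertices avoid $X$, so the separator condition in the definition of a representation of $X$ lines up with the ancestor--descendant condition in a tree-depth decomposition of the torso. The structural step you isolate for the $(\Leftarrow)$ direction (that a chain of comparable pairs avoiding $A_T(z)$ cannot leave a single child subtree of $z$) is the right point to make explicit, and your sketch of it is sound. You are also right to pin down the convention: the equivalence requires the single-rooted-tree formulation of tree-depth (so that a one-vertex graph has tree-depth $1$); under the rooted-forest convention the $(\Leftarrow)$ direction can fail when the torso is disconnected.
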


Thus, \probED{\edphthree} can be stated as follows: given a graph $G$ and a nonnegative integer $k$, is there $X\subseteq V(G)$ whose torso has the tree-depth at most $k$ such that $G-X\models\varphi$? In other words, we ask whether there is a set of vertices whose torso has bounded tree-depth such that the graph obtained by the deletion of this set models our formula. Then we can consider the variants of \probED{\edphthree} for other ``width-measures''.  For example, what can be said about parameterized complexity of the variant of \probED{\edphthree}, where the \emph{tree-width} (see, e.g,~\cite{CyganFKLMPPS15} for the defintion) of the torso of $X$ should be at most $k-1$?

Finally, we believe that it could be interesting to consider yet another variant of the elimination distance. Recall that in the definitions of $\edphi{\star}$ for $\star\in\{\edphione,\edphitwo,\edphthree\}$, we considered properties of the components. In particular,
\begin{equation*}
\edphi{\edphitwo}(G)=
\begin{cases}
0,&\mbox{if }G\models\varphi,\\
1+\min_{v\in V(G)}\edphi{\edphitwo}(G-v), &\mbox{if\! }G\not\models\varphi\text{\! and\! }G\text{\! is\! connected},\\
\max\{1,\max\{\edphi{\edphitwo}(C)\mid C\text{ is a component of }G\}\},&\mbox{otherwise}.
\end{cases}
\end{equation*}
However, we can consider \emph{unions} of components instead. We say that graphs $G_1,\ldots,G_s$ form a \emph{component-partition} of $G$ if every component of $G$ is a component of $G_i$ for some $i\in\{1,\ldots,s\}$ and $G$ is the disjoint union of $G_1,\ldots,G_s$. Then, we can define
 \begin{equation*}
\edphi{\edphfour}(G)=
\begin{cases}
0,&\mbox{if\! }G\models\varphi,\\
1+\min_{v\in V(G)}\edphi{\edphfour}(G-v), &\mbox{if\! }G\not\models\varphi\text{\! and\! }G\text{\! is\! connected},\\
\min\{\max\{1,\edphi{\edphfour}(G_1),\ldots,\edphi{\edphfour}(G_s)\}\mid G_1,\ldots,G_s &\\
\hskip17mm\text {is a component partition of }G \} &\mbox{otherwise}.
\end{cases}
\end{equation*}
Then we can define the respective \probED{\edphfour} and investigate its parameterized complexity depending of $\varphi$. Note that our approach for solving the elimination problems fails in this case. In particular, we cannot express the problem using MSOL.


\begin{thebibliography}{10}

\bibitem{SridharanPASK21}
{\sc A.~Agrawal, L.~Kanesh, F.~Panolan, M.~S. Ramanujan, and S.~Saurabh}, {\em
  An {FPT} algorithm for elimination distance to bounded degree graphs}, in
  38th International Symposium on Theoretical Aspects of Computer Science,
  ({STACS}), vol.~187 of LIPIcs, Schloss Dagstuhl - Leibniz-Zentrum f{\"{u}}r
  Informatik, 2021, pp.~5:1--5:11.

\bibitem{agrawal_et_al:LIPIcs:2020:13304}
{\sc A.~Agrawal and M.~S. Ramanujan}, {\em On the parameterized complexity of
  clique elimination distance}, in 15th International Symposium on
  Parameterized and Exact Computation (IPEC), vol.~180 of Leibniz International
  Proceedings in Informatics (LIPIcs), Dagstuhl, Germany, 2020, Schloss
  Dagstuhl--Leibniz-Zentrum f{\"u}r Informatik, pp.~1:1--1:13.

\bibitem{BodlaenderJK14}
{\sc H.~L. Bodlaender, B.~M.~P. Jansen, and S.~Kratsch}, {\em Kernelization
  lower bounds by cross-composition}, {SIAM} J. Discrete Math., 28 (2014),
  pp.~277--305.

\bibitem{borger2001classical}
{\sc E.~B{\"o}rger, E.~Gr{\"a}del, and Y.~Gurevich}, {\em The classical
  decision problem}, Springer Science \& Business Media, 2001.

\bibitem{BulianD16}
{\sc J.~Bulian and A.~Dawar}, {\em Graph isomorphism parameterized by
  elimination distance to bounded degree}, Algorithmica, 75 (2016),
  pp.~363--382.

\bibitem{BulianD17}
\leavevmode\vrule height 2pt depth -1.6pt width 23pt, {\em Fixed-parameter
  tractable distances to sparse graph classes}, Algorithmica, 79 (2017),
  pp.~139--158.

\bibitem{CaiCC06}
{\sc L.~Cai, S.~M. Chan, and S.~O. Chan}, {\em Random separation: {A} new
  method for solving fixed-cardinality optimization problems}, in Proceedings
  of the 2nd International Workshop Parameterized and Exact Computation
  (IWPEC), vol.~4169 of Lecture Notes in Computer Science, Springer, 2006,
  pp.~239--250.

\bibitem{ChitnisCHPP16}
{\sc R.~Chitnis, M.~Cygan, M.~Hajiaghayi, M.~Pilipczuk, and M.~Pilipczuk}, {\em
  Designing {FPT} algorithms for cut problems using randomized contractions},
  {SIAM} J. Comput., 45 (2016), pp.~1171--1229.

\bibitem{CourcelleE12}
{\sc B.~Courcelle and J.~Engelfriet}, {\em Graph Structure and Monadic
  Second-Order Logic - {A} Language-Theoretic Approach}, vol.~138 of
  Encyclopedia of mathematics and its applications, Cambridge University Press,
  2012.

\bibitem{CyganFKLMPPS15}
{\sc M.~Cygan, F.~V. Fomin, L.~Kowalik, D.~Lokshtanov, D.~Marx, M.~Pilipczuk,
  M.~Pilipczuk, and S.~Saurabh}, {\em Parameterized Algorithms}, Springer,
  2015.

\bibitem{CyganLPPS19}
{\sc M.~Cygan, D.~Lokshtanov, M.~Pilipczuk, M.~Pilipczuk, and S.~Saurabh}, {\em
  Minimum bisection is fixed-parameter tractable}, {SIAM} J. Comput., 48
  (2019), pp.~417--450.

\bibitem{Diestel12}
{\sc R.~Diestel}, {\em Graph Theory, 4th Edition}, vol.~173 of Graduate texts
  in mathematics, Springer, 2012.

\bibitem{DowneyF13}
{\sc R.~G. Downey and M.~R. Fellows}, {\em Fundamentals of Parameterized
  Complexity}, Texts in Computer Science, Springer, 2013.

\bibitem{FlumG06}
{\sc J.~Flum and M.~Grohe}, {\em Parameterized Complexity Theory}, Texts in
  Theoretical Computer Science. An {EATCS} Series, Springer, 2006.

\bibitem{FominGT20}
{\sc F.~V. Fomin, P.~A. Golovach, and D.~M. Thilikos}, {\em On the
  parameterized complexity of graph modification to first-order logic
  properties}, Theory Comput. Syst., 64 (2020), pp.~251--271.

\bibitem{FomnLSZ19}
{\sc F.~V. Fomin, D.~Lokshtanov, S.~Saurabh, and M.~Zehavi}, {\em
  Kernelization}, Cambridge University Press, Cambridge, 2019.
\newblock Theory of parameterized preprocessing.

\bibitem{FrickG04}
{\sc M.~Frick and M.~Grohe}, {\em The complexity of first-order and monadic
  second-order logic revisited}, Ann. Pure Appl. Logic, 130 (2004), pp.~3--31.

\bibitem{GottlobKS04}
{\sc G.~Gottlob, P.~G. Kolaitis, and T.~Schwentick}, {\em Existential
  second-order logic over graphs: Charting the tractability frontier}, J.
  {ACM}, 51 (2004), pp.~312--362.

\bibitem{DBLP:journals/jacm/GottlobKS04}
\leavevmode\vrule height 2pt depth -1.6pt width 23pt, {\em Existential
  second-order logic over graphs: Charting the tractability frontier}, J.
  {ACM}, 51 (2004), pp.~312--362.

\bibitem{GuoHN04}
{\sc J.~Guo, F.~H{\"{u}}ffner, and R.~Niedermeier}, {\em A structural view on
  parameterizing problems: Distance from triviality}, in Proceedings of the 1st
  International Workshop Parameterized and Exact Computation (IWPEC), vol.~3162
  of Lecture Notes in Computer Science, Springer, 2004, pp.~162--173.

\bibitem{hols_et_al:LIPIcs:2020:11897}
{\sc E.-M.~C. Hols, S.~Kratsch, and A.~Pieterse}, {\em {Elimination Distances,
  Blocking Sets, and Kernels for Vertex Cover}}, in 37th International
  Symposium on Theoretical Aspects of Computer Science (STACS), vol.~154 of
  Leibniz International Proceedings in Informatics (LIPIcs), Dagstuhl, Germany,
  2020, Schloss Dagstuhl--Leibniz-Zentrum f{\"u}r Informatik, pp.~36:1--36:14.

\bibitem{lewis1980nodedeletion}
{\sc J.~M. Lewis and M.~Yannakakis}, {\em The node-deletion problem for
  hereditary properties is {NP}-complete}, J. Comput. Syst. Sci., 20 (1980),
  pp.~219--230.

\bibitem{DLindermayrSV20}
{\sc A.~Lindermayr, S.~Siebertz, and A.~Vigny}, {\em Elimination distance to
  bounded degree on planar graphs}, in Proceeding of the 45th International
  Symposium on Mathematical Foundations of Computer Science (MFCS), vol.~170 of
  LIPIcs, Schloss Dagstuhl - Leibniz-Zentrum f{\"{u}}r Informatik, 2020,
  pp.~65:1--65:12.

\bibitem{LokshtanovR0Z18}
{\sc D.~Lokshtanov, M.~S. Ramanujan, S.~Saurabh, and M.~Zehavi}, {\em Reducing
  {CMSO} model checking to highly connected graphs}, in 45th International
  Colloquium on Automata, Languages, and Programming (ICALP), vol.~107 of
  LIPIcs, Schloss Dagstuhl - Leibniz-Zentrum f{\"{u}}r Informatik, 2018,
  pp.~135:1--135:14.

\bibitem{LokshtanovR0Z18a}
\leavevmode\vrule height 2pt depth -1.6pt width 23pt, {\em Reducing {CMSO}
  model checking to highly connected graphs}, CoRR, abs/1802.01453 (2018).

\bibitem{NesetrilO0806b}
{\sc J.~Ne{\v{s}}et{\v{r}}il and P.~Ossona~de Mendez}, {\em Tree-depth,
  subgraph coloring and homomorphism bounds}, European J. Combin., 27 (2006),
  pp.~1022--1041.

\bibitem{Niedermeierbook06}
{\sc R.~Niedermeier}, {\em Invitation to fixed-parameter algorithms}, vol.~31
  of Oxford Lecture Series in Mathematics and its Applications, Oxford
  University Press, 2006.

\bibitem{Smorynski77}
{\sc C.~Smorynski}, {\em The incompleteness theorems}, in Handbook of
  mathematical logic, vol.~90 of Stud. Logic Found. Math., North-Holland,
  Amsterdam, 1977, pp.~821--865.

\bibitem{Vardi82}
{\sc M.~Y. Vardi}, {\em The complexity of relational query languages (extended
  abstract)}, in Proceedings of the 14th Annual {ACM} Symposium on Theory of
  Computing, May 5-7, 1982, San Francisco, California, {USA}, {ACM}, 1982,
  pp.~137--146.

\bibitem{Williams14}
{\sc R.~Williams}, {\em Faster decision of first-order graph properties}, in
  Joint Meeting of the Twenty-Third {EACSL} Annual Conference on Computer
  Science Logic {(CSL)} and the Twenty-Ninth Annual {ACM/IEEE} Symposium on
  Logic in Computer Science {(LICS)}, {ACM}, 2014, pp.~80:1--80:6.

\end{thebibliography}
\end{document}